\newcommand{\size}[1]{\left| #1 \right|}
\newcommand{\E}{\mathbb{E}}
\newcommand{\remove}[1]{}
\newcommand{\poly}{\text{poly}}
\newcommand{\N}{\mathbb{N}}
\newcommand{\cT}{\mathcal{T}}
\newcommand{\cE}{\mathcal{E}}
\newcommand{\cL}{\mathcal{L}}
\newcommand{\cA}{\mathcal{A}}
\newcommand{\cD}{\mathcal{D}}
\newcommand{\Oh}{\mathcal{O}}
\newcommand{\tOh}{\widetilde{\mathcal{O}}}
\newcommand{\tOm}{\widetilde{\Omega}}
\newcommand{\cH}{\mathcal{H}}
\newcommand{\pr}{\mathbb{P}}
\newcommand{\eps}{\epsilon}
\newcommand{\vareps}{\varepsilon}
\newcommand{\complain}[1]{\textcolor{red}{#1}}
\newcommand{\comments}[1]{\textcolor{blue}{\bf{#1}}}
\newcommand{\etal}{{\it{et al.}}}
\theoremstyle{plain}
\newtheorem{theo}{Theorem}[section]
\newtheorem{lem}[theo]{Lemma}
\newtheorem{pro}[theo]{Proposition}
\newtheorem{coro}[theo]{Corollary}
\newtheorem{cl}[theo]{Claim}
\theoremstyle{definition}
\newtheorem{defi}[theo]{Definition}
\newtheorem{rem}{Remark}
\newtheorem{obs}[theo]{Observation}
\newcommand{\ee}{{\sc EE}\xspace}
\newcommand{\eelong}{{\sc Edge Emptiness}\xspace}
\newcommand{\bis}{{\sc BIS}\xspace}
\newcommand{\bislong}{{\sc Bipartite Independent Set}\xspace}
\newcommand{\is}{{\sc IS}\xspace}
\newcommand{\islong}{{\sc Independent Set}\xspace}
\newcommand{\tis}{{\sc TIS}\xspace}
\newcommand{\tislong}{{\sc Tripartite Independent Set}\xspace}
\newcommand{\gpis}{{\sc CID}\xspace}
\newcommand{\gpislong}{{\sc Colorful Independence Oracle}\xspace}
\newcommand{\yesdist}{{\bf {\cal D}_{Yes}}}
\newcommand{\nodist}{{\bf {\cal D}_{No}}}
\newcommand{\alg}{\mbox{{\sc ALG}}}
\newcommand{\noleaf}{{\bf \cL_{No}}}
\newcommand{\staree}{$\mbox{\ee}^*$\xspace}
\newcommand{\staralg}{$\mbox{{\sc ALG}}^*$\xspace}
\newcommand{\alghigh}{\mbox{{\sc Triangle-Est-High}} \xspace}
\newcommand{\alglow}{\mbox{{\sc Triangle-Est-Low}}\xspace}
\newcommand{\test}{\mbox{{\sc Triangle-Est}}\xspace}
\newcommand{\type}{\mbox{{\sc type}-}}
\renewcommand{\eps}{\vareps}
\begin{document}
\title{{\bf On the Complexity of Triangle Counting using
Emptiness Queries}
}
{
\author{
Arijit Bishnu \footnote{Indian Statistical Institute, Kolkata, India}
\and 
Arijit Ghosh \footnotemark[1]
\and 
Gopinath Mishra\footnote{University of Wariwck, UK.  Research supported in part by the Centre for Discrete Mathematics and its Applications (DIMAP), by EPSRC award EP/V01305X/1.}
}
}

\date{}
\maketitle

\thispagestyle{empty} 
\begin{abstract}
\noindent
{Beame et al.\ [ITCS'18 \& TALG'20] introduced and used the {\sc Bipartite Independent Set} ({\sc BIS}) and {\sc Independent Set} ({\sc IS}) oracle access to an unknown, simple, unweighted and undirected graph and solved the edge estimation problem. The introduction of this oracle set forth a series of works in a short time that either solved open questions mentioned by Beame et al.\ or were generalizations of their work as in Dell and Lapinskas [STOC'18 and TOCT'21], Dell, Lapinskas, and Meeks [SODA'20 and SICOMP'22], Bhattacharya et al.\ [ISAAC'19 \& TOCS'21], and Chen et al.\ [SODA'20]. Edge estimation using {\sc BIS} can be done using polylogarithmic queries, while {\sc IS} queries need sub-linear but more than polylogarithmic queries. Chen et al.\ improved Beame et al.'s upper bound result for edge estimation using {\sc IS} and also showed an almost matching lower bound. Beame et al.\ in their introductory work asked a few open questions out of which one was on estimating structures of higher order than edges, like triangles and cliques, using {\sc BIS} queries.

In this work, we almost resolve the 
query complexity of estimating triangles using {\sc BIS} oracle. While doing so, we prove a lower bound for an even stronger query oracle called Edge Emptiness ({\sc EE}) oracle, recently introduced by Assadi, Chakrabarty, and Khanna [ESA'21] to test graph connectivity. 
}

\remove{
Beame et al.\ [ITCS 2018] introduced and used the {\sc Bipartite Independent Set} ({\sc BIS}) and {\sc Independent Set} ({\sc IS}) oracle access to an unknown, undirected and connected graph and solved the edge estimation problem. The introduction of this oracle set forth a series of works in a short span of time that either solved open questions mentioned by Beame et al.\ or were generalizations of their work as in Dell and Lapinskas [STOC 2018], Dell, Lapinskas and Meeks [SODA 2020], Bhattacharya et al.\ [ISAAC 2019 and arXiv 2019], Chen et al.\ [SODA 2020]. Edge estimation using {\sc BIS} can be done using polylogarithmic queries, but {\sc IS} queries need sub-linear but more than polylogarithmic queries. Chen et al.\ improved Beame et al.'s upper bound result for edge estimation using {\sc IS} and also showed an almost matching lower bound. This result was significant because this was the first lower bound result for {\sc IS}; till date no lower bound results exist for {\sc BIS}. On the other hand, Beame et al.\ in their introductory work asked a few open questions out of which one was if higher order structures like the number of cliques can be estimated using polylogarithmic number of {\sc BIS} queries. We resolve this open question in the negative by showing a lower bound for estimating the number of triangles using {\sc BIS}. While doing so, we prove the first lower bound result involving {\sc BIS}. We also provide a matching upper bound.   
}

\end{abstract}

\thispagestyle{empty}

\pagenumbering{arabic}
\newpage
\section{Introduction}\label{sec:intro}

\remove{
The starting point of this work is based on an open question of Beame et al.\ \cite{BeameHRRS18,DBLP:journals/talg/BeameHRRS20}, who introduced a new query oracle named \bislong (\bis) access to an undirected graph $G=(V, E)$ (henceforth, the graph will mean a undirected simple graph) to solve the problem of edge estimation using polylogarithmic queries. We resolve, using matching upper and lower bounds, the query complexity of triangle estimation. Our result implies that \bis access cannot estimate the number of triangles, the next higher order structure to edge in a graph, using polylogarithmic queries.
}


\remove{
The problem we focus on here is estimating graph parameters where the graph can be accessed only through an oracle that answers particluar types of queries. The most ubiquitous queries are the \emph{local queries}~\cite{GoldreichR02}. The local queries for a graph $G=(V(G),E(G))$ are: (i) {\sc degree} query: given $u \in V(G)$, the oracle reports the degree of $u$ in $V(G)$; (ii) {\sc neighbour} query: given $(u \in V(G))$, the oracle reports the $i$-th neighbor of $u$, if it exists; otherwise, the oracle reports {\sc null}; (iii) {\sc adjacency} query:  given $u, \, v \in V(G)$, the oracle reports whether $\{u,v\} \in E(G)$. Using some or all of the above queries, algorithms that need sub-linear (but not polylogarithmic) queries have been developed for estimating the number of edges~\cite{Feige06,GoldreichR08}, triangles~\cite{EdenLRS15}, cliques~\cite{EdenRS18}, stars and other small sub-graphs~\cite{GonenRS11}, cycles and trees~\cite{CzumajGRSSS14} in an unknown graph. 
Sampling an edge almost uniformly from an unknown graph has been solved in the local query model~\cite{EdenR18}. This has been extended to sample in sublinear-time an arbitrary subgraph exactly uniformly from a graph where the graph can be accessed via the local queries and an \emph{edge sampling query}~\cite{FichtenbergerG020}. Notice that an edge sampling query samples an edge uniformly at random from the universe of edges and as the probability space is overall edges, the query can hardly be classified as a local query. Thus, sampling arbitrary sub-graphs uniformly at random using only local queries remain open. 
}
\noindent
Motivated by connections to group testing, to emptiness versus counting questions in computational geometry, and to the
complexity of decision versus counting problems, Beame et al.\ introduced the \bislong (shortened as \bis) and \islong (shortened as \is) oracles as a counterpoint to the local queries~\cite{GoldreichR02,Feige06,GoldreichR08}. The \bis query oracle can be seen in the lineage of the query oracles~\cite{Stockmeyer83,Stockmeyer85,RonT16} that go beyond the local queries. The local queries for a graph $G=(V(G),E(G))$ are: (i) {\sc degree} query: given $u \in V(G)$, the oracle reports the degree of $u$ in $V(G)$; (ii) {\sc neighbour} query: given $(u \in V(G))$, the oracle reports the $i$-th neighbor of $u$, if it exists; otherwise, the oracle reports {\sc null}; (iii) {\sc adjacency} query:  given $u, \, v \in V(G)$, the oracle reports whether $\{u,v\} \in E(G)$. There is another related query oracle {\sc RandomEdge} query: the oracle reports an edge uniformly at random when we query.\footnote{ Note that that, in {\sc RandomEdge} query, the probability space is the set of all edges. So, it is not actually a local query.} 

Let us start by looking into the formal definitions of \bis and \is.

\begin{defi}(\bislong)
 Given disjoint subsets $U, U' \subseteq V(G)$, a \bis query answers whether there exists an edge between $U$ and $U'$ in $G$. 
\end{defi}

\begin{defi}(\islong)
 Given a subset $U \subseteq V(G)$, a \is query answers whether there exists an edge between vertices of $U$ in $G$. 
\end{defi}

The introduction of this new type of oracle access to a graph spawned a series of works that either solved open questions~\cite{DellL18-stoc18,DellLM20-soda} mentioned in Beame et al.\ or were generalizations~\cite{DellLM20-soda,Bhatta-abs-1808-00691,abs-1908-04196}. Beame et al.\ used \bis and \is queries to estimate the number of edges in a graph~\cite{BeameHRRS18,DBLP:journals/talg/BeameHRRS20}. One of their striking observations was that \bis queries were more effective than \is queries for estimating edges. This observation also fits in with the fact that \is queries can be simulated in a randomized fashion using polylogarithmic \bis queries~\footnote{Let us consider an \is query with input $U$. Let us partition $U$ into two parts $X$ and $Y$ by putting each vertex in $U$ to $X$ or $Y$ independently and uniformly at random. Then we make a \bis query with inputs $X$ and $Y$, and report $U$ is an independent set if and only if \bis reports that there is no edge with one endpoint in each of $X$ and $Y$. Observe that we will be correct with at least probability $1/2$. We can boost up the probability by repeating the above procedure suitable number of times.}. 
Edge estimation using \bis was also solved in~\cite{DellL18-stoc18} albeit in a higher query complexity than~\cite{BeameHRRS18}. There were later generalizations of the \bis oracle to estimate higher order structures like triangles and hyperedges~\cite{DellLM20-soda,BhattaISAAC,abs-1908-04196}. On the \is front, Beame et al.'s result for edge estimation using \is oracle was improved in~\cite{CLW-soda-2020} with an almost matching lower bound and thereby resolving the query complexity of estimating edges using \is oracle. One can observe the interest generated in these (bipartite) independent set based oracles in a short span of time. The results are summarized in Table~\ref{table:results-bis-is}: a cursory glance would tell us that commensurate higher order queries were needed for estimating higher order structures (\tislong (shortened as \tis) for counting triangles, \gpislong (shortened as \gpis) for counting hyperedges) if polylogarithmic number of queries is the benchmark. We provide the definitions of \tis and \gpis below. 

\begin{defi}[\tislong ({\sc TIS})~\cite{BhattaISAAC}] Given three disjoint subsets $A,B,C$ of the vertex set $V$ of a graph $G(V,E)$, the {\sc TIS} oracle reports whether there exists a triangle having endpoints in $A, \, B$ and $C$.
\end{defi}

\begin{defi}[\gpislong ({\sc CID})~\cite{BishnuGKM018,DellLM20-soda}] Given $d$ pairwise disjoint subsets of vertices $A_1,\ldots,A_d \subseteq U(\cH)$ of a hypergraph $\cH$ ($U(\cH)$ is the vertex set of the hypergraph $\cH$) as input, \gpis{} query oracle answers whether $m(A_1,\ldots,A_d) \neq 0$, where $m(A_1,\ldots,A_d)$ denotes the number of hyperedges in $\cH$ having exactly one vertex in each $A_i$, $\forall i \in \{1,2,\ldots,d\}$. \end{defi}

Notice the use of number of disjoint subsets in the definition of {\sc TIS} and {\sc CID}. That is why, we call {\sc TIS} and {\sc CID} as higher order query oracles than {\sc BIS} and {\sc IS}.

\begin{table}[thb]
\begin{center}
\begin{tabular}{||c || c | c | c | c||} 
\hline \hline

 \multirow{2}{*} {Work} &  \multirow{2}{*}{ Oracle used} & Structure  &       {Upper bound} & Any other  \\
\cline{4-4}
     &             & estimated  &       {Lower bound} & problem solved? \\
     \hline \hline 
                   \cline{1-5}
                   
 \multirow{2}{*}{~\cite{GoldreichR08}} & \multirow{2}{*}{{\sc Local}} & \multirow{2}{*} {edge} & $\tOh\left(\frac{n}{\sqrt{m}}\right)$ & {Approximating}\\
 \cline{4-4}
                                  &      &      & $\Omega\left(\frac{n}{\sqrt{m}}\right)$ & {average distance.} \\
 
 \cline{1-5}
 
 \multirow{2}{*}{~\cite{CLW-soda-2020}} & \multirow{2}{*}{\is} & \multirow{2}{*} {edge} & $\tOh\left(\min \left\{\sqrt{m}, n/\sqrt{m}\right\} \right)$ & -- \\
 \cline{4-4}
                                  &      &      & $\tOm\left(\min \left\{\sqrt{m}, n/\sqrt{m}\right\}\right)$ & -- \\

 \cline{1-5}

 \multirow{2}{*}{~\cite{BeameHRRS18}} & \multirow{2}{*}{\bis} & \multirow{2}{*} {edge} & $ \poly (\log n, 1/\vareps)=\tOh(1)$ & {Edge estimation} \\
 \cline{4-4}
                                  &      &      & -- & {using \is queries.} \\
 
 \cline{1-5}

 \multirow{2}{*}{~\cite{BhattaISAAC}} & \multirow{2}{*}{\tis} & \multirow{2}{*} {triangle} & $\poly (\log n, \Delta, 1/\vareps)^{\dag}$ & -- \\
 \cline{4-4}
                                  &      &      & -- & -- \\
 \cline{1-5}
 {{~\cite{DellLM20-soda},}} & \multirow{2}{*}{\gpis} & \multirow{2}{*} {hyperedge} & $\poly (\log n, 1/\vareps)=\tOh(1)$ &  ~\cite{DellLM20-soda} resolved \\
 \cline{4-4}
                  ~\cite{abs-1908-04196}  $^\ddag$               &      &      & -- &  Q2  in positive. \\
 
 \cline{1-5}
        
 \multirow{2}{*}{~\cite{EdenLRS15}} & \multirow{2}{*}{{\sc Local}} & \multirow{2}{*} {triangle} & $\tOh\left(\frac{n}{T^{1/3}}+\min\left\{m,\frac{m^{3/2}}{T}\right\}\right)$ & -- \\
 \cline{4-4}
                                  &      &      & $\Omega\left(\frac{n}{T^{1/3}}+\min\left\{m,\frac{m^{3/2}}{T}\right\}\right)$ & -- \\     
  \cline{1-5}          
  \multirow{2}{*}{~\cite{DBLP:conf/innovations/AssadiKK19}} & {{\sc Local}+} & \multirow{2}{*} {triangle} & $\tOh\left(\min\left\{m,\frac{m^{3/2}}{T}\right\}\right)$ & Estimated number \\
 \cline{4-4}
                                  &  {\sc  Random Edge}    &      & $\Omega\left(\min\left\{m,\frac{m^{3/2}}{T}\right\}\right)$ & of arbitrary subgraphs. \\   
                                  \hline \hline  
  \cline{1-5}
  
   \multirow{2}{*}{This work} & \multirow{2}{*}{\bis} & \multirow{2}{*} {triangle} & $\tOh\left(\min\left\{\frac{m}{\sqrt{T}},\frac{m^{3/2}}{T}\right\}\right)$ & -- \\
 \cline{4-4}
                                  &      &      & $\tOm\left(\min\left\{{\sqrt{T}},\frac{m^{3/2}}{T}\right\}\right)$ & -- \\     
  \cline{1-5}      
\hline \hline 
\end{tabular}
\end{center}
\caption{The whole gamut of results involving {\sc Local} queries~\cite{G2017}, \bis, \is and its generalizations. $^\dag$ $\Delta$ is the maximum number of triangles on an edge. $^{\ddag}$ Both these results estimate the number of hyperedges in a $d$-uniform hypergraph, where $d$ is treated as a constant. Here $n, \, m$ and $T$ denote the number of vertices, edges and triangles in  a graph $G$, respectively. $\tOh(\cdot)$ and $\tOm(\cdot)$ hide a multiplicative factor of $\poly(\log n, 1/\eps)$ and $1/\poly(\log n, 1/\eps)$, respectively.}
\label{table:results-bis-is}
\end{table} 

\normalsize
 
\subsection{The open questions suggested by Beame et al.~\cite{BeameHRRS18,DBLP:journals/talg/BeameHRRS20}}

For a work that has spawned many interesting results in such a short span of time, let us focus on the open problems and future research directions mentioned in~\cite{BeameHRRS18,DBLP:journals/talg/BeameHRRS20}.
\begin{description}
 \item[Q1] 
    Can we estimate the number of cliques using polylogarithmic number of \bis queries? 
 
 \item[Q2] 
    Can polylogarithmic number of \bis queries sample an edge uniformly at random? 
    
 \item[Q3] \remove{\comments{Any graph parameter problem many benefit from use of \bis and \is possibly with when used in combination with local queries for graph parameter estimation problems.  Beame \etal showed that {\sc Edeg Estimation} is an example. Here the main question is there any other graph parameter estimation problem other than of {\sc Edge Estimation} problem?.}}
 Can \bis or \is queries possibly be used in combination with local queries for graph parameter estimation problems? 
 \item[Q4] What other oracles, besides subset queries, allow estimating graph parameters with a polylogarithmic number of
queries? 
\end{description} 
 
\medskip 

\paragraph*{Answers to the above questions and a discussion}

Only Q2 has been resolved till now in the positive~\cite{DellLM20-soda} as can be observed from Table~\ref{table:results-bis-is}. At its core, Q1 asks if a query oracle can step up, that is, if it can estimate a structure that is of a higher order than what the oracle was designed for. The framing of Q1 seems that Beame et al.\ expected a polylogarithmic query complexity for estimation of the number of cliques using \bis. Pertinent to these questions, we also want to bring to focus a work~\cite{abs-2006-14015} where the authors mention that it seems to them that estimation of higher order structures will require higher order queries (see the discussion after Proposition 23 of~\cite{abs-2006-14015}). They showed that $\Omega(n^2/\log n)$ \bis queries are required to separate \emph{triangle free} graph instances from graph instances having at least one triangle. This lower bound follows directly from the communication complexity of triangle freeness testing~\cite{DBLP:conf/soda/Bar-YossefKS02}. However, the full complexity of triangle estimation using emptiness queries like \bis remains elusive.  It seems to us that the observations in ~\cite{BeameHRRS18,DBLP:journals/talg/BeameHRRS20} and ~\cite{abs-2006-14015} about the power of \bis in estimating higher order structures stand in contrast. In this backdrop, we have almost resolved
the lower bound for estimating triangles using \bis queries, and our upper bound result show that 
even if \bis can not estimate triangles using polylogarithmic queries but it is still more powerful 
than {\sc Local + Random Edge} queries on graph for estimating triangles (see Table~\ref{table:results-bis-is}). 
 \bis has an inherent asymmetry in its structure in the following sense -- when \bis says that there exists no edge between two disjoint sets, then \bis stands as a witness to the existence of two sets of vertices having no interdependence, while a yes answer implies that there can be any number of edges, varying from one to the product of the cardinality of the two sets, going across the two sets. We feel that this property of \bis gives it its power, but on the other hand, also makes it difficult to prove lower bounds. That is probably the reason why works related to upper bound for \bis and its generalizations exist, whereas works on lower bound were not forthcoming. Though not on \bis, the work of Chen et al.\ using \is queries gave an interesting lower bound construction for \is oracle. Our work goes one step further in being able to prove a lower bound for the \bis oracle which is much stronger than \is oracle. We have almost resolved the open question Q1 by proving almost matching lower and upper bounds involving \bis for estimating triangles.



\remove{Beame et al.\ had asked if polylogarithmic \bis queries would suffice to estimate the number of cliques in a graph.}

\subsection{A stronger oracle than \bis, our main result and its consequences}
\noindent
Now we define \eelong (shortened as \ee) query oracle which is stronger than both \bis and \is. The \eelong query is a form of a \emph{subset query}~\cite{Stockmeyer83,Stockmeyer85,RonT16} where a subset query with a subset $P \subseteq U$ asks whether $P \cap T$ is empty or not, where $T$ is also a subset of the universe $U$. The \eelong query operates with $U$ being the set of all vertex pairs in $G$, $T$ being the set of edges $E$ in $G$, and $P$ being a subset of pairs of vertices of $V$. 
\begin{defi}(\eelong) \label{defi:ee}
 Given a subset $P \subseteq {V(G) \choose 2}$, a \ee query answers whether there exists an $\{u,v\}\in P$ such that $\{u,v\}$ is an edge in $G$. 
\end{defi}
\ee query is recently introduce by Assadi~\etal~\cite{DBLP:conf/esa/AssadiCK21}~\footnote{Assadi~\etal~\cite{DBLP:conf/esa/AssadiCK21} named \ee query as {\sc OR} query in their paper.}. Note that \bis and \is queries can be simulated by an \ee query \footnote{Let us consider a \bis query with inpus $A$ and $B$. Let $P$ be the set of vertex pairs with one vertex from each of $A$ and $B$. We call \ee oracle with input $P$, and  report there is an edge having one vertex in each of $A$ and $B$ if and only if the \ee oracle reports that there exists an $\{u,v\} \in P$ that forms an edge in $G$.  Similarly, we can simulate an \is query with input $U$ by using an \ee query with input $P={U \choose 2}$.}. We prove our lower bound in terms of the {\em stronger}\footnote{For an example to show how \ee query can be powerful than that of \bis or \is, $\Oh(\log \log n)$ \ee queries~\cite{Stockmeyer83} are enough to estimate the number of edges in a graph, as opposed to high query complexity (compared to $\Oh \left(\log \log n \right)$) when we have \bis or \is queries (See Table~\ref{table:results-bis-is}).} 
\ee queries that will directly imply the lower bound in terms of \bis. But we prove matching upper bound in terms of \bis. Our main results are stated below in an informal setting. The formal statements are given in Theorems~\ref{theo:tri-lb} and \ref{theo:ub}.
\begin{theo}[{\bf Main lower bound (informal statement)}]\label{theo:bis-lb}
Let $m, \, n, \, t \in \N$. Any (randomized) algorithm that has \ee query access to a graph $G(V,E)$ with $n$ vertices and $\Theta(m)$ edges, requires $\widetilde{\Omega}\left(\min\left\{{\sqrt{t}}, \frac{m^{3/2}}{t}\right\}\right)$ \ee queries to  decide whether the number of triangles in $G$ is at most $t$ or at least $2t$.
\end{theo}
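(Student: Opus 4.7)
The plan is to invoke Yao's minimax principle: construct two distributions $\cD_{Yes}$ and $\cD_{No}$ on $n$-vertex graphs, each supported on graphs with $\Theta(m)$ edges, such that every graph in $\cD_{Yes}$ has at least $2t$ triangles, every graph in $\cD_{No}$ has at most $t$ triangles, and no deterministic algorithm issuing fewer than $q^{\star} = \widetilde{\Omega}\bigl(\min\{m/\sqrt{t},\, m^{3/2}/t\}\bigr)$ \ee queries can distinguish the two distributions with constant advantage. Because a \bis query on disjoint $U,U'$ is exactly the \ee query on the rectangle $\{\{u,u'\} : u \in U,\, u' \in U'\}$, a lower bound against \ee immediately yields the claimed lower bound against \bis; working with the stronger oracle is the whole point of the theorem.

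Since the target bound is a minimum of two terms, we split into two regimes. In the regime $t \ge m$ (binding term $m^{3/2}/t$), the plant is a single dense block on $s = \Theta(t^{1/3})$ vertices whose $\binom{s}{2} = \Theta(t^{2/3})$ internal edges produce $\binom{s}{3} = \Theta(t)$ triangles, while the remaining $m - \Theta(t^{2/3})$ edges form a triangle-free bipartite background identically distributed under $\cD_{Yes}$ and $\cD_{No}$. In the regime $t \le m$ (binding term $m/\sqrt{t}$), the plant is instead a disjoint collection of many smaller dense gadgets, sized and multiplied so that the triangle count is still $\Theta(t)$ while the edge footprint is only a small fraction of $m$ and the gadgets are spread widely over the vertex set to frustrate concentrated \ee queries. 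In both regimes $\cD_{No}$ is obtained from $\cD_{Yes}$ by replacing the triangle-producing plant with a triangle-free perturbation having the same per-pair edge marginals, so the two distributions agree on every pair outside the plant.

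The heart of the argument is a per-query indistinguishability bound: for every fixed $P \subseteq \binom{V}{2}$,
\[
\Bigl|\prob{P \cap E(G) \neq \emptyset \mid G \sim \cD_{Yes}} - \prob{P \cap E(G) \neq \emptyset \mid G \sim \cD_{No}}\Bigr| = o(1/q^{\star}).
\]
A standard hybrid/chain-rule argument along the (possibly adaptive) transcript of $q^{\star}$ queries converts this into an $o(1)$ bound on the total variation between transcript distributions, from which Yao's lemma yields the theorem. The per-query bound is proved by case-splitting on the effective density of $P$: if $|P| \cdot m/\binom{n}{2}$ is far below or far above $1$, both distributions answer with near-certainty and the query is essentially useless; in the intermediate band we exploit that the plant occupies only a vanishing fraction of $\binom{V}{2}$ and that, outside the plant, $\cD_{Yes}$ and $\cD_{No}$ agree pair by pair, so the contribution of plant-pairs is controlled by a union bound over their small total count.

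The principal obstacle is the flexibility of \ee: the query subset $P$ can be an arbitrary family of pairs, so an algorithm could in principle target exactly those pairs most likely to sit inside the plant, whereas a \bis query is constrained to be a full bipartite product. To cope with this we randomise the plant's location over an exponentially large orbit of admissible placements and show that, even after $q^{\star}$ adaptive \ee answers, the algorithm's posterior over plant-locations stays statistically close to uniform. Tightening this posterior-control argument against adaptive \ee queries, especially in the $t \le m$ regime where many correlated gadgets must be hidden simultaneously, is where the bulk of the technical work will lie.
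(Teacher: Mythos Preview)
Your Yao framework and two-regime split are on the right track, but the hard distribution you sketch does not withstand \ee queries, and the paper's construction differs in an essential way.

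The gap is in the large-$t$ regime. Your case analysis asserts that when $|P|\cdot m/\binom{n}{2}\gg 1$ both distributions answer ``nonempty'' with near-certainty, but this relies on the background edges being spread over all of $\binom{V}{2}$. A bipartite background has zero edges inside $\binom{L}{2}\cup\binom{R}{2}$, and an \ee query (unlike a \bis query) is free to sit entirely in that region; its answer is then determined solely by the plant. A clique on $s=\Theta(t^{1/3})$ vertices and any triangle-free replacement on the same vertex set cannot agree on such probes --- indeed, ``same per-pair marginals'' for a clique versus a triangle-free graph on the same vertex set is already impossible, since by Tur\'an the latter has at most half as many edges. Randomizing the plant's location does not rescue the argument once the algorithm can cheaply learn (or adaptively peel off) the background-free region.

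The paper's construction is engineered precisely so that no such region exists. On $\Theta(\sqrt m)$ vertices, $V$ is split \emph{uniformly at random} into $A,B,C,D$; $A$--$B$ and $C$--$D$ are bicliques; random edges of density $\Theta(\sqrt{t/m^{3/2}})$ are placed between $A\cup B$ and $C$; and the triangle-rich distribution additionally selects a random $C'\subseteq C$ of size $\Theta(t/m)$ and joins it fully to $A\cup B$, creating $\Theta(t)$ extra $(A,B,C')$-triangles. Because the partition itself is random, \emph{any} pair whose labels have not yet been revealed is an edge with probability at least $1/4$; hence any \ee query with more than $\tau=\Theta(\log^2 n)$ pairs returns ``nonempty'' with probability $1-o(1)$ under both distributions and is wasted. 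A query with at most $\tau$ pairs touches $C'$ with probability only $\tOh(t/m^{3/2})$, which gives the bound.

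Two further differences. The paper does not sum a per-query total-variation bound along the transcript; it defines an augmented oracle \staree that after each small query also reveals the partition labels of every vertex touched, and proves that even this stronger oracle labels some $C'$-vertex within $o(m^{3/2}/(t\log^2 n))$ queries with probability only $o(1)$. And for the small-$t$ regime the paper does not build a separate many-gadget instance; it reduces from the threshold case $t=\Theta(m\log n)$ by taking the disjoint union of a hard instance $G$ with a triangle-free graph on $\Theta(t)$ extra edges, so that the already-proved $\tOm(m^{3/2}/t)$ bound on $G$ becomes $\tOm(|E|/\sqrt t)$ on the combined graph.
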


\begin{theo}[{\bf An upper bound (informal statement)}]\label{theo:bis-ub}
There exists an algorithm, that has \bis query access to a graph $G(V,E)$,  finds a $(1 \pm \eps)$-approximation to the number of triangles in $G$ with high probabilility, and makes  $\tOh\left(\min\left\{\frac{m}{\sqrt{T}},\frac{m^{3/2}}{T}\right\}\right)$  \bis queries in expectaton.  Here $m, \, n, \, T$ denote the number of vertices, edges and triangles in $G$.
\end{theo}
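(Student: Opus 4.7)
The plan is to reduce triangle estimation to a sample-and-count procedure built from a short list of \bis{}-based primitives: (i) approximate uniform edge sampling, which by the resolution of Q2 in~\cite{DellLM20-soda} can be realised using $\poly(\log n)$ \bis{} queries per sample; (ii) an edge-existence test for a specified pair of vertices, implementable by a single \bis{} query on singletons; and (iii) a common-neighbour search that, given vertices $u,v$, peels off common neighbours of $u$ and $v$ via a binary-partition \bis{} procedure at $O(\log n)$ queries per located neighbour. Since the target complexity depends on the unknown $T$, I first compute a constant-factor estimate of $m$ with $\tOh(1)$ \bis{} queries via the edge-estimator of~\cite{BeameHRRS18,DBLP:journals/talg/BeameHRRS20}, and then run a geometric guess on $T$; as is standard, it suffices to design a decision procedure that distinguishes $T\le\tilde{T}$ from $T\ge 2\tilde{T}$ within the claimed budget, from which a $(1\pm\eps)$-approximation follows with a polylogarithmic overhead.

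The algorithm then splits into two regimes matching the crossover of $\min\{m/\sqrt{T},m^{3/2}/T\}$. When $\tilde T\le\tilde m$ I aim for $\tOh(m/\sqrt{T})$ queries by sampling $k=\tOh(m/\sqrt{T})$ edges uniformly via primitive (i), and, for each sampled edge $e=\{u,v\}$, estimating $t(e)$ (the number of triangles through $e$) using primitive (iii); the estimator is $\hat T=(m/(3k))\sum_i t(e_i)$. When $\tilde T\ge\tilde m$ I aim for $\tOh(m^{3/2}/T)$ queries by following the wedge-sampling template of~\cite{EdenLRS15,DBLP:conf/innovations/AssadiKK19}: sample an edge, then one of its endpoints, then a uniformly random neighbour of that endpoint (again by a \bis{} binary search over candidate neighbours), and close the wedge with a \bis{} edge-existence test on the remaining pair. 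In both regimes the estimator is Horvitz--Thompson style, and concentration is argued by Chebyshev after the heavy/light decomposition described next.

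The main technical obstacle I foresee is the variance analysis in the first regime: since a single edge can lie in as many as $\Theta(n)$ triangles, without pre-processing the variance of $t(e_i)$ under uniform edge sampling is too large to afford only $\tOh(m/\sqrt{T})$ samples. My plan is to threshold $t(e)$ at $\tau\approx\sqrt{T}$. \emph{Light} edges ($t(e)\le\tau$) satisfy $\E[t(e)^2\,\mathbb{1}(e\text{ light})]\le \tau\cdot 3T/m$, which combined with Chebyshev shows that $\tOh(m/\sqrt{T})$ samples suffice to $(1\pm\eps)$-estimate their contribution; the $\tau$ cap on primitive (iii) keeps the per-sample \bis{} cost polylogarithmic. \emph{Heavy} edges number at most $3T/\tau=O(\sqrt{T})$ since $\sum_e t(e)=3T$, and their contribution is added deterministically after they are located by a \bis{}-guided recursive partitioning of vertex pairs and each of their $t(e)$ values is computed via primitive (iii). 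The most delicate step will be showing that the heavy-edge enumeration itself fits within the $\tOh(m/\sqrt{T})$ budget rather than requiring a scan of all $m$ edges; this should follow from the fact that heavy edges are concentrated in dense subneighbourhoods, so that their pair-coordinates can be located by a \bis{}-driven dichotomy whose cost amortises to $\tOh(1)$ per heavy edge discovered.
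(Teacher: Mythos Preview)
Your high-$T$ regime (wedge sampling \`a la Assadi~\etal\ with \bis-simulated approximate degree, random-neighbour and random-edge primitives) is essentially the paper's \alghigh, so that part is fine.

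The low-$T$ regime, however, has a genuine gap in the heavy-edge step. First, ``\bis-guided recursive partitioning of vertex pairs'' cannot locate heavy edges as stated: a \bis\ query only reveals whether \emph{some} edge exists between two vertex sets, not whether some edge has large $t(e)$. Heaviness is a triangle-count property, and nothing in your primitive toolbox tests it on a set of candidate pairs; the dichotomy you describe would happily recurse on any edge, not just heavy ones, so its cost is governed by the total number of edges explored rather than by the $O(\sqrt{T})$ heavy ones. The claim that ``heavy edges are concentrated in dense subneighbourhoods'' is not true in general and, even when it is, you have given no \bis\ procedure that exploits it. Second, even if the heavy edges were handed to you for free, computing each $t(e)$ exactly via primitive~(iii) costs $\tOh(t(e))$ queries, so the total is $\tOh\!\left(\sum_{e\ \text{heavy}} t(e)\right)$, which can be $\Theta(T)$; for $m^{2/3}\ll T\le m$ this already exceeds the $\tOh(m/\sqrt{T})$ budget. (A related slip: the ``$\tau$ cap'' on primitive~(iii) does not make the per-sample cost polylogarithmic---it makes it $\tOh(\tau)=\tOh(\sqrt{T})$ in the worst case; only the \emph{expected} per-sample cost is $\tOh(1)$.)

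The paper sidesteps exactly this difficulty by taking a different route for the low-$T$ regime: instead of edge sampling plus heavy/light edge enumeration, \alglow\ implements the two-pass streaming algorithm of McGregor~\etal\ via \bis. It samples \emph{vertices} $S$ with probability $\tOh(1/\sqrt{L})$, enumerates all edges incident to $S$ and all edges inside $G[N_G(v)]$ for $v\in S$, and combines this with an approximately uniform edge sample $F$. The heavy/light split is then made relative to $S$ (an edge is heavy if $\size{\Gamma(e)\cap S}$ is large), so heaviness can be read off directly from the already-extracted edge set $E_S$ with no further search; the expected query cost is $\tOh((m+T)/\sqrt{L})$, which is $\tOh(m/\sqrt{T})$ when $T\le m$. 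If you want to rescue your approach, you would need either a \bis\ procedure that can detect ``there is an edge with many common neighbours'' inside a set of pairs, or to replace the deterministic heavy-edge enumeration by a vertex-based mechanism along the lines the paper uses.
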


Note that \eelong query is the \emph{strongest} subset query on edges of the graph. Informally speaking, our lower bound states that no subset query on edges can estimate the number of triangles in a graph by using polylogarithmic queries. However, the results of Bhattacharya \etal~\cite{abs-1908-04196} and Dell \etal~\cite{DellLM20-soda} imply that polylogarithmic \tis queries are enough to estimate the number of triangles in the graph. Note that \tis query is also a subset query on triangles in the graph. To complement our lower bound result, we also give an algorithm (see Theorem~\ref{theo:bis-ub}) for estimating the number of triangles in a graph with \bis queries that matches our lower bound.  Here we would also like to mention that the number of \bis queries our algorithm uses is less than that of the number of {local queries~\cite{G2017}} needed to estimate the number of triangles in a graph. {This implies that we are resolving Q3 in positive in the sense that \bis queries are efficient queries for triangle estimation vis-a-vis local queries~\cite{EdenLRS15} coupled with even random edge queries~\cite{DBLP:conf/innovations/AssadiKK19} (see Table~\ref{table:results-bis-is}).} 

\remove{
It has already been established by the works of Beame et al. and both the upper and lower bounds in Chen et al.~\cite{CLW-soda-2020} that \bis is more powerful than \is in estimating the number of edges. Our lower bound result signifies an interesting fact -- coupled with the lower bound result of Chen et al. it establishes a clear separation in the powers of \bis and \is oracles. \complain{(Gopi, elaborate on this.)}
\comments{I think we can drop this paragraph as we do not have an upper bound in terms of \bis.}
}

\remove{
Open questions include using a polylogarithmic number of BIS queries to estimate the number of cliques
in a graph (see~\cite{EdenRS18} for an algorithm using degree, neighbor, and edge existence queries) or to sample
a uniformly random edge (see~\cite{EdenR18} for an algorithm using degree, neighbor and edge existence queries).
In general, any graph estimation problems may benefit from BIS or IS queries, possibly in combination
with standard queries (such as neighbor queries). Finally, it would be interesting to know what other
oracles, besides subset queries, enable estimating graph parameters with a polylogarithmic number of
queries.
}



\subsection{Notations}
\noindent
Throughout the paper, the graphs are undirected and simple.  For a graph $G(V,E)$, $V(G)$ and $E(G)$ denote the set of vertices and edges, respectively; $\size{V(G)}=n$, $\size{E(G)}=m$ and the number of triangles is $T$, unless otherwise specified. We use ${V(G) \choose 2}$ to denote the set of vertex pairs in $G$. Note that $E(G) \subseteq {V(G) \choose 2}$. For $P \subseteq {V(G) \choose 2}$, $V(P)$ represents the set of vertices that belong to at least one pair in $P$.  The neighborhood of a vertex $v \in V(G)$ is denoted by $N_{G}(v)$, and $\size{N_G(v)}$ is called the degree of vertex $v$ in $G$.  $\Gamma(\{x,y\})$ denotes the set $N_G(x)\cap N_G(y)$, that is, the set of common neighbors of $x$ and $y$ in $G$. If $e=\{x,y\}\in E(G)$, $\Gamma(e)$ denotes the set of vertices that forms triangles with $e$ as one of their edges.
\remove{Throughout the paper, $n, \, m$ and $T$ denote the number of vertices, the number of edges and the number of triangles, unless otherwise specified.} The induced degree of a vertex $v$ in $Z \subseteq V(G) \setminus \{v\}$ is the cardinality of $N_G(v) \cap Z$. For $X \subseteq V(G)$, the subgraph of $G$ induced by $X$ is denoted by $G[X]$. Note that $E(G[X])=\{\{x,y\}: x, y \in X\}$. For two disjoint sets $A, \, B \subset V(G)$, the bipartite subgraph of $G$ induced by $A$ and $B$ is denoted by $G[A,B]$. Note that $E(G[A,B])$ is the set of edges having one vertex in $A$ and the other vertex in $B$.

Throughout the paper, $\eps \in (0,1)$ is the approximation parameter. When we say $a$ is a $(1 \pm \eps)$-approximation of $b$, then $(1-\eps)b \leq a \leq (1+\eps) b$. Polylogarithmic means $\poly \left(\log n , {1}/{\eps}\right)$. $\tOh(\cdot)$ and $\tOm(\cdot)$ hide a multiplicative factor of $\poly(\log n, 1/\eps)$ and $1/\poly(\log n, 1/\eps)$, respectively. We have avoided floor and ceiling for simplicity of presentation. The constants in this paper are not taken optimally. We have taken them to let the calculation work with clarity. However, those can be changed to other suitable and appropriate constants.


\subsection{Paper organization}
\noindent
We start with the technical overview of our lower and upper bounds in Section~\ref{ssec:lb-overview} and Section~\ref{ssec:ub-overview}, respectively. The detailed lower and upper bound proofs are in Section~\ref{sec:lb-clique} and Appendix~\ref{sec:ub-triangle}, respectively. The missing proofs of Section~\ref{sec:lb-clique} are presented in Appendix~\ref{sec:miss-proofs}. In Appendix~\ref{sec:prelim}, we state some useful probability results.


\section{Technical overview}
In this section, we discuss about the techniques and proof overview, The overview of the lower bound is discussed in Section~\ref{ssec:lb-overview} and that of the upper bound is discussed in Section~\ref{ssec:ub-overview}.

\subsection{Overview for the proof of our lower bound (Theorem~\ref{theo:bis-lb})}
\label{ssec:lb-overview}
\noindent 
Let us consider $m, \, n, \, t$ as in Theorem~\ref{theo:bis-lb}. We prove the desired bound for \bis (stated in Theorem~\ref{theo:bis-lb}) by proving the lower bound is $\tOm\left(\frac{m^{3/2}}{t}\right)$ when $t \geq km \log n$ and $\tOm\left({\sqrt{t}}\right)$ when $t< km \log n$ for \ee query access, where $k$ is a suitably chosen constant. In this Section, we discuss the overview of the proof when $t\geq k m \log n$. The desired lower bound when $t < km \log n$ can be proved by using a reduction from the case when  $t \geq km \log n$. {The main intuition behind the lower bound is to ``hide" a suitably generated vertex set such that  a large number of queries is necessary to detect such a vertex.}

\paragraph*{The idea for the lower bound of  $\tOm\left(\frac{m^{3/2}}{t}\right)$ when $t \geq km \log n$:\\}
\noindent 
We prove by using Yao's method~\cite{DBLP:books/crc/99/0001R99}. There are two distributions $\yesdist$ and $\nodist$ (as described below) from which $G$ is sampled satisfying $\pr\left(G \sim \yesdist \right)=\pr\left( G \sim \nodist \right)=1/2$. Note that, for each $G \sim \yesdist \cup \nodist$, $\size{V(G)}=n=\Theta(\sqrt{m})$~\footnote{Without loss of generality, we assume that $\sqrt{m}$ is an integer. The proof can be extended to any graph having $n \geq \sqrt{m}$ vertices and $m$ edges by adding $n-\sqrt{m}$ isolated vertices.}, and $\size{E(G)}=\Theta(m)$ with a probability of at least $1-o(1)$. But the number of triangles in each $G \sim \nodist$ is at least two factor more than that of the number of triangles in any $G \sim \yesdist$, with a probability of at least $1-o(1)$.
 \begin{figure}
     \centering
     \includegraphics{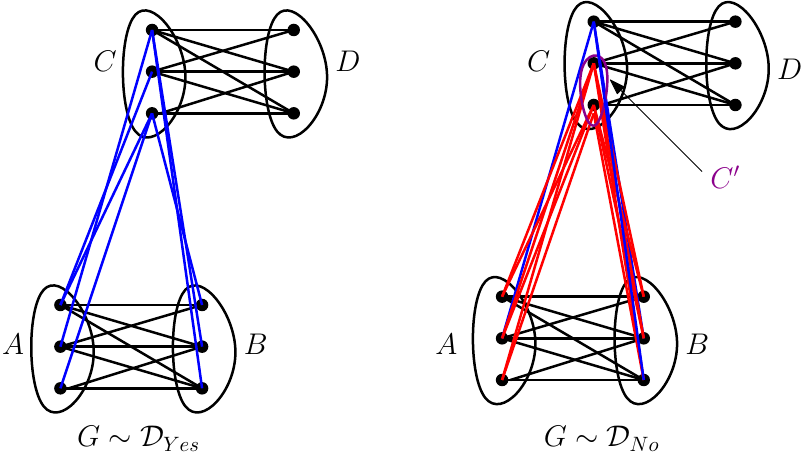}
     \caption{Illustration of $G \sim \yesdist$ and $G \sim \nodist$ when $t=\Omega(m \log n)$.}
     \label{fig:hard}
 \end{figure}
 \begin{description}
 \item[$\yesdist$:] The vertex set $V(G)$ (with $\size{V(G)}=\Theta(\sqrt{m})$) is partitioned into four parts $A, \, B, \, C, \, D$ uniformly at random. Vertex set $A$ forms a biclique with vertex set $B$ and vertex set $C$ forms a biclique with vertex set $D$. Then {every} vertex pair $\{x,y\}$, with $x \in A\cup B$ and $y \in C$, is added as an edge to graph $G$ with probability $ \Theta\left(\sqrt{\frac{t}{m^{3/2}}}\right)$; 
 
 \item[$\nodist$:] The vertex set $V(G)$ (with $\size{V(G)}=\Theta(\sqrt{m})$) is partitioned into four parts $A, \, B, \, C, \, D$ uniformly at random. Vertex set $A$ forms a biclique with vertex set $B$ and vertex set $C$ forms a biclique with vertex set $D$. Then {every} vertex pair $\{x, y\}$, with $x \in A \cup B$ and $y \in C$, is added as an edge to graph $G$ with probability $ \Theta\left(\sqrt{\frac{t}{m^{3/2}}}\right)$. Then each vertex of $C$ is sampled with probability $\Theta\left(\frac{t}{m^{3/2}}\right)$. Let $C'$ be the sampled set. Each vertex of $C'$ is connected to every vertex of $A \cup B$ with an edge;
 \end{description}
See Figure~\ref{fig:hard} for an illustration of the above construction. The constants, including $k$, in the order notations above are suitably set to have the following:
 \begin{description}
 \item[When $G \sim \yesdist$:]  The number of triangles in each graph is at most $t$, with a probability of at least $1-o(1)$;
 \item[When $G \sim \nodist$:] $\size{C'}=\Theta\left(\frac{t}{m}\right)$ with a probability of at least $1-o(1)$. Hence, the number of triangles in each $G \sim \nodist$ is at least $2t$, with a probability of at least $1-o(1)$.
\end{description}

Now, consider a particular \ee query with input $P \subseteq {V(G) \choose 2}
$. Here, we divide the discussion into two parts, based on $\size{P} 
\leq \tau$ and $\size{P}>\tau$, where $\tau=\Theta(\log ^2 n)$ {is a threshold. If we query with the number of vertex pairs more than the threshold, chances are more we will not be able to distinguish between $G \sim \yesdist$ and $G \sim \nodist$.} When $
\size{P}>\tau$, we can show that there exists a vertex pair $\{x,y\}\in 
P$ such that $\{x,y\}$ is an edge in $G$, with a probability of at least 
$1-o(1)$, irrespective of whether $G \sim \yesdist$ or $G \sim \nodist$. 
Intuitively, this is because the number of vertices and edges in $G$ are 
$\Theta(\sqrt{m})$ and $\Theta(m)$, respectively. So, \ee queries with 
input $P \subseteq {V(G) \choose 2}$ such that $\size{P}>\tau$ will not 
be useful to distinguish whether $G \sim \yesdist$ or $G \sim \nodist$.
\remove{So, the \ee query with input $P \subseteq {V(G) \choose 2}
$ are not \emph{useful} to distinguish whether $G\sim \yesdist$ or $G\sim \nodist$, if $ \size{P} > \tau$. (Gopi: Are these two sentences not saying the same thing? Can you use just one sentence?)}

 We prove the desired lower bound by proving $
\tOm\left(\frac{m^{3/2}}{t}\right)$ \ee queries are necessary to 
decide whether $G\sim \yesdist$ or $G\sim \nodist$ with a probability of 
at least $2/3$. Note that $C'=\emptyset$ when $G \sim \yesdist$. So, the number of \ee queries {needed} to decide whether $G\sim 
\yesdist$ or $G\sim \nodist$, is at least the number of \ee queries {needed} to 
\emph{touch} at least one vertex of $C'$ when $G \sim \nodist$. Here, by touching at least a vertex of $C'$, we mean $V(P) \cap C' \neq \emptyset$. As we have argued that only \ee query with input $P \subseteq {V(G) \choose 2}$ with $\size{P} \leq \tau$ can be useful, the probability that we touch a vertex in $C'$ with such a query is at most $ p=\Oh\left(\frac{C'}{\sqrt{m}}\cdot \tau\right)= \Oh\left(\frac{t \log ^2 n}{m^{3/2}}\right)$. Hence the number of \ee queries to touch at least a vertex of $C'$, is at least $1/p$, that is, $\tOm\left(\frac{m^{3/2}}{t}\right)$. 

To let the the above discussion work, when $G \sim \nodist$, $\size{C'}=\Theta\left(\frac{t}{m}\right)$ must be at least $\Omega(\log n)$. But $\size{C'}=\Theta\left(\frac{t}{m}\right)$ with a probability of at least $1-o(1)$. Because of this, we take $t\geq km \log n$ in the above discussion. The formal statement of the lower bound, when $t \geq km \log n$, is given in Lemma~\ref{theo:clq-count} in Section~\ref{sec:lb-clique}. What we have discussed here is just an overview, the formal proof of Lemma~\ref{theo:clq-count} is much more invloved and delicate, which is presented in Section~\ref{sec:pf-high}.

\remove{
\paragraph*{The idea for the lower bound of  $\tOm\left(\sqrt{t}\right)$ when $t< km \log n$:\\}
\noindent
Let us consider an algorithm $\cA$, having \ee query access to an unknown graph $G_2$, that decides whether the number of triangles in $G$ is at most $t$ or at least $2t$ with a probability of at least $2/3$, where the parameter $t$ satisfies  $t<k{\size{E(G_2)} \cdot \log \size{V(G_2)}}$. We prove the desired lower bound by reducing from the case $t=km \log n$ in a graph $G$ to the case  $t<k{\size{E(G_2)} \cdot \log \size{V(G_2)}}$ in a graph $G_2$. After the reduction, we get the lower bound for the case $t<k{\size{E(G_2)} \cdot \log \size{V(G_2)}}$ as we have already established the lower bound for any $t \geq km \log n$.

 {Let $G$ be the unknown graph to which we have \ee query access} and $t=km \log n$, where $\size{V(G)}=n$ and $\size{E(G)}=\Theta(\sqrt{m})$. The unknown graph $G_2$ (for algorithm $\cA$) is  $G \cup G_1$
such that $V(G_2)=V(G)\sqcup V(G_1)$ and $E(G_2)=E(G)\sqcup E(G_1)$, where $G_1$ is a \emph{triangle-free} graph having $\Theta(\sqrt{t})$ vertices (disjoint from $V(G)$), and $\Theta(t)$ edges. We choose the constants in $\Theta(\sqrt{t})$ and $\Theta(t)$ such that $t<k{\size{E(G_2)}\log |V(G_2)|}$~\footnote{This is to satisfy the requirement of algorithm ${\cal A}$}. Note that $\size{V(G_2)}=\Theta(n+\sqrt{t})$, $E(G_2)=\Theta(m+t)$, and the number of triangles in $G_2$ is same as that of $G$. Also, an \ee query to graph $G_2$ can be answered by an \ee query to graph $G$. 
Hence, because of our lower bound in the case of $t \geq km \log n$,
\begin{align*}
~&\mbox{
 Lower bound for the number of \ee queries made by algorithm $\cA$ } \\
 & \mbox{to estimate the number of triangles in $G_2$}\\
& \geq \mbox{The number of \ee queries required by any algorithm that estimates } \\
& \mbox{the number of triangles in $G$}\\
& \geq \tOm\left(\frac{m^{3/2}}{t}\right)=\tOm\left(\sqrt{t} \right) 
\end{align*}}
As we have already mentioned, the desired lower bound of $\tOm\left(\sqrt{t}\right)$ when $t < km\log n$ can be proved by a reduction from the case when $t \geq km\log n$. The formal statement of the lower bound, when $t < km \log n$, is given in Lemma~\ref{theo:clq-count-low} in Section~\ref{sec:lb-clique}, and the proof is presented in Section~\ref{sec:pf-low}.


\subsection{Overview for our upper bound (Theorem~\ref{theo:bis-ub})} 
\label{ssec:ub-overview}
\noindent
We establish the upper bound claimed in  Theorem~\ref{theo:bis-ub} by giving two algorithms that report a $(1\pm \eps)$-approximation to the number of triangles in the graph:
\begin{itemize}
\item[(i)] $\alghigh(G,\eps)$ that makes $\tOh\left(\frac{m^{3/2}}{T}\right)$ \bis queries;
\item[(ii)] $\alglow(G,\eps)$ that makes $\tOh\left(\frac{m+T}{\sqrt{T}}\right)$ \bis queries.
\end{itemize}
Informally speaking, our final algorithm $\test$ calls $\alghigh (G,\eps)$ and $\alglow(G,\eps)$ when $T=\Omega(m)$ and $T=\Oh(m)$, respectively. Observe that, if $\test$ knows $T$ within a constant factor, then it can decide which one to use among  $\alghigh (G,\eps)$ and $\alglow(G,\eps)$. If $\test$ does not know $T$ within a constant factor, then it starts from a guess $L={{n \choose 3}}/{2}$ and makes a geometric search on $L$ until the output of $\test$ is consistent with $L$. Depending on whether $L =\Omega(m)$ or $L=\Oh(m)$, $\test$ decides which one among $\alghigh (G,\eps)$ and $\alglow(G,\eps)$ to call. This guessing technique is very standard by now in property testing literature~\cite{GoldreichR08,EdenLRS15,EdenRS18, DBLP:conf/innovations/AssadiKK19}. Another point to note is that we do not know $m$. However, we can estimate $m$ by using $\poly (\log n)$ \bis queries (see Table~\ref{table:results-bis-is}). An estimate of $m$ will perfectly work for us in this case.

\paragraph*{Algorithm $\alghigh (G,\eps)$} 

Algorithm \alghigh is inspired by the triangle estimation algorithm of Assadi \etal ~\cite{DBLP:conf/innovations/AssadiKK19}, where we have {\sc Adjacency}, {\sc Degree}, {\sc Random neighbor} and {\sc Random Edge} queries. Please see Appendix~\ref{sec:alg-high} for formal definitions of these queries. Note that the algorithm by Assadi \etal~can be \emph{suitably} modified even if we have approximate versions of {\sc Degree}, {\sc Random neighbor} and {\sc Random Edge} queries. Also refer Appendix~\ref{sec:alg-high} for formal definitions of approximate version of the above queries. By Corollary~\ref{coro:bis}, $\tOh(1)$ \bis queries are enough to simulate the approximate versions of {\sc Degree} and {\sc Random neighbor}, with a probability of at least $1-o(1)$. By Proposition~\ref{pro:bis-ub3}, approximate version of {\sc Random Edge} queries can also be simulated by $\tOh(1)$ \bis queries, with a probability of at least $1-o(1)$. Putting everything together, we get $\alghigh (G,\eps)$ for triangle estimation that makes $\tOh(1)$ \bis queries. The formal statement of the corresponding triangle estimation result  is given in Lemma~\ref{lem:tri-count}, and algorithm $\alghigh (G,\eps)$ is described in Appendix~\ref{sec:alg-high}.

\paragraph*{Algorithm $\alglow (G,\eps)$} 

This algorithm is inspired by the two pass streaming algorithm for triangle estimation by McGregor \etal~\cite{DBLP:conf/pods/McGregorVV16}. Basically, we show that the steps of McGregor \etal's algorithm
can be executed by using $\tOh\left(\frac{m+T}{\sqrt{T}}\right)$ \bis queries. To do so, we have used the fact that, given any $X \subseteq V(G)$, all the edges of the subgraph induced by $X$ can be enumerated by using $\tOh\left(\size{E(G[X])}\right)$ \bis queries (see Proposition~\ref{pro:bis-ub1} for the formal statement). The formal statement of the corresponding triangle estimation result is given in Lemma~\ref{lem:tri-count-low}, and algorithm $\alglow (G,\eps)$ is described in Appendix~\ref{sec:alg-low} along with its correctness proof and query complexity analysis.


\section{Lower bound for estimating triangles using \eelong queries}
\label{sec:lb-clique}
\noindent
In this Section, we prove the main lower bound result as sketched in Theorem~\ref{theo:bis-lb}; the formal theorem statement is stated below. As mentioned earlier, the lower bound proofs will be for the stronger query oracle \ee. This will imply the lower bound for \bis. 

\begin{theo}[{\bf Main lower bound result}]\label{theo:tri-lb}
Let $m, \, n, \, t \in \N$ be such that $1 \leq t \leq \frac{m^{3/2}}{2}$. Any 
(randomized) algorithm that has \ee oracle access to a graph $G(V,E)$ must make 
$\widetilde{\Omega}\left(\min
\left\{{\sqrt{t}},\frac{m^{3/2}}{t}\right\}\right)$ \ee queries to decide whether the number of triangles in $G$ is at most $t$ or at least $2t$ with a probability of at least  $2/3$, where $G$ has $ n \geq 4\sqrt{m}$ vertices and $\Theta\left(m \right)$ edges.
\end{theo}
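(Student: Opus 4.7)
The plan is to prove Theorem~\ref{theo:tri-lb} via Yao's minimax principle, splitting into two regimes by comparing $t$ against a threshold $km\log n$ for a suitable constant $k$. The first regime ($t \geq km\log n$) will give the $\widetilde{\Omega}(m^{3/2}/t)$ bound directly; the second regime ($t < km\log n$) will be obtained via a reduction to the first, yielding the $\widetilde{\Omega}(m/\sqrt{t})$ bound.

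For the high regime, I would carry out the distributional construction sketched in Section~\ref{ssec:lb-overview}. Fix $n = \Theta(\sqrt{m})$ vertices and partition them uniformly at random into four parts $A,B,C,D$ of equal size. In both $\yesdist$ and $\nodist$, put in the bicliques $A$-$B$ and $C$-$D$, and add every edge between $A\cup B$ and $C$ independently with probability $p = \Theta(\sqrt{t/m^{3/2}})$. In $\nodist$ additionally sample a set $C' \subseteq C$ by including each vertex independently with probability $\Theta(t/m^{3/2})$ and connect each vertex of $C'$ to every vertex of $A\cup B$. The first step is to verify, using standard Chernoff bounds, that (i) $|E(G)| = \Theta(m)$ w.h.p.\ in both distributions, (ii) the triangle count in $\yesdist$ is at most $t$ w.h.p., and (iii) $|C'| = \Theta(t/m)$ w.h.p.\ in $\nodist$, yielding at least $2t$ triangles w.h.p. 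Because $|C'| = \Theta(t/m)$ must exceed $\Omega(\log n)$ for concentration, the hypothesis $t \geq km\log n$ is essential here. The heart of the proof is then the query lower bound. I would split queries by whether the input set $P$ satisfies $|P|\leq \tau = \Theta(\log^2 n)$ or $|P| > \tau$. For large queries, I would show that with probability $1-o(1)$ the answer is ``yes'' under both distributions because the expected number of edges hit by $P$ is $\Omega(\log^2 n)$, so these queries carry no information; this uses the fact that $|V(G)|=\Theta(\sqrt{m})$ and $|E(G)|=\Theta(m)$, so the density $m/\binom{n}{2}$ is constant, making a random pair an edge with constant probability. For small queries, the key observation is that, conditioned on $G \setminus C'$ being revealed, the queries can distinguish $\yesdist$ from $\nodist$ only if some query touches $C'$, i.e.\ some vertex of $V(P)\cap C'$ is nonempty. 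Since $V(P)$ has size $O(\tau)$, the probability of touching $C'$ on any single such query is $O(|C'|\tau/\sqrt{m}) = O(t\log^2 n / m^{3/2})$; by a union bound, $q$ queries succeed with probability $O(qt\log^2 n / m^{3/2})$, forcing $q = \widetilde{\Omega}(m^{3/2}/t)$.

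For the low regime $t < km\log n$, I would argue by a black-box reduction. Given an algorithm $\cA$ that succeeds on an input graph $G_2$ with parameter $t$ in this regime, I would simulate $\cA$ on $G_2 = G \sqcup G_1$, where $G$ is an instance drawn from the hard distribution of the first regime with parameter $t' = km\log n$ on $n' = \Theta(\sqrt{m})$ vertices, and $G_1$ is an explicit triangle-free graph on $\Theta(\sqrt{t})$ vertices with $\Theta(t)$ edges (e.g., a balanced complete bipartite graph). Then $|E(G_2)| = \Theta(m+t)$, the triangle counts of $G_2$ and $G$ agree, and any \ee query $P\subseteq\binom{V(G_2)}{2}$ can be answered by reading off the $G_1$ part explicitly and forwarding the remaining pairs (those lying inside $V(G)$) as a single \ee query to $G$. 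The first-regime bound then yields $\widetilde{\Omega}(m^{3/2}/t') = \widetilde{\Omega}(m/\sqrt{\log n}) = \widetilde{\Omega}(|E(G_2)|/\sqrt{t})$ many queries, after re-expressing $m$ in terms of $|E(G_2)|$ and $t$.

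The main obstacle will be the small-query analysis in the high regime: carefully conditioning on the ``outside'' randomness (the partition and the edges not incident to $C'$) so that the only remaining uncertainty is the choice of $C'$, and then formalizing the intuition that nonadaptive touching arguments still apply in the adaptive query model. The standard way to handle adaptivity is to define a decision tree of depth $q$ and use a coupling/hybrid argument showing that the transcripts under $\yesdist$ and $\nodist$ are $o(1)$-close in total variation unless some query touches $C'$; this requires a union bound over the $q$ queries together with the fact that, conditioned on not touching $C'$, the \ee answers are identically distributed under $\yesdist$ and $\nodist$. A secondary technical point is to handle the edges between $A\cup B$ and $C'$ that were already added under the common $p$-probability edge sampling, so that conditioning on not touching $C'$ leaves the answer distribution unchanged; I expect this to be routine once the common edges are resampled after fixing $C'$.
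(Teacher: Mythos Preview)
Your proposal is correct and matches the paper's approach closely: the paper proves Theorem~\ref{theo:tri-lb} by exactly this two-regime split (Lemmas~\ref{theo:clq-count} and~\ref{theo:clq-count-low}), with the same $\yesdist/\nodist$ construction in the high regime and the same disjoint-union-with-a-triangle-free-graph reduction in the low regime.

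The one technical device you should be aware of is that, to carry out the adaptive decision-tree analysis you anticipate, the paper introduces an \emph{augmented} oracle $\staree$ that is stronger than \ee: on input $P$ (or a random $\tau$-subset $P'$ when $|P|>\tau$) it reveals not just the edge/non-edge status of every pair but also the partition label in $\{A,B,C,C',D\}$ of every vertex in $V(P)$. ``Touching $C'$'' then becomes ``some revealed label equals $C'$'' (a \emph{bad} node), and the proof reduces to two claims: under $\nodist$ a bad node is reached with probability $o(1)$, and for every good node $v$ one has $\Pr_{\nodist}(\text{reach }v)\le\Pr_{\yesdist}(\text{reach }v)$ (via the coupling of Remark~\ref{rem:yes-no}). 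This trick also fixes a soft spot in your large-$|P|$ argument: ``a random pair is an edge with constant probability'' does not directly apply because $P$ is adaptive and may lie inside already-revealed parts; the paper instead argues (Remark~\ref{rem:assumption}, Observation~\ref{obs:prob-edge}) that each pair in $P'$ has at most one labeled endpoint, finds a matching or star of size $\Theta(\log n)$ inside $P'$, and gets independent constant edge-probabilities there. Two minor corrections: in the reduction the threshold is the \emph{same} $t$ for $G$ and $G_2$ (high-regime for $G$, low-regime for $G_2$ after padding), so drop the separate $t'$; and $m^{3/2}/(km\log n)=\sqrt{m}/(k\log n)$, not $m/\sqrt{\log n}$. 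Your ``resampling'' worry is unnecessary: in $\nodist$ every vertex of $C'$ receives \emph{all} edges to $A\cup B$, so the common $p$-edges to $C'$ are subsumed and the coupling is immediate.
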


We prove the above theorem by proving  Lemmas~\ref{theo:clq-count} and~\ref{theo:clq-count-low}, as stated below. Note that Lemmas~\ref{theo:clq-count} and~\ref{theo:clq-count-low} talk about the desired lower bound when the number of triangles in the graph is \emph{large} ($\Omega(m \log n)$) and \emph{small} ($\Oh(m \log n)$), respectively.  
\begin{lem}[{\bf Lower bound when there are \emph{large} number of triangles}]\label{theo:clq-count}
Let $m, \, n, \, t \in \N$ be such that $t \geq \frac{m \log n}{8}$. Any 
(randomized) algorithm that has \ee oracle access to a graph $G(V,E)$ must make 
$\widetilde{\Omega}\left(\frac{m^{3/2}}{t}\right)$ \ee queries to decide whether the number of triangles in $G$ is at most $t$ or at least $2t$ with a probability of at least  $2/3$, where $G$ has $ n \geq 4\sqrt{m}$ vertices, $\Theta\left(m \right)$  edges.
\end{lem}
\begin{lem}[{\bf Lower bound when there are \emph{small} number of triangles}]\label{theo:clq-count-low}
{
Let $m, \, n, \, t \in \N$ be such that $ t <\frac{m \log n}{8} $. Any 
(randomized) algorithm that has \ee oracle access to a graph $G(V,E)$ must make 
$\widetilde{\Omega}\left({\sqrt{t}}\right)$  \ee queries to decide whether the number of triangles in $G$ is at most $t$ or at least $2t$ with a probability of at least $2/3$, where $G$ has $ n \geq 4\sqrt{m}$ vertices and $\Theta\left(m \right)$ edges.}
\end{lem}
We first show Lemma~\ref{theo:clq-count} in Section~\ref{sec:pf-high}, and then Lemma~\ref{theo:clq-count-low} in Section~\ref{sec:pf-low}. Note that the proof of Lemma~\ref{theo:clq-count-low} will use Lemma~\ref{theo:clq-count}.

\subsection{Proof of Lemma~\ref{theo:clq-count}}\label{sec:pf-high}
\noindent
 Without loss of generality, assume that $\sqrt{m}$ is an integer. We prove for the case when $n=4\sqrt{m}$. But, we can make the proof work for any $n \geq 4\sqrt{m}$ by adding $n-4\sqrt{m}$ isolated vertices.
 Note that $ t \geq \frac{m \log n}{8}$ here. We further assume that $t \leq \frac{m^{{3}/{2}}}{128}$, and {$m =\Omega(\log ^{2} n)$}. Otherwise, the stated lower bound of $\widetilde{\Omega}\left(\frac{m^{3/2}}{t}\right)$ trivially follows as $\widetilde{\Omega}(\cdot)$ hides 
a multiplicative factor of $\frac{1}{\poly (\log n)}$.

We use Yao's min-max principle to prove the lower bound. To do so, we consider two distributions $\yesdist$ and $\nodist$ on 
graphs where 
\begin{itemize}
\item Any graph $G \sim \yesdist \cup \nodist$ has $ 4\sqrt{m}$ vertices;
\item Any graph $G \sim \yesdist \cup \nodist$ has $\Theta(m)$ edges with a probability of at least $1-o(1)$;
\item The number of triangles in any graph $G \sim \yesdist$ is at most $t$ with a probability of at least $1-o(1)$, and any graph $G \sim \nodist$ has at least $2t$  triangles with a probability of at least $1-o(1)$. 
\end{itemize}
Note that, if we can show that any deterministic algorithm that distinguishes graphs from $\yesdist$ and $\nodist$, with a probability of at least $2/3$, must make $\widetilde{\Omega}\left(\frac{m^{3/2}}{t}\right)$ \ee queries, then we are done with the proof of Lemma~\ref{theo:clq-count}.
\subsubsection{The (hard) distribution for the input, its properties, and the proof set up}\label{sec:hard}
\begin{description}
\item[$\yesdist$:] A graph $G \sim \yesdist$ is sampled as follows:
\begin{itemize}
\item Partition the vertex set $V(G)$ into $4$ parts $A, \, B, \, C, \, D$, by initializing $A, \, B, \, C, \, D$ as empty sets, and then putting each vertex in $V(G)$ into one of the parts uniformly at random and independent of other vertices;
\item  Connect each vertex of $A$ with every vertex of $B$ with an edge {to form a biclique}. Also, connect 
each vertex of $C$ with every vertex of $D$ with an edge {to form another biclique};

\item For {every $\{x,y\}$ where} $x \in A \cup B$ and $y \in C$, add edge $\{x,y\}$ to $G$ with probability $\sqrt{\frac{t}{16m^{3/2}}}$.
\end{itemize} 
\item[$\nodist:$] A graph $G \sim \nodist$ is sampled as follows:
\begin{itemize}
\item Partition the vertex set $V(G)$ into $4$ parts $A, \, B, \, C, \, D$, by initializing $A, \, B, \, C, \, D$ as empty sets, and then putting each vertex in $V(G)$ into one of the partitions uniformly at random and independent of other vertices;
\item  Connect each vertex of $A$ with every vertex of $B$ with an edge {to form a biclique}. Also, connect 
each vertex of $C$ with every vertex of $D$ with an edge {to form another biclique};

\item For {every $\{x,y\}$ where} $x \in A \cup B$ and $y \in C$, add edge $\{x,y\}$ to $G$ with probability $\sqrt{\frac{t}{16m^{3/2}}}$.
\item Select $C' \subseteq C$ by putting each $x \in C$ into $C'$ with a probability of at least $\frac{32t}{m^{3/2}}$, independently, and then, add each edge in $\{{x,y}:x \in A\cup B, y \in C'\}$ to $G$. 
\end{itemize} 
\end{description}

The following observation establishes the number of vertices, edges, and the number of triangles in the graphs that can be sampled from $\yesdist \cup \nodist$. The proof uses \emph{large deviation inequalities} (see Lemma~\ref{lem:cher_bound1} and ~\ref{lem:depend:high_prob} in Appendix~\ref{sec:prelim}), and is presented in Appendix~\ref{sec:miss-proofs}.

\begin{obs}[{\bf Properties of the graph $G \sim \yesdist \cup \nodist$}]\label{obs:edge}
\begin{enumerate}
\item[(i)] For $G\sim \yesdist \cup \nodist$, the number of vertices in $G$ is $4\sqrt{m}$. Also, $\frac{\sqrt{m}}{2} \leq \size{A}, \size{B}, \size{C},\size{D} \leq 2\sqrt{m}$ holds with a probability of at least $1-o(1)$, and the number of edges in $G$ is $\Theta(m)$ with a probability of at least $1-o(1)$;
\item[(ii)] If $G\sim \yesdist$, then there are at most  $t$ triangles in $G$ with a probability of at least $1-o(1)$,
\item[(iii)] If $G\sim \nodist$, $ \frac{8t}{m}\leq \size{C'}\leq \frac{64t}{m}$ with a probability of at least $1-o(1)$, and there are at least $2t$ triangles in $G$ with a probability of at least $1-o(1)$.
\end{enumerate}
\end{obs}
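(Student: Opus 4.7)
The plan is to verify each claim by a direct expectation calculation followed by a standard concentration argument (Chernoff/Bernstein type, as supplied in Appendix~\ref{sec:prelim}). The assumptions $m = \Omega(\log^2 n)$ and $t \geq (m \log n)/8$ provide enough slack to drive every failure probability to $n^{-\Omega(1)} = o(1)$, after which a union bound yields the desired ``whp'' guarantees. The random choices split into three layers: the multinomial partitioning of $V(G)$ into $(A,B,C,D)$, the independent Bernoulli cross edges between $A\cup B$ and $C$, and (only in $\nodist$) the independent Bernoulli trials defining $C'$. I will first condition on the partition-size event and then analyze the remaining layers.

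For part (i), $|V(G)| = 4\sqrt{m}$ is deterministic. The sizes $|A|,|B|,|C|,|D|$ have common mean $\sqrt m$, and Chernoff combined with $m = \Omega(\log^2 n)$ places each in $[\sqrt m/2, 2\sqrt m]$ with probability $1-o(1)$. Conditioned on this event, the two bicliques contribute $|A||B|+|C||D| = \Theta(m)$ edges deterministically. The cross-edge layer is a sum of $|A\cup B|\cdot |C| = \Theta(m)$ independent Bernoullis of probability $p=\sqrt{t/(16m^{3/2})}$; its mean is $\Theta(m^{1/4}\sqrt t)$, which is $O(m)$ because $t \leq m^{3/2}/128$, and Chernoff gives concentration within an $O(m)$ additive window. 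Under $\nodist$, the extra $C'$-edges add at most $|C'|\cdot |A\cup B| = O((t/m)\sqrt m) = O(t/\sqrt m) = O(m)$, so $|E(G)| = \Theta(m)$ with probability $1-o(1)$ in both distributions.

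For part (ii), a short case check shows that in $\yesdist$ every triangle has exactly one vertex in each of $A, B, C$: no triangle can use $D$ since $D$ is adjacent only to $C$ and $C$ has no internal edges, and the remaining (otherwise bipartite) structure rules out two vertices inside the same part. For a fixed $c \in C$, its cross-neighbors in $A$ and in $B$ are independent $\mathrm{Bin}(|A|,p)$ and $\mathrm{Bin}(|B|,p)$ variables, so the expected number of triangles through $c$ is $|A||B|p^2$. Summing,
\[
\E[\#\text{triangles}] \;=\; |A||B||C|\,p^2 \;\leq\; 8m^{3/2}\cdot\tfrac{t}{16m^{3/2}} \;=\; \tfrac{t}{2}.
\]
The per-$c$ counts are independent across $c$ (they depend on disjoint edge sets) and each is at most $|A||B| = O(m)$; since $\E[\cdot]/\max \geq \Omega(\log n)$ via $t \geq m\log n/8$, a Bernstein-type inequality pushes the probability of exceeding $t$ down to $o(1)$.

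For part (iii), $|C'| \sim \mathrm{Bin}(|C|, 32t/m^{3/2})$, so $\E[|C'|] \in [16t/m, 64t/m]$ conditional on part (i); Chernoff with $\E[|C'|] \geq 16t/m \geq 2\log n$ places $|C'|$ inside the stated window (up to reselecting the stated constants). Every triple $(a,b,c')\in A\times B\times C'$ forms a triangle, because $(a,b)$ is a biclique edge while both $(a,c')$ and $(b,c')$ are inserted in the $C'$ step; this already gives $|A||B||C'| \geq (\sqrt m/2)^2\cdot 8t/m = 2t$ triangles. The main obstacle throughout is orchestrating the joint conditioning on several ``size'' events while arguing about sums of products of Bernoullis that share common vertices. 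I would handle this by peeling one layer at a time (partition sizes, then cross edges, then $C'$), applying concentration inside each layer conditionally on the previous ones, and closing with a union bound whose tail is absorbed by the $\log n$ slack in $t$ and $m$.
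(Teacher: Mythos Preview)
Your plan is correct, and for parts (i) and (iii) it is essentially what the paper does. The one genuine difference is in part (ii).

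For part (ii) the paper does \emph{not} group triangles by their $C$-vertex. Instead it puts an indicator $X_{abc}$ on every triple $(a,b,c)\in A\times B\times C$, observes that each $X_{abc}$ shares an edge-variable with at most $d\le 6\sqrt m$ other indicators, and then applies Janson's bounded-dependency Chernoff bound (Lemma~\ref{lem:depend:high_prob}) to $X=\sum X_{abc}$ with $N\le 8m^{3/2}$, $p=t/(16m^{3/2})$, and $\delta=t/2$. Your route is to write $X=\sum_{c\in C} X_c$ with $X_c=N_A(c)\,N_B(c)$, note that the $X_c$'s are mutually independent because the edge sets $\{c\}\times(A\cup B)$ are disjoint, and invoke ordinary Bernstein on $|C|=O(\sqrt m)$ independent summands bounded by $M=|A||B|=O(m)$. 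Your ``$\E[X]/M=\Omega(\log n)$'' heuristic is justified once you use $\mathrm{Var}(X_c)\le M\,\E[X_c]$ for nonnegative bounded variables, which turns the Bernstein exponent into $\Omega(\delta/M)=\Omega(t/m)=\Omega(\log n)$; you should state this explicitly rather than leave it implicit. The trade-off: your argument is more elementary (no dependent Chernoff needed), while the paper's argument avoids having to reason about products of binomials and works directly at the level of individual triangles.
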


The following remark is regarding the connection between graphs in $\yesdist$ and that in $\nodist$. This will be used later in our proof, particularly in the proof of Claim~\ref{clm:inter2}.
\begin{rem}[{\bf A graph $G' \sim \nodist$ can be generated from a graph  $G \sim \yesdist$}]\label{rem:yes-no}
Let us first generate a graph $G \sim \yesdist$. Select $C' \subseteq C$ by putting each $x \in C$ into $C'$ with a probability of at least $\frac{32t}{m^{3/2}}$, and then, add each edge in $\{{x,y}:x \in A\cup B, y \in C'\}$ to $G$ to generate $G'$, then (the resulting graph) $G'\sim \nodist$.
\end{rem}
The following observation says that a $\{x,y\}\in {V(G)\choose 2}$ (with some condition) forms an edge with a probability of at least a constant. It will be used while we prove Claim~\ref{clm:inter1}.

\begin{obs}[{\bf Any vertex {pair} $\{x,y\}$ is an edge in $G$ with constant probability}]\label{obs:prob-edge}
Let $\{x,y\}\in {V(G)\choose 2}$, and we are in the process of generating $G\sim \yesdist \cup \nodist$. Let at most one of $x$ and $y$ has been put into one of the parts out of $A,B,C$ and $D$. Then $\{x,y\}$ is an edge in $G$ with probability at least $\frac{1}{4}$.
\end{obs}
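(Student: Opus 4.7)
The plan is to observe that, among the mechanisms by which $\{x,y\}$ becomes an edge in the construction of $G \sim \yesdist \cup \nodist$, the two deterministic biclique steps (connecting every $A$-$B$ pair and every $C$-$D$ pair by an edge) already contribute probability at least $\frac{1}{4}$. All other edge-generating mechanisms, namely the random edges placed between $A \cup B$ and $C$ and the additional $C'$-to-$(A \cup B)$ edges in the \nodist{} construction, can only add to this probability, so it suffices to lower-bound the contribution from the bicliques alone. This has the pleasant feature that the biclique edges are present in both $\yesdist$ and $\nodist$, so the argument is uniform across the two distributions.

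I would split the analysis into two cases depending on whether zero or exactly one of $\{x,y\}$ has already been assigned to a part. In the zero case, $x$ and $y$ are each placed independently and uniformly among $\{A, B, C, D\}$, so the four disjoint events $\{x \in A, y \in B\}$, $\{x \in B, y \in A\}$, $\{x \in C, y \in D\}$, $\{x \in D, y \in C\}$ each occur with probability $\frac{1}{16}$. Any of them forces $\{x,y\}$ to be an edge via one of the bicliques, so $\pr(\{x,y\} \in E(G)) \geq 4 \cdot \frac{1}{16} = \frac{1}{4}$.

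In the one-vertex-already-placed case, I would fix that vertex to be $x$ without loss of generality and run a short sub-case analysis on which part contains $x$. If $x \in A$, then $\{x,y\}$ is an $A$-$B$ biclique edge whenever $y$ lands in $B$, an event of probability $\frac{1}{4}$ since $y$ is still uniform over the four parts; the symmetric sub-cases $x \in B, C, D$ use the matching biclique partners $A, D, C$ respectively, each yielding probability $\frac{1}{4}$. In every sub-case we obtain $\pr(\{x,y\} \in E(G)) \geq \frac{1}{4}$.

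There is essentially no technical obstacle here: the claim reduces to a clean case analysis over the uniform independent assignment step, with the biclique edges acting deterministically on top. The only care needed is to restrict the lower bound to the biclique contributions, so that it applies uniformly to both distributions and does not depend on which other random bits may have already been revealed in the partial construction.
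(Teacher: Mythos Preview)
Your proposal is correct and takes essentially the same approach as the paper: the paper's justification is a one-line sketch pointing to the uniform independent placement into $A,B,C,D$ together with the deterministic $A$--$B$ and $C$--$D$ bicliques, and you simply spell out the resulting case analysis in detail.
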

The above observation follows from the description of $G \sim \yesdist \cup \nodist$ -- each vertex in $V(G)$ is put into one of the parts out of $A, \, B, \, C, \, D$ uniformly at random, each vertex of $A$ is connected with every vertex in $B$, and each vertex of $C$ is connected with every vertex in $D$.

In order to prove Lemma~\ref{theo:clq-count}, by contradiction, assume that there is a randomized algorithm that makes $q=o\left(\frac{m^{3/2}}{t}\frac{1}{\log^2  n}\right)$ \ee queries and decides whether the number of triangles in the input graph is at most $t$ or 
at least $2t$, with a probability of at least $2/3$. Then there exists a deterministic algorithm $\alg$ that makes $q$  \ee queries and decides the following (when the input graph $G\sim \yesdist \cup \nodist$ be such that both $G \sim \yesdist$ and $G \sim \nodist$ holds with probability $1/2$) --
$$\pr_{G \sim \nodist}(\mbox{\alg (G) reports {\sc NO}})-\pr_{G \sim \yesdist}(\mbox{\alg (G) reports {\sc NO}}) \geq \frac{1}{3}-o(1).$$
(Here $\pr_{G \sim \nodist}(\cE)$ and $\pr_{G \sim \yesdist}(\cE)$ denote the probability of the event $\cE$ under the conditional space $G \sim \nodist$ and $G \sim \yesdist$, respectively.)
Hence, we will be done with the proof of Lemma~\ref{theo:clq-count} by showing the following lemma.
\begin{lem}[{\bf Lower bound on the number of \ee queries when $G \sim \yesdist \cup \nodist$}]\label{lem:decide}
Let the unknown graph $G$ be such that $G \sim \yesdist$ and $G \sim \nodist$ hold with equal probabilities. Consider any deterministic algorithm $\alg$ that has \ee access to $G$, and makes $q=o\left(\frac{m^{3/2}}{t}\frac{1}{\log^2  n}\right)$  \ee queries to $G$. Then
$$\pr_{G \sim \nodist}(\mbox{\alg (G) reports {\sc NO}})-\pr_{G \sim \yesdist}(\mbox{\alg (G) reports {\sc NO}}) \leq o(1) .$$
\end{lem}

Next, we define an augmented \ee oracle (\staree oracle). \staree is tailor-made for the graphs coming from $\yesdist \cup \nodist$. Moreover, it is \emph{stronger} than \ee, that is, any \ee query can be simulated by a \staree query. We will prove the claimed lower bound in Lemma~\ref{lem:decide} when we have access to \staree oracle. Note that this will imply Lemma~\ref{lem:decide}.

Before getting into the formal description of \staree oracle, note that the algorithm (with \staree access) maintains a four tuple data structure initialized with $\emptyset$. With each query to \staree oracle, the oracle updates the data structure and returns the updated data structure to the algorithm. Note that the updated data structure is a function of all previously made \staree queries, and it is enough to answer corresponding \ee queries.

\subsubsection{Augmented \eelong oracle (\staree)}

Before describing the \staree query oracle and its interplay with the algorithm, first we present the data structure $(E_Q,V\left(E_Q \right), e,\ell_v)$ that the algorithm maintains with the help of \staree oracle. The data structure keeps track of the following information. 

\paragraph*{Information maintained by $(E_Q,V\left(E_Q \right), e,\ell_v)$:}

\begin{itemize}
\item $E_Q$ is a subset of ${V(G) \choose 2}$ that have been seen by the algorithm till now, $V\left(E_Q \right)$ is the set of vertices present in any vertex pair in $E_Q$. 
\item $e: {V\left(E_Q \right) \choose 2} \rightarrow \{0,1\}$ such that $e(\{x,y\})=1$ means the algorithm knows that $\{x,y\}$ is an edge in $G$, $e(x,y)=0$ means that $\{x,y\}$ is not an edge in $G$.
\item $\ell_v: V\left(E_Q \right) \rightarrow \{ A, \, B, \, C, \, C', \, D\}$, where 
{
\[ \ell_v(x)=  \left\{
\begin{array}{ll}
       A, & x \in A \\
      B, & x \in B \\
      C, & x \in C \setminus C' \\
      C', & x \in C' \\
      D, & x \in D \\
\end{array} 
\right. \]
}
\end{itemize} 
Intuitively speaking, unless the algorithm knows about the presence of some vertex in $C'$, it cannot distinguish whether the unknown graph $G \sim \yesdist$ or $G \sim \nodist$. So, we define the notion of good and bad vertices, along with good and bad data structures. This notion will be used later in our proof.
\begin{defi}[{\bf Bad vertex}]\label{defi:bad-dist}
 A vertex $x \in V(E_Q)$ is said to be a \emph{bad} vertex if $\ell_v(x)=C'$. $(E_Q,V\left(E_Q \right), e,\ell_v)$ is said to be  \emph{good} if there does not exist any {bad vertex in $V\left(E_Q \right)$.} \remove{$x \in V\left(E_Q \right)$ which  is a bad vertex.}
\end{defi}

\subsubsection*{\staree oracle and its interplay with the algorithm:}
\noindent
 The algorithm initializes the data structure $(E_Q,V\left(E_Q \right), e,\ell_v)$ with $E_Q =\emptyset$, $V\left(E_Q \right)=\emptyset$. So, $e$ and $\ell_v$  are initialized with trivial functions with domain $\emptyset$. At the beginning of each round, the algorithm queries  the \staree oracle with a subset $P \subseteq {V(G) \choose 2}$  deterministically. Note that the choice of $P$  depends on the current status of the data structure. 
Now, we explain how \staree oracle responds to the query and how the data structure is updated. 
\begin{enumerate}
\item[(1)] If $\size{P} \leq \tau =25 \log^2 n$,  the oracle sets $E_Q \leftarrow E_Q \cup P$, and changes $V\left(E_Q \right)$ accordingly. The oracle also sets the function $e$ and $\ell_v$ as per their definitions, and then it sends the updated data structure to the algorithm.
\remove{\item \staree oracle first checks for a good edge $\{x,y\}$ with $x\in X$ and $y\in Y$. If such an edge $\{x,y\}$  is found, then update $e(\{x,y\})$ as $1$ and $\ell_e(\{x,y\})=\mbox{good}$. For any $x',y' \in K$ such that $e(\{x',y'\})$ is undefined till now, set $e(\{x',y'\})=\perp$ and $\ell_e(\{x,y\})=\mbox{good}$;
\item[(3)] If \staree oracle does not find any good edge $\{x,y\}$ with $x\in X$ and $y\in Y$, then it checks for a bad edge $\{x,y\}$ with $x\in X$ and $y\in Y$. If such an edge $\{x,y\}$ is 
 is found, then update $e(\{x,y\})$ as $1$ and $\ell_e(\{x,y\})=\mbox{bad}$;

}

\item[(2)]  Otherwise (if $\size{P}>\tau$), the oracle finds a random subset $P' \subseteq P$ such that $\size{P'}=\tau.$ The oracle checks if there is a pair $\{u,v\}\in P'$ such that $\{u,v\}$ is an edge. If yes, then the oracle responds as in (1) with $P$ being replaced by $P'$. If no, the oracle sends the data structure corresponding to the entire graph along with a {\sc Failure} signal~\footnote{We later argue that  {\sc Failure} signal is sent with a very low probability.}.

\end{enumerate}

 Owing to the way \staree oracle updates the data structure after each \staree query, we can make some assumptions on the inputs to the \staree oracle, as described in Remark~\ref{rem:assumption}. It will actually be useful when we prove Claim~\ref{clm:inter1}.
\begin{rem}[{\bf Some assumptions on the \staree query}]\label{rem:assumption}
Let $(E_Q,V\left(E_Q \right), e,\ell_v)$ be the data structure just before the algorithm makes \ee query with input $P$, and let $\left(E_Q',V\left(E_Q' \right), e',\ell_v'\right)$ be the data structure updated by \staree oracle after the algorithm makes \staree query with input $P$. Without loss of generality, we assume that
\begin{itemize}
\item[(i)] $P$ is disjoint from ${V\left(E_Q \right) \choose 2}$. It is because \staree maintains whether $\{x,y\}$ is an edge or not for each $\{x,y\} \in {V\left(E_Q \right) \choose 2}$;
\item[(ii)] When $x, z \in V(E_Q)$, there does not exist $\{x,y\}$ and $\{y,z\}$ in $P$. It is because the oracle updates the data structure in the same way in each of the following three cases when $x,z \in V(E_Q)$ -- (i) $\{x,y\}$ and $\{y,z\}$ are in $P$, (ii) $\{x,y\} \in P$ and $\{y,z\} \notin P$, and (iii) $\{x,y\} \notin P$ and $\{y,z\} \in P$. By the description of \staree oracle and its interplay with the algorithm, in all the three cases, the updated data structure $(E_Q',V\left(E_Q' \right), e',\ell_v')$ contains labels of all the three vertices $x,y,z$  along with the information whether $\{x,y\}$ and $\{y,z\}$ form edges in $G$ or not. So, instead of having both $\{x,y\}$ and $\{y,z\}$ in $P$ with $x,z \in V(E_Q)$, it is \emph{equivalent} to have exactly one among  $\{x,y\}$ and $\{y,z\}$ in $P$. 

\end{itemize} 
\end{rem}
 \remove{the oracle  After that \staree oracle first checks for a good edge $\{x,y\}$ 
with $x\in X$ and $y\in Y$. 
If No updates $K$ by $K \cup X \cup Y$.
. Also, the oracle finds the status of each $\{x,y\}$ with $x\in X$ and $y \in Y$ whether $\{x,y\}$ is an edge or not and update $e(\{x,y\})$ accordingly. For any two distinct vertices $x$ and $y$ in $X\setminus K$ ($Y\setminus K$), the oracle sets $e(\{x,y\})=\perp$. Also, for each $x\in (X \cup Y) \setminus K$ and $y \in K \setminus (X \cup Y)$, the oracle sets 
$e(\{x,y\})=\perp$. Finally, the oracle updates $\ell_e$ according to the updated $\ell_v$ and $e$. Then the new data structure is returned to the algorithm.}

In the following observation, we formally show that \staree oracle is stronger than that of \ee. Then we prove Lemma~\ref{lem:decide-star} that says that $\Omega\left(\frac{m^{3/2}}{t}\frac{1}{\log ^2 n}\right)$ \staree queries are necessary to distinguish between $G \sim \yesdist$ and $G \sim\nodist$. Note that Lemma~\ref{lem:decide-star} will imply Lemma~\ref{lem:decide}.

 
\begin{obs}[{\bf \staree is stronger than \ee}]\label{obs:lem_eqv_aug}
Let $G \sim  \yesdist \cup \nodist$. Each \ee query to $G$ can be simulated by using an \staree query to $G$. 
\end{obs}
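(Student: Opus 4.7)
The plan is to show that a single \staree query on input $P$ always returns information rich enough for the algorithm to determine the answer to the \ee query on the same $P$. So the simulation is trivial: on any \ee query with input $P \subseteq \binom{V(G)}{2}$, just forward $P$ to the \staree oracle, then read off the \ee answer from what the \staree oracle gives back. The proof amounts to a case analysis along the three branches in the interplay between \staree and the algorithm.

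First I would handle the small-$P$ branch ($|P| \leq \tau = 25 \log^2 n$). In this case, by the specification of \staree, after the query the updated data structure contains $E_Q' \supseteq P$ together with the edge-indicator function $e'$ defined on all of $\binom{V(E_Q')}{2} \supseteq P$. So the algorithm can inspect $e'(\{u,v\})$ for every $\{u,v\} \in P$ and declare ``yes'' iff $e'$ evaluates to $1$ on at least one such pair. This is exactly the correct \ee answer.

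Next I would handle the large-$P$ branch ($|P| > \tau$) with its two sub-cases. If the random subset $P' \subseteq P$ of size $\tau$ contains an edge, then \staree updates the data structure with $P'$ and returns it; the algorithm finds some $\{u,v\} \in P' \subseteq P$ with $e'(\{u,v\}) = 1$ and outputs ``yes,'' which is correct since an edge in $P'$ is a fortiori an edge in $P$. If $P'$ contains no edge, then \staree returns the data structure of the entire graph with a \textsc{Failure} signal; in this case the algorithm knows $G$ completely and can check directly whether any pair in $P$ is an edge of $G$, so it again answers the \ee query correctly.

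There is really no obstacle here—every branch of the \staree response is by design at least as informative about $G$ as the yes/no bit an \ee query on the same $P$ would reveal. The only thing worth highlighting is that the \textsc{Failure} signal is harmless for this observation: even though a failure will later be used to upper-bound the distinguishing advantage of the algorithm against $\yesdist$ versus $\nodist$, for the purpose of simulating \ee it simply delivers the entire graph, which is more than sufficient. Hence one \staree query simulates one \ee query, establishing the observation.
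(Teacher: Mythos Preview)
Your proposal is correct and follows essentially the same approach as the paper: forward $P$ to the \staree oracle and do a case analysis on $\size{P}\le\tau$ versus $\size{P}>\tau$ (with the two sub-cases of whether $P'$ contains an edge or the \textsc{Failure} branch is triggered), reading off the \ee answer from the returned data structure in each case. There is nothing substantively different between your argument and the paper's.
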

{\begin{proof}
Let us consider an \ee query with input $P \subseteq {V(G) \choose 2}$. We make a \staree query with the same input $P$, and answer the \ee query as follows depending on whether $\size{P}\leq \tau$ or $\size{P}>\tau$.
\begin{description}
\item[$\size{P}\leq \tau$:] The \staree oracle updates the data structure and let $(E'_Q,V\left(E'_Q \right), e',\ell'_v)$ be the updated data structure. It contains the the information about each $\{x,y\} \in P$ whether it forms an edge in $G$ or not. So, from $(E'_Q,V\left(E'_Q \right), e',\ell'_v)$, the \ee query with input $P$  can be answered as follows: there exists an edge $\{x,y\}\in P$ with $\{x,y\} \in E(G)$ if and only if $e'(\{x,y\})=1$.

\item[$\size{P}> \tau$:] In this case, the \staree oracle finds a random subset $P' \subseteq P$ such that $\size{P'}=\tau$. It checks if there is an $\{x,y\}\in P'$ such that $\{x,y\}$ is an edge. If yes, the updated data structure contains the the information about each $\{x,y\} \in P'$ whether it forms an edge. In this case, we can report that there exists an $\{x,y\} \in P$ such that $\{x,y\}$ is an edge in $G$. If there is no $\{x,y\}\in P'$ such that $\{x,y\}$ is an edge, then (by the description of \ee oracle and its interplay with the algorithm) the \staree oracle sends the data structure corresponding to the entire graph. Obviously, we can report whether there exists an $\{x,y\}\in P$ such that $\{x,y\} \in E(G)$ or not.  

\end{description}

Hence, in any case, we can report the answer to \ee query with input $P$.
\end{proof} }
We are left with proving the following technical lemma. As noted earlier, this will imply Lemma~\ref{lem:decide}.
 \remove{ 
\begin{obs}\label{obs:lem_eqv_aug}
 Let $\alg$ be a deterministic algorithm that has \ee query access to a graph $G \sim \yesdist \cup \nodist$ and decides $$\pr_{G \sim \nodist}(\mbox{\alg (G) reports {\sc NO}})-\pr_{G \sim \yesdist}(\mbox{\alg (G) reports {\sc NO}}) \geq \frac{1}{3}-o(1).$$
Then there is an algorithm \staralg that has \staree query access to a graph $G \sim \yesdist \cup \nodist$ and decides $$\pr_{G \sim \nodist}(\mbox{\staralg (G) reports {\sc NO}})-\pr_{G \sim \yesdist}(\mbox{\staralg (G) reports {\sc NO}}) \geq \frac{1}{3}-o(1).$$
More over, the number of \staree queries made bu \staralg is at most the number of \ee queries made by \alg.
\end{obs}
\begin{proof}
Observe that we will be done by showing that each \ee query can be simulated by \staree query. Let us consider an \ee query with input $P \subset {V(G) \choose 2}$. We make a \staree query with the same input $P$. By the description of \staree oracle, it outputs as follows:
\begin{description}
\item[$\size{P}\leq \tau$:] The oracle updates the data structure and let $(E_Q,V\left(E_Q \right), e,\ell_v)$ be the updated data structure. It contains the the information about each $\{x,y\} \in P$ whether it forms an edge. So, from the data structure, the \staree with input $P$ can be answered as follows: there exists an edge $\{x,y\}\in P$ that forms an edge in $G$ if and only if $e(\{x,y\})=1$.

\item[$\size{P}> \tau$:] In this case the oracle find a random subset $P' \subseteq P$ such that $\size{P'}=\tau$.  The oracle checks there is an $\{x,y\}\in P'$ such that $\{u,v\}$ is an edge. If yes, the updated data structure contains the information about each $\{x,y\} \in P'$ whether it forms an edge. In this case, we can report that there exists an $\{x,y\} \in P$ such that $\{x,y\}$ is an edge. If there is no $\{x,y\}\in P'$ such that $\{x,y\}$ is an edge, then the oracle sends the data structure corresponding to the entire graph. Hence, obviously, we can report whether there exists an $\{x,y\}\in P$ such that $\{x,y\}$ is an edge.  

\end{description}

Hence, in any case, we can report the answer to \ee query with input $P$
\end{proof}
}

\begin{lem}[{\bf Lower bound on the number of \staree queries when $G \sim \yesdist \cup \nodist$}]\label{lem:decide-star}
Let the unknown graph $G$ be such that $G \sim \yesdist$ and $G \sim \nodist$ hold with equal probabilities. Consider any deterministic algorithm \staralg that has \staree access to $G$, and makes $q=o\left(\frac{m^{3/2}}{t}\frac{1}{\log^2  n}\right)$  \staree queries to $G$. Then
$$\pr_{G \sim \nodist}(\mbox{\staralg(G) reports {\sc NO}})-\pr_{G \sim \yesdist}(\mbox{\staralg(G) reports {\sc NO}}) \leq o(1).$$

\end{lem}

\subsubsection{Proof of Lemma~\ref{lem:decide-star}}
\noindent
For clarity of explanation, we first describe \staralg{} as a decision tree. Then we will prove Lemma~\ref{lem:decide-star}.
 \subsubsection*{Decision tree view of \staralg:} 
 \begin{itemize}
     \item Each internal node of $\cT$ is labeled with a nonempty subset ${V(G) \choose 2}$ and each leaf node is labeled with {\sc YES} or {\sc NO};
     \item Each edge in the tree is labeled with a data structure $(E_Q,V\left(E_Q \right), e,\ell_v)$;
     \item  The algorithm starts the execution from the root node $r$ by setting $r$ as the current node. Note that for the root node  $r$, $E_Q=V\left(E_Q \right)=\emptyset$  and $e$ and $\ell_v$ are the trivial functions. As the algorithm \staralg is deterministic, the first \staree query is same irrespective of the graph $G \sim \yesdist \cup \nodist$ that we are querying. By making that query, we get an updated data structure from the oracle and let $\{r,u\}$ be the edge that is labeled with the updated data structure. Then \staralg sets $u$ as the current node.
     \item If the current node $u$ is not a leaf node in $\cT$, \staralg makes a \staree query with a subset $P \subseteq {V(G) \choose 2}$, where $P$ is determined by the label of the node $u$. Note that $P$ satisfies the condition described in Remark~\ref{rem:assumption}. The oracle updates the knowledge 
     structure and \staralg moves to a child of $u$ depending on the updated data structure;
     \item If the current node $u$ is a leaf node in $\cT$, report {\sc YES} or {\sc NO} according to the label of $u$.
 \end{itemize}
 
 Now, we define the notion of \emph{good} and \emph{bad} nodes in $\cT$.  The following definition is inspired from Definition~\ref{defi:bad-dist}.
 \begin{defi}[{\bf Bad node in the decision tree}]\label{defi:badnode}
 Let $u$ be a node of $\cT$ and $(E_Q,V\left(E_Q \right), e,\ell_v)$ be the current data structure. $u$ is said to be \emph{good} if there does not exist $x$ in $V\left(E_Q \right)$ such that $\ell_v(x)=C'$. Otherwise, $u$ is a \emph{bad} node.
 \end{defi}
   
If $G \sim \yesdist$, then \staralg will never encounter a bad node. In other words, when \staralg reaches a bad node of the tree $\cT$, then it can (easily) decide $G \sim \nodist$. However, the inverse in not true. From this fact, consider two claims (Claims~\ref{clm:inter1} and~\ref{clm:inter2}) about the traversal of the decision tree $\cT$ when the graph $G \sim \yesdist \cup \nodist$. These claims  will be useful to show Lemma~\ref{lem:decide-star}. Intuitively, Claim~\ref{clm:inter1} says that the probability of reaching a bad node  is very low when $G \sim \nodist$. Claim~\ref{clm:inter2} says that the probability to reach any particular good node is more when $G \sim \yesdist$ as compared to that of when $G \sim \nodist$.
   
   \begin{cl}[{\bf Probability of reaching a bad node is very low when $G \sim \nodist$}]\label{clm:inter1}
   Let $G \sim \nodist$. Then the probability that \staralg reaches a bad node of the decision tree is $o(1)$. That is, 
   $$\pr_{G \sim \nodist}(\mbox{\staralg  reaches a bad node})=o(1).$$
   \end{cl}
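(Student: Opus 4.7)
The plan is to split the event ``\staralg\ reaches a bad node of $\cT$'' into two sub-events: (i) the oracle sends a {\sc Failure} signal on at least one query, and (ii) the run never encounters {\sc Failure} but some vertex that is added to $V(E_Q)$ gets label $C'$. I would bound each by $o(1)$ and union-bound.

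For (ii), I would invoke the principle of deferred decisions. Because an unrevealed vertex $v$ appears nowhere in the data structure $(E_Q,V(E_Q),e,\ell_v)$, conditional on the entire past transcript its label is still distributed as in the prior under $\nodist$: uniform over $\{A,B,C,D\}$, and conditional on lying in $C$ it joins $C'$ independently with probability $32t/m^{3/2}$. So the fresh label handed to $v$ the first time $v$ enters $V(E_Q)$ equals $C'$ with probability at most $\tfrac{1}{4}\cdot\tfrac{32t}{m^{3/2}}=\tfrac{8t}{m^{3/2}}$. Each query grows $V(E_Q)$ by at most $2\tau$ new vertices (the effective input has at most $\tau$ pairs, each contributing at most two vertices), so in a no-failure run the expected number of revealed $C'$-vertices summed over the $q$ queries is at most
\[
2q\tau\cdot\frac{8t}{m^{3/2}} \;=\; \frac{400\,q\,t\log^{2} n}{m^{3/2}} \;=\; o(1),
\]
using $q=o\!\bigl(m^{3/2}/(t\log^{2} n)\bigr)$ and $\tau=25\log^{2} n$; Markov then bounds (ii) by $o(1)$.

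For (i), I would use Observation~\ref{obs:prob-edge} together with the conventions of Remark~\ref{rem:assumption}: every pair $\{x,y\}\in P$ has at most one of $x,y$ in $V(E_Q)$ and is therefore an edge of $G$ with probability at least $1/4$ in the still-fresh randomness. A query with $|P|>\tau$ fails exactly when the uniformly random $\tau$-subset $P'\subseteq P$ contains no edge of $G$. The main obstacle I anticipate lies precisely here: the edge indicators across different pairs of $P$ are correlated because pairs can share an unrevealed vertex and hence share label randomness, so the target bound $(3/4)^{\Omega(\tau)}$ cannot be read off directly. My plan is to first condition on the labels of the small ``core'' of vertices that lie in many pairs of $P$ (together with the already-known labels of $V(E_Q)$); after this conditioning the remaining edge indicators decouple and still have marginals at least $1/4$, so a matching-style extraction inside $P'$ produces $\Omega(\tau)$ genuinely independent pairs with high probability, yielding a per-query failure probability of at most $(3/4)^{\Omega(\tau)}=n^{-\Omega(\log n)}$. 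A union bound over the $q$ queries then gives total failure probability $o(1)$, and combining this with the bound on (ii) completes the proof of the claim.
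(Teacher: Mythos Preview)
Your decomposition into (i) the {\sc Failure} event and (ii) the event that some newly revealed vertex receives label $C'$ is exactly how the paper organizes the argument, and your treatment of (ii) via deferred decisions---each of at most $2\tau q$ fresh vertices gets label $C'$ with probability at most $\tfrac14\cdot\tfrac{32t}{m^{3/2}}$, so the expected number of $C'$-hits is $o(1)$---matches the paper's calculation.

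The gap is in your plan for (i). The step ``condition on a small core and then a matching-style extraction inside $P'$ produces $\Omega(\tau)$ genuinely independent pairs'' does not work as stated: if $P'$ happens to be a star with center $o$ and $\tau$ leaves, the only sensible core is $\{o\}$, and after removing $o$ there are no pairs left, so no matching of any positive size can be extracted. In particular the bound $(3/4)^{\Omega(\tau)}=n^{-\Omega(\log n)}$ that you claim is not what one can (or needs to) establish here. The paper replaces this step by the elementary combinatorial fact that any graph with $\tau=25\log^2 n$ edges contains either a matching of size $\sqrt{\tau}=5\log n$ or a star of size $5\log n$. In the matching case the $5\log n$ edge indicators are independent because the pairs are vertex-disjoint and, by Remark~\ref{rem:assumption}(i), each pair contains at least one unlabeled vertex, so Observation~\ref{obs:prob-edge} applies to each independently. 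In the star case, Remark~\ref{rem:assumption}(i)--(ii) forces that among the center $o$ and the $5\log n$ leaves at most one vertex is already labeled; conditioning on $o$'s label then makes the $5\log n$ star-edge indicators conditionally independent, each still with marginal at least $1/4$. Either way the per-query failure probability is at most $(3/4)^{5\log n}=n^{-\Omega(1)}$, which survives the union bound over the $q=o\!\left(\tfrac{m^{3/2}}{t\log^2 n}\right)$ queries. Your intuition about conditioning to decouple is correct, but the right combinatorial vehicle is the matching-or-star dichotomy on $P'$, not a matching extracted after removing a core.
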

   
   \begin{cl}[{\bf A technical claim to prove Lemma~\ref{lem:decide-star}}]\label{clm:inter2}
    For any good node in the decision tree $\cT$, the following holds:
   $$ \pr_{G \sim \nodist}(\mbox{\staralg reaches } v) \leq \pr_{G \sim \yesdist}(\mbox{\staralg reaches } v).  $$
   \end{cl}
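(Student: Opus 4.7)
The plan is to prove Claim~\ref{clm:inter2} via a coupling between $\yesdist$ and $\nodist$ based on Remark~\ref{rem:yes-no}. We first sample $G \sim \yesdist$ by drawing the partition $A, B, C, D$ and the Bernoulli edges between $A \cup B$ and $C$. We then extend $G$ to $G' \sim \nodist$ by selecting $C' \subseteq C$, including each $x \in C$ independently with probability $32t/m^{3/2}$, and adding every edge from $A \cup B$ to $C'$. Finally, for each query $P_i$ satisfying $\size{P_i} > \tau$, we share the oracle's internal randomness across the two executions so that the same subsample $P_i' \subseteq P_i$ of size $\tau$ is drawn under both $G$ and $G'$. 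The marginal of $G$ is $\yesdist$, the marginal of $G'$ is $\nodist$, and $E(G') \setminus E(G)$ consists only of edges incident to $C'$.

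Under this coupling, we claim that whenever \staralg reaches $v$ on input $G'$ it also reaches $v$ on input $G$; taking marginals then yields the desired inequality. We establish the containment by induction along the root-to-$v$ path. At a node $u$ on the path, assume that both executions have produced the same data structure so far. Since \staralg is deterministic, both executions issue the same next query $P$ at $u$. Let $Q$ denote the set of pairs the oracle actually processes: $Q = P$ when $\size{P} \leq \tau$, and $Q = P_i'$ (the coupled subsample) otherwise. Because $v$ is good, the $G'$-execution reaching $v$ forces $V(Q) \cap C' = \emptyset$, for otherwise processing $Q$ would assign the label $C'$ to a vertex, contradicting the goodness of the current node. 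Since every edge in $E(G') \setminus E(G)$ is incident to $C'$, each pair in $Q$ has the same edge status under $G$ and $G'$ and each vertex in $V(Q)$ receives the same label under both graphs. Thus the data-structure updates agree, and the induction propagates to the next node on the path.

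The only issue left to address is a {\sc Failure} response, which happens when $\size{P} > \tau$ and the coupled subsample $P_i'$ contains no edge of the underlying graph. In the $G'$-execution a {\sc Failure} would reveal the entire graph, exposing $C' \neq \emptyset$ (which holds with probability $1-o(1)$ by Observation~\ref{obs:edge}(iii)) and placing \staralg at a bad node rather than the good $v$; hence no {\sc Failure} occurs along the root-to-$v$ path under $G'$. Because $E(G) \subseteq E(G')$, the witnessing non-$C'$ edge in $P_i'$ (which must exist on the good path, by the previous paragraph) is also present in $G$, so the coupled $G$-execution likewise avoids {\sc Failure}. Combining the induction with the marginal correspondence yields $\pr_{G \sim \nodist}(\mbox{\staralg reaches } v) \leq \pr_{G \sim \yesdist}(\mbox{\staralg reaches } v)$, establishing Claim~\ref{clm:inter2}. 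The conceptual point, and the main subtlety to get right, is that the two coupled executions can diverge only at a step where the $G'$-execution reveals a $C'$ vertex, which by definition never happens along a path terminating at a good node; treating the oracle's subsample $P_i'$ as shared randomness rather than as graph-dependent randomness is what makes this argument go through cleanly.
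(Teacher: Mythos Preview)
Your proof is correct and follows essentially the same coupling approach as the paper, which also invokes Remark~\ref{rem:yes-no} to embed $\yesdist$ into $\nodist$ and then argues (via a set containment $\mathcal{H}_{v,\text{no}} \subseteq \mathcal{H}_v'$) that any NO-instance reaching the good node $v$ corresponds to a YES-instance doing the same. Your presentation is arguably cleaner in one respect: you explicitly couple the oracle's subsampling randomness $P_i'$ across the two executions and carry an induction along the root-to-$v$ path, whereas the paper's set-based formulation leaves the handling of this internal randomness implicit.
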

   The proofs of Claims~\ref{clm:inter1} and~\ref{clm:inter2} are non trivial and are in Appendix~\ref{sec:miss-proofs} due to paucity of space.
\remove{In the next section, we first prove , then  Lemma~\ref{lem:decide-star} by using Claims~\ref{clm:inter1} and~\ref{clm:inter2}.}

Now, we will prove Lemma~\ref{lem:decide-star}.
\begin{proof}[Proof of Lemma~\ref{lem:decide-star}] Let $\noleaf$ denote the set of leaf nodes of the decision tree $\cT$ that are labeled $\mbox{{\sc NO}}$. Also, let $\cL_g \subseteq \noleaf$ be the set of leaf nodes that are good and labeled as  $\mbox{{\sc NO}}$.
\begin{align*}
& \pr_{G \sim \nodist}(\mbox{{\staralg (G)} reports {\sc NO}}) \\
&\leq \sum\limits_{v \in \noleaf} \pr_{G \sim \nodist}(\mbox{\staralg$(G)$ reaches $u$})\\
&= \sum\limits_{u \in \cL_g}\pr_{G \sim \nodist}(\mbox{\staralg$(G)$ reaches $u$}) + \sum\limits_{u \in \noleaf \setminus \cL_g}\pr_{G \sim \nodist}(\mbox{\staralg$(G)$ reaches $u$})\\
&\leq \sum\limits_{u \in \cL_g}\pr_{G \sim \nodist}(\mbox{\staralg$(G)$ reaches $u$}) + \pr_{G \sim \nodist}(\mbox{\staralg$(G)$ reaches a bad node})\\
&\leq \sum\limits_{u \in \cL_g}\pr_{G \sim \yesdist}(\mbox{\staralg$(G)$ reaches $u$}) + o(1)\quad\quad(\mbox{By Claims~\ref{clm:inter1} and Claims~\ref{clm:inter2} }) \\
&\leq \pr_{G \sim \yesdist}(\mbox{\staralg$(G)$ reports {\sc NO}}) +o(1)
\end{align*}
\end{proof}
\subsection{Proof of Lemma~\ref{theo:clq-count-low}}\label{sec:pf-low}
\noindent
We assume that $t=\omega(\log ^7 n)$. Otherwise, as $\tOm(\cdot)$ hides a multiplicative term of $\frac{1}{\poly (\log n)}$, the stated lower bound is trivial.
Assume, for a contradiction, that there is an algorithm ${\cal A}$ for $t < \frac{m \log n}{8}$ such that
\begin{itemize}
\item it has \ee oracle access to a graph $G_1(V_1,E_1)$ with $\Theta(\sqrt{m})$ vertices and $\Theta(m)$ edges;
\item makes  $o\left(\sqrt{t}\frac{1}{\log^{3.5} n}\right)$ \ee queries;
\item decides whether the number of triangles in $G_1$ is at most $t$ or at least $2t$ with a probability of at least $2/3$.
\end{itemize}
Now we give an algorithm ${\cal A}'$ for 

\begin{itemize}
\item it has \ee oracle access to a graph $G_2(V_2,E_2)$ with $ 4\sqrt{t/\log n}$ vertices and $8t/\log n$ edges, that is, $t=\frac{|E_2| \log n}{8}$;
\item makes  $o\left(\frac{\sqrt{t}}{\log ^{3.5} n}\right)=o\left(\frac{\size{E_2}^{3/2}}{t}\frac{1}{\log ^2 n}\right)$  \ee queries;
\item decides whether the number of triangles in $G_2$ is at most $t$ or at least $2t$ with a probability of at least $2/3$.

\end{itemize}
\paragraph*{Description of ${\cal A}'$ using ${\cal A}$:}
\begin{itemize}
\item Let the unknown graph be $G_1=G_2 \cup G'$ such that $V(G_1)=V(G_2)\sqcup V(G')$ and $E(G_1)=E(G_2)\sqcup E(G')$, where $G'$ is a graph having $\Theta(\sqrt{m-t/\log n})$ vertices (disjoint from $V(G_2)$),  $\Theta\left(\sqrt{(m-8t/\log n)}\right)$  edges, and no triangles~\footnote{The constants in $\Theta(\sqrt{m-8t/\log n})$ and $\Theta(m-8t/\log n)$ are chosen suitably such that the graph $G_1$ satisfies the requirement of $\cA$.}. The number of vertices and edges in $G_1$ are $\Theta(\sqrt{m})$ and $\Theta(m)$, respectively. Also,  the number of triangles in $G_1$ is same as in $G_2$;
\item  As an \ee query to $G_1$ can be answered using one \ee query to $G_2$, we can consider having \ee query access to graph $G_1$;
\item  We run algorithm ${\cal A}$ assuming $G_1$ as the unknown graph;
\item We report the output of ${\cal A}$ as the output of ${\cal A}$.
\end{itemize}
The correctness of ${\cal A}'$ follows from the correctness of ${\cal A}$. The number of queries made by the  
algorithm ${\cal A}'$ is $o\left({\sqrt{t}}\frac{1}{\log^ {3.5} n}\right)$. Recalling that ${\cal A}$ works on graph $G_2(V_2,E_2)$ satisfying $t=\frac{|E_2| \log n}{8}$  and by Lemma~
\ref{theo:clq-count}, algorithm ${\cal A}'$ does not exist as such an algorithm requires at least $\Omega
\left(\frac{\size{E_2}^{3/2}}{t}\frac{1}{\log ^2 n}\right)$ \ee queries, which is $\Omega\left(\frac{{\sqrt{t}}}{\log^ {3.5} n}\right)$. 

Hence, we are done with the proof of Lemma~\ref{theo:clq-count-low}.

\section{Upper bound for estimating triangles using \bis}
\label{sec:ub-triangle}
\noindent
xIn this Section, we prove Theorem~\ref{theo:bis-ub}, which is formally stated as follows:
\begin{theo}[{\bf Upper bound matching the lower bound in Theorem~\ref{theo:tri-lb}}]\label{theo:ub}
There exists an algorithm $\test(G,\eps)$ that has \bis query access to a graph $G(V,E)$ having $n$ vertices, takes a parameter $\eps \in (0,1)$ as input, and reports a $(1 \pm \eps)$-approximation to the number of triangles in $G$ with a probability of at least $1-o(1)$. Moreover, the expected number of \bis queries made by the algorithm is $\tOh\left(\min\{\frac{m}{\sqrt{T}},\frac{m^{3/2}}{T}\}\right)$, where $m$ and $T$ denote the number of edges and triangles in $G$, respectively.
\end{theo}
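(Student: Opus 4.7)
The plan is to build $\test(G,\eps)$ as a wrapper that invokes one of two sub-routines, $\alghigh(G,\eps)$ and $\alglow(G,\eps)$, depending on the relative sizes of $m$ and $T$. Concretely, $\alghigh(G,\eps)$ will handle the regime $T = \Omega(m)$ with $\tOh(m^{3/2}/T)$ \bis queries, while $\alglow(G,\eps)$ will handle $T = \Oh(m)$ with $\tOh((m+T)/\sqrt{T})$ queries; in both regimes, the claimed bound $\tOh(\min\{m/\sqrt{T},\,m^{3/2}/T\})$ is met because the two expressions cross precisely at $T = \Theta(m)$. Since a $\tOh(1)$-query estimate of $m$ is available via the \bis edge-estimation result of Beame et al.\ (Table~\ref{table:results-bis-is}), I will assume a constant-factor estimate $\widehat{m}$ of $m$ is computed as a pre-processing step.

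For $\alghigh$, the idea is to port the Assadi--Kapralov--Khanna triangle estimator, which is designed for the {\sc Local} plus {\sc Random Edge} model, into the \bis setting. The port rests on three simulation primitives which I will verify can each be implemented with $\tOh(1)$ \bis queries with failure probability $o(1)$: approximate {\sc Degree} (via binary search on \bis with the vertex singleton on one side), approximate {\sc Random Neighbor} (via random bipartition and recursion inside the side containing at least one edge), and approximate {\sc Random Edge} (by first picking a random vertex biased by its approximate degree, and then a random neighbor). The external results cited as Corollary~\ref{coro:bis} and Proposition~\ref{pro:bis-ub3} give me precisely these simulations; substituting them into the Assadi et al.\ algorithm preserves its approximation guarantee while inflating its $\tOh(m^{3/2}/T)$ query cost only by polylogarithmic factors, yielding Lemma~\ref{lem:tri-count}.

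For $\alglow$, the plan is to follow the two-pass streaming approach of McGregor, Vorotnikova, and Vasudevan. Their algorithm samples a subset of edges or vertices, inspects the induced subgraph, and counts triangles locally; the key primitive I need in the \bis model is the ability to enumerate $E(G[X])$ for a chosen $X \subseteq V(G)$ using only $\tOh(|E(G[X])|)$ \bis queries, which is exactly Proposition~\ref{pro:bis-ub1}. Using this enumeration at each relevant step of the streaming algorithm and summing the query costs over the sampled sets, the total query complexity becomes $\tOh((m+T)/\sqrt{T})$, establishing Lemma~\ref{lem:tri-count-low}.

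Finally, $\test$ itself must decide which sub-routine to call without knowing $T$. The standard doubling-guess trick resolves this: start with a guess $L = \binom{n}{3}/2$, run $\alghigh$ or $\alglow$ depending on whether $L \geq c\widehat{m}$ or not, treat the returned estimate $\widehat{T}$ as consistent with $L$ if $\widehat{T} \in [\Omega(L), L]$, and otherwise halve $L$ and repeat. Since the cost of the $i$-th call is dominated by the cost of the final successful call (geometric series), the overall expected query complexity matches $\tOh(\min\{m/\sqrt{T},\,m^{3/2}/T\})$. The main obstacle I anticipate is in the high-triangle regime: not only must the three approximate primitives be simulated correctly, but the union bound over the many primitive calls made by the Assadi et al.\ algorithm has to be argued carefully so that the overall failure probability remains $o(1)$ without blowing up the polylogarithmic factors; a careful choice of per-call failure budget $1/\poly(n,1/\eps)$ in each simulation will be needed, and verifying that this does not break the Assadi et al.\ analysis is where most of the work will lie.
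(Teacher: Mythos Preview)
Your proposal is correct and follows essentially the same approach as the paper: two subroutines ($\alghigh$ porting Assadi--Kapralov--Khanna via the \bis simulations in Corollary~\ref{coro:bis} and Proposition~\ref{pro:bis-ub3}, and $\alglow$ porting the McGregor~\etal\ streaming algorithm via the edge-enumeration primitive of Proposition~\ref{pro:bis-ub1}), combined by a geometric search on a guess $L$ starting at $\binom{n}{3}/2$ after first estimating $m$ with $\tOh(1)$ \bis queries. The only minor deviation is that your sketch of {\sc Apx Random Edge} (degree-biased vertex then random neighbor) is not the route the paper takes---it simply invokes Proposition~\ref{pro:bis-ub3} as a black box---but since you also cite that proposition this is immaterial.
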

In order to prove the above theorem, we first prove  Lemmas~\ref{lem:tri-count} and~\ref{lem:tri-count-low}.\remove{ Note that Lemma~\ref{lem:tri-count} and~\ref{lem:tri-count-low} takes a parameter $L$ as input along with approximation parameter $\eps \in (0,1)$, and the approximation guarantee  is also a function of $L$.} 
\begin{lem}[{\bf Upper bound when the number of triangles is \emph{large}}]\label{lem:tri-count}
There exists an algorithm $\alghigh (G,\eps)$ that has \bis query access to a graph $G(V,E)$ having $n$ vertices, $\eps \in (0,1)$ as input, and reports a $(1 \pm \eps)$-approximation to the number of triangles in $G$ with a probability of at least $1-o(1)$. Moreover, the expected number of \bis queries made by the algorithm is $\tOh\left(\frac{m^{3/2}}{T}\right)$, where $m$ and $T$ denote the number of edges and triangles in $G$, respectively.
\end{lem}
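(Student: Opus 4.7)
The plan is to adapt the triangle estimation algorithm of Assadi, Kapralov and Khanna~\cite{DBLP:conf/innovations/AssadiKK19} to the \bis{} oracle setting. Their algorithm, which runs on the {\sc Local}+{\sc Random Edge} model and achieves $\tOh(m^{3/2}/T)$ queries, conceptually works by sampling edges $e=\{u,v\}$ approximately uniformly, then for each sampled edge using {\sc Degree}, {\sc Random Neighbor}, and {\sc Adjacency} queries to estimate the number of triangles incident to $e$; a carefully chosen rank/ordering on triangles (e.g.~by endpoint degrees) ensures each triangle is counted with the right weight so that the estimator has the claimed expectation, and variance bounds yield concentration via Chebyshev/Chernoff.

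First I would write down a clean \emph{abstract} version of their estimator that takes as subroutines only: (i) an approximate degree oracle returning a $(1\pm\eps)$-multiplicative estimate, (ii) an approximate random-neighbor oracle whose output distribution is within total-variation distance $\eps$ of uniform on $N_G(v)$, (iii) an approximate random-edge oracle whose output distribution is within total-variation distance $\eps$ of uniform on $E(G)$, and (iv) exact adjacency. I would then verify that using these approximate oracles only perturbs the expectation and variance of their estimator by a $(1\pm O(\eps))$ factor and a $\poly(\log n, 1/\eps)$ factor respectively, so that $\tOh(m^{3/2}/T)$ repetitions still suffice.

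Next I would invoke the already-cited black boxes to simulate each subroutine via \bis. Adjacency is a single \bis{} call with the singleton sets $\{u\}$ and $\{v\}$. Approximate degree and approximate random neighbor follow from Corollary~\ref{coro:bis} (one can use the Beame et al.\ edge-estimator on the bipartite graph $G[\{v\},V\setminus\{v\}]$, and combine with a binary-search / subset-sampling scheme for sampling a neighbor), each costing $\tOh(1)$ \bis{} queries with failure probability $o(1/\poly(n))$. The approximate random-edge oracle is provided by Proposition~\ref{pro:bis-ub3} (the affirmative resolution of Q2), again at $\tOh(1)$ \bis{} cost per sample. Multiplying the $\tOh(m^{3/2}/T)$ outer-loop sample complexity by the $\tOh(1)$ per-simulation cost and taking a union bound over all simulated calls yields the desired $\tOh(m^{3/2}/T)$ overall \bis{} query bound and $1-o(1)$ success probability.

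The main technical obstacle will be controlling the effect of \emph{non-uniformity} in the simulated random-edge and random-neighbor oracles: naively propagating a TV-distance error of $\eps$ through an estimator with large variance could blow up the number of samples by an uncontrolled polynomial factor. To handle this I plan to use a coupling argument — couple each simulated sample with an idealized uniform sample, pay an additive $\eps\cdot (\text{number of samples})$ in total variation, and show that with $\eps$ set to $1/\poly(\log n, 1/\vareps)$ the failure event contributes only $o(1)$ to the error of the final estimator. The remaining technical care is in bookkeeping the weights (so every triangle is counted exactly once in expectation) when the approximate oracles slightly miscount the rank of an edge; I would bound this by a monotonicity argument showing that a $(1\pm\eps)$ error in degree changes the rank of at most an $O(\eps)$-fraction of edges incident to any triangle.
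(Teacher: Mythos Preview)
Your proposal is correct and follows essentially the same route as the paper: reduce to the Assadi--Kapralov--Khanna estimator, observe that it tolerates $(1\pm\eps)$-approximate versions of the {\sc Degree}, {\sc Random Neighbor}, and {\sc Random Edge} oracles, and simulate each such call with $\tOh(1)$ \bis{} queries via Corollary~\ref{coro:bis} and Proposition~\ref{pro:bis-ub3} (with adjacency handled directly). The paper's own proof is in fact terser than yours---it simply asserts the AKK algorithm ``can be suitably modified'' for approximate oracles---so your coupling and rank-perturbation sketches are additional care rather than a different argument.
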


\begin{lem}[{\bf Upper bound when the number of triangles is \emph{small}}]\label{lem:tri-count-low}
There exists an algorithm $\alglow (G,\eps)$ that has \bis query access to a graph $G(V,E)$ having $n$ vertices, $\eps \in (0,1)$ as input, and reports a $(1 \pm \eps)$-approximation to the number of triangles in $G$ with a probability of at least $1-o(1)$. Moreover, the expected number of \bis queries made by the algorithm is $\tOh\left(\frac{m+T}{\sqrt{T}}\right)$,  where $m$ and $T$ denote the number of edges and triangles in $G$, respectively. 
\end{lem}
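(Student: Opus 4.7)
The plan is to adapt the two-pass streaming triangle-counting algorithm of McGregor, Vorotnikova, and Vasudevan~\cite{DBLP:conf/pods/McGregorVV16} to the BIS query model, using Proposition~\ref{pro:bis-ub1} (edge enumeration of induced subgraphs at output-proportional cost) as the workhorse primitive. The target bound $\tOh((m+T)/\sqrt T) = \tOh(m/\sqrt T) + \tOh(\sqrt T)$ will split naturally into an ``edge-sampling'' budget of $\tOh(m/\sqrt T)$ and a ``triangle-processing'' budget of $\tOh(\sqrt T)$.

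First I would obtain constant-factor estimates $\widehat m$ and $\widehat T$ of $m$ and $T$: $\widehat m$ comes from the polylogarithmic BIS edge estimator of Beame et al., while $\widehat T$ is supplied by the outer routine $\test$ through the usual geometric-search trick (so $\alglow$ only has to be correct conditional on a good guess for $T$). Set the sampling rate $p = \Theta(\log^{c} n / (\eps^{2} \sqrt{\widehat T}))$ for a suitable constant $c$. Using a BIS-based almost-uniform edge sampler, draw a subset $E_1 \subseteq E(G)$ by including each edge independently with probability $p$, so that $|E_1| = \tOh(m/\sqrt T)$ in expectation and the sampling phase costs $\tOh(m/\sqrt T)$ BIS queries.

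Next, for each sampled edge $e = \{u,v\} \in E_1$, I would compute $\tau(e) := |N_G(u) \cap N_G(v)|$, the number of triangles through $e$. This is done by instantiating Proposition~\ref{pro:bis-ub1} on an induced subgraph $G[X_e]$ carefully built around $\{u,v\}$ whose edges reveal the common neighbors of $u$ and $v$. The construction must guarantee $|E(G[X_e])| = \tOh(\tau(e))$, so that, taking expectations over the edge sample,
\begin{equation*}
\E\Bigl[\sum_{e \in E_1} |E(G[X_e])|\Bigr] = \tOh\Bigl(p \sum_{e \in E(G)} \tau(e)\Bigr) = \tOh(p T) = \tOh(\sqrt T),
\end{equation*}
matching the desired triangle-processing budget. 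Output $\hat T := (3p)^{-1} \sum_{e \in E_1} \tau(e)$; linearity of expectation gives $\E[\hat T] = T$, and a McGregor-style second-moment bound (accounting for correlations between sampled edges lying in the same triangle) yields $\hat T \in (1 \pm \eps) T$ with probability $1-o(1)$ after standard median-boosting.

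The principal obstacle is the output-sensitive construction of $X_e$ in the third step. A naive choice such as $X_e = \{u\} \cup N_G(u)$ already costs $\tOh(\deg(u))$ to enumerate, which aggregates to $\tOh(m^{3/2}/\sqrt T)$ and therefore fails to meet the target. The delicate part will be to interleave BIS probes against $u$ and $v$ so that vertices in $N_G(u) \setminus N_G(v)$ are pruned before being paid for in full, making the per-edge cost scale with $\tau(e)$ rather than $\min(\deg(u), \deg(v))$. A secondary, more routine, difficulty is the variance analysis of $\hat T$ under edge-sampling, where triangles sharing an edge induce dependence; I expect the two-pass decomposition of McGregor et al.\ to carry over to the BIS setting with only cosmetic changes.
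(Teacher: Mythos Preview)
There are two real gaps.

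\textbf{Variance.} The estimator $\hat T=(3p)^{-1}\sum_{e\in E_1}\tau(e)$ does not concentrate in general: its variance is $\Theta\bigl(p^{-1}\sum_{e}\tau(e)^2\bigr)$, and a single edge $e^\star$ lying in all $T$ triangles already gives $p^{-1}\tau(e^\star)^2=\Theta(\eps^{2}T^{5/2}/\mathrm{polylog})\gg\eps^2T^2$. This is exactly why the McGregor~\etal\ algorithm uses a \emph{vertex} sample in addition to an edge sample: the vertex sample $S$ catches triangles on heavy edges (if $|\Gamma(e)|\gtrsim\sqrt T$ then $\Gamma(e)\cap S\neq\emptyset$ with high probability), while the edge sample only has to handle light edges, for which $\sum_{e\text{ light}}\tau(e)^2=\Oh(\sqrt T\cdot T)$. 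Your plan keeps only the edge-sampling side and labels the variance analysis ``routine''; without the heavy/light split and the vertex-sample component it is not, and the issue is heavy edges rather than ``triangles sharing an edge.''

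\textbf{Per-edge cost.} The claimed $\tOh(\tau(e))$ cost for computing $\tau(e)$ is not achieved by ``interleaving \bis probes against $u$ and $v$.'' A \bis query with one side $\{u\}$ (respectively $\{v\}$) only tests whether $u$ (respectively $v$) has \emph{some} neighbour in a candidate set $B$; it cannot test whether they have a \emph{common} neighbour in $B$. If $N_G(u)$ and $N_G(v)$ are disjoint but interleaved---say $N_G(u)$ is the odd and $N_G(v)$ the even elements of $\{1,\dots,2d\}$---then every halving of $B$ still contains a neighbour of each, so any such recursion descends to singletons and spends $\Omega(d)$ queries even though $\tau(e)=0$. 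Enumerating $E(G[\{v\},N_G(u)])$ via Proposition~\ref{pro:bis-ub1} would indeed cost $\tOh(\tau(e))$, but only once $N_G(u)$ is explicitly in hand, which already costs $\tOh(\deg u)$---precisely the overhead you are trying to avoid.

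The paper's \alglow sidesteps both problems by implementing McGregor~\etal\ more literally. It samples a vertex set $S$ at rate $\tOh(1/\sqrt L)$ and, for each $v\in S$, enumerates $N_G(v)$ in full and then $E(G[N_G(v)])$; since $\sum_v\deg(v)=2m$ and $\sum_v|E(G[N_G(v)])|=3T$, the expected costs are $\tOh(m/\sqrt L)$ and $\tOh(T/\sqrt L)$. For the edge sample $F$ it never attempts to compute $\tau(e)$ for a single sampled edge; instead it enumerates $E(G[N_G(v)\cap V(F)])$ for each $v\in V(F)$, i.e.\ it detects triangles with \emph{two} edges already in $F$, so every set handed to Proposition~\ref{pro:bis-ub1} is an explicit subset of the known set $V(F)$.
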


\remove{If $L$ (in the above lemma) is a promised lower bound, then the algorithms $\alghigh (G,\eps)$ and $\alglow (G,\eps)$  produces $(1\pm \eps)$-approximation to the number of triangles in $G$, with a probability of at least $1-o(1)$.}

Our final algorithm $\test(G,\eps)$ (as stated in Theorem~\ref{theo:ub}) is a combination of $\alghigh (G,\eps)$ and $\alglow (G,\eps)$. Informally, $\test(G,\eps)$ calls $\alghigh (G,\eps)$ and $\alglow (G,\eps)$ when $T=\Omega(m)$ and $T=\Oh(m)$, respectively. Observe that, if $\test$ knows $T$ within a constant factor, then it can decide which one to use among  $\alghigh (G,\eps)$ and $\alglow(G,\eps)$. If $\test$ does not know $T$ within a constant factor, then it starts from a guess $L={{n \choose 3}}/{2}$ and updates $L$ by making a \emph{geometric} search until the output of $\test$ is consistent with $L$.  Depending on whether $L =\Omega(m)$ or $L=\Oh(m)$, $\test$ decides which one among $\alghigh (G,\eps)$ and $\alglow(G,\eps)$ to call. This guessing technique is standard in the property testing literature. It has been used several times in the literature (for example in ~\cite{GoldreichR08, EdenLRS15}, generalized in~\cite{EdenRS18}, and used directly in~\cite{DBLP:conf/innovations/AssadiKK19}). So, we explain $\alghigh (G,\eps)$ and $\alglow (G,\eps)$ assuming a promised lower bound on $L$, and the respective query complexities will be in terms of $L$ instead of $T$.

Another important thing to observe is that to execute the above discussed steps of algorithm $\test(G,\eps)$  must know $m$. But we note that an estimate of $m$ will be good enough for our purpose, and that can be estimated by using $\tOh(1)$  \bis queries (see Table~\ref{table:results-bis-is}).

\remove{  In particular, $\test(G,\eps)$ starts by calling $\alghigh (G,\eps)$ with setting $L$ approximately $\frac{m^{3/2}}{2}$. From the output $\hat{t}$ of $\alghigh (G,\eps)$,  $\test(G,\eps)$ decides either $(1 \pm \eps)$-approximation to $T$ or $L$ is not a suitable guess for the lower bound of $T$. In that case, $\test(G,\eps)$ reduces $L$ by a constant factor, and call $\alghigh (G,\eps)$ with updated $L$. Note that after some iteration, ether $\test(G,\eps)$ must have quit by reporting an $(1 \pm \eps)$-approximation to $T$ or $L=\Oh(m)$. In the later case, call $\alglow (G,\eps)$, and proceed as we were doing with $\alghigh (G,\eps)$. Note that the total number of iterations will be $\Oh(\log n)$, and the query complexity will be dominated by the query complexity of the last iteration (when $L=\Theta(T)$).} 

In Appendix~\ref{sec:bis-pre}, we discuss some properties of \bis and some tasks it can perform. These properties will be useful while describing our $\alghigh (G,\eps)$ and $\alglow (G,\eps)$, and proving Lemma~\ref{lem:tri-count} and Lemma~\ref{lem:tri-count-low}, in Section~\ref{sec:alg-high} and Section~\ref{sec:alg-low}, respectively. 

\subsection{Some preliminaries about \bis}\label{sec:bis-pre}
\noindent
Let $G(V,E)$ be the unknown graph to which we have \bis query access. One can compute the exact number of edges using $\tOh(\size{E(G)})$ queries~\cite{DBLP:journals/talg/BeameHRRS20} deterministically. Also, we can estimate the number of edges in  graph $G$~\cite{DBLP:journals/talg/BeameHRRS20,abs-1908-04196,DellLM20-soda} and sample an edge  from $G$ \emph{almost uniformly}~\cite{DellLM20-soda}, with a  probability of at least $1-o(1)$, and making $\tOh(1)$ \bis queries. Here, we would like to note that, all three results we mentioned above hold for induced subgraphs as well as induced bipartite subgraph, as formally described below. Those will be used when we design our upper bounds in Appendix~\ref{sec:alg-high} and~\ref{sec:alg-low}.

\begin{pro}[{\bf Exact edge estimation using \bis}~\cite{DBLP:journals/talg/BeameHRRS20}]\label{pro:bis-ub1}
There exists a deterministic algorithm that has \bis query access to an unknown graph 
$G(V,E)$ with $n$ vertices, takes $X \subseteq V(G)$ (alternatively, two disjoint subsets $A,B$) as input, makes $\tOh(\size{E(G[X])})$ (alternatively,  $\tOh(\size{E(G[A,B])})$) \bis queries, and reports all the edges in ${E(G)}$ (alternatively, ${E(G[A,B])}$).
\end{pro}
\begin{pro}[{\bf Approximate edge estimation using \bis}~\cite{abs-1908-04196,DellLM20-soda}]\label{pro:bis-ub2}
There exists an algorithm that has \bis query access to an unknown graph 
$G(V,E)$ with $n$ vertices, takes $X \subseteq V(G)$ (alternatively, two disjoint subsets $A,B$) and a parameter $\eps \in (0,1)$ as inputs, makes $\tOh(1)$ \bis queries, and reports a $(1\pm \eps)$-approximation to $\size{E(G[X])}$ (alternatively, $\size{E(G[A,B])}$), with a probability of at least  $1-o(1)$.
\end{pro}
To state the next proposition, we need the following definition.

\begin{defi}[{\bf Approximate uniform sample from a set}]\label{defi:apx-samp}
 For a nonempty set $X$ and $\eps \in (0,1)$, getting a $(1 \pm \eps)$-approximate uniform sample from $X$ means getting a sample from a distribution on $X$ such that the probability of getting $x \in X$  lies in $[(1-\eps)/\size{X},(1+\eps)/{\size{X}}]$.
 \end{defi}

\begin{pro}[{\bf Approximate edge sampling using \bis}~\cite{DellLM20-soda}]\label{pro:bis-ub3}
There exists an algorithm that has \bis query access to an unknown graph 
$G(V,E)$ with $n$ vertices, takes $X \subseteq V(G)$ (alternatively, two disjoint subsets $A,B$) and a parameter $\eps \in (0,1)$ as inputs, makes $\tOh(1)$ \bis queries, and reports a $(1\pm \eps)$-approximate uniform sample from $E(G[X])$ (alternatively, $E(G[A,B]$), with a probability of at least  $1-o(1)$.
\end{pro}
Observe that the following corollary follows from  Propositions~\ref{pro:bis-ub1},~\ref{pro:bis-ub2} and~\ref{pro:bis-ub3}, by taking $A=\{v\}$ and $B=Z$, where $v \in V(G)$ and $Z \subseteq V(G) \setminus \{v\}$.
\begin{coro}[{\bf \bis query can extract useful information about the neighborhood of a given vertex}]\label{coro:bis}
$~~~~~~~~~~~~~~~~$
\begin{description}
\item[(i) Entire neighbourhood of a vertex using \bis :] There exists a deterministic algorithm that has \bis query access to an unknown graph 
$G(V,E)$ with $n$ vertices, takes $v \in V(G)$ and $Z \subseteq V(G)\setminus \{v\}$ as input, makes $\tOh(\size{N_{G}(v) \cap Z})$ \bis queries, and reports all the neighbors of $v$ in $Z$.
\item[(ii) Approximate degree using \bis :] There exists an algorithm that has \bis query access to an unknown graph 
$G(V,E)$ with $n$ vertices, takes $v \in V(G)$, $Z \subseteq V(G)\setminus \{v\}$ and $\eps \in (0,1)$ as inputs, makes $\tOh(1)$ \bis queries, and reports a $(1 \pm \eps)$-approximation to $\size{N_{G}(v) \cap Z}$, with a probability of at least $1-o(1)$.
\item[(iii) Finding an approximate random neighbor of a vertex using \bis :]  There exists an alg- orithm that has \bis query access to an unknown graph 
$G(V,E)$ with $n$ vertices, takes $v \in V(G)$, $Z \subseteq V(G)\setminus \{v\}$ and $\eps \in (0,1)$ as inputs, makes $\tOh(1)$ \bis queries, and reports a $(1 \pm \eps)$-approximate uniform sample from the set ${N_{G}(v) \cap Z}$, with a probability of at least  $1-o(1)$.
\end{description}
\end{coro}
\subsection{Algorithm \alghigh and proof of Lemma~\ref{lem:tri-count}}\label{sec:alg-high}
\noindent
Algorithm \alghigh is inspired by the triangle estimation algorithm of Assadi \etal ~\cite{DBLP:conf/innovations/AssadiKK19}~\footnote{Actually, they have given an algorithm for estimating the number of copies of any given subgraph of fixed size.} when we have the following query access to the unknown graph.

\begin{description}
\item[{\sc Adjacency Query}:] Given vertices $u,v \in V(G)$ as input, the oracle reports whether $(u,v)$ is an edge or not;
\item[{\sc Degree Query}:] Given  a vertex $u \in V(G)$ as input,  the oracle reports the degree of vertex $u$ in $G$;
\item[{\sc Random Neighbor Query}:] Given  a vertex $u \in V(G)$,  the oracle reports a neighbor of $u$ uniformly at random if the degree of $u$ is nonzero. Otherwise, the oracle reports a special symbol $\perp$;
\item[{\sc Random Edge Query:}] With this query, the oracle reports an edge from the graph $G$ uniformly at random. 
\end{description}

The number of queries to the oracle made by Assadi \etal's algorithm~\cite{DBLP:conf/innovations/AssadiKK19} is $\tOh\left(\frac{m^{3/2}}{L}\right)$, where $m$ denotes the number of edges and $L$ is a promised lower bound on the number of triangles in $G$. Also, note that, the triangle estimation algorithm by Assadi \etal ~\cite{DBLP:conf/innovations/AssadiKK19} can be \emph{suitably} modified even if we have approximate versions of {\sc Degree}, {\sc Random Neighbor} and {\sc Random Edge} queries, as described below.

\begin{description}
\item[{\sc Apx Degree Query}:] Given  a vertex $u \in V(G)$ and $\eps \in (0,1)$  as input,  the oracle reports a $(1 \pm \eps)$-approximation to the degree of  vertex $u$ in $G$;
\item[{\sc Apx Random Neighbor Query}:] Given  a vertex $u \in V(G)$ and $\eps \in (0,1)$  as input, the oracle reports a $(1 \pm \eps)$-approximate uniform sample from $N_G(u)$ if the degree of $u$ is nonzero. Otherwise, the oracle reports a special symbol $\perp$;
\item[{\sc Apx Random Edge Query:}] Given $\eps \in (0,1)$, the oracle reports a $(1 \pm \eps)$-approximate uniform sample from $E(G)$.
\end{description}

From Corollary~\ref{coro:bis}, $\tOh(1)$ \bis queries are enough to simulate 
{\sc Apx Degree Query} and {\sc Apx Random Neighbor Query}, with a probability of at least $1-o(1)$. Also, by Proposition~\ref{pro:bis-ub3}, {\sc Apx Random Edge Query} can be simulated by $\tOh(1)$ \bis queries, with a probability of at least $1-o(1)$. Moreover, a \bis query can trivially simulate an {\sc Adjacency Query}. Combining these facts with the fact that the triangle estimation algorithm by Assadi \etal ~\cite{DBLP:conf/innovations/AssadiKK19} can be {suitably} modified even if we have approximate versions of {\sc Degree}, {\sc Random Neighbor} and {\sc Random Edge} queries, we are done with the proof of Lemma~\ref{lem:tri-count}. 
\subsection{Algorithm \alglow and the proof of Lemma~\ref{lem:tri-count-low}}\label{sec:alg-low}
\noindent
\remove{We assume that $L$ in Lemma~\ref{lem:tri-count-low} is at least $\poly (\log n)$. Otherwise, we find we find all the edges in graph $G$, and hence can compute the exact number of triangles in $G$. Note that, this is possible by Proposition~\ref{pro:bis-ub1} by making $\tOh(m)=\tOh\left(\frac{m}{\sqrt{L}}\right)$  \bis queries.
 Algorithm \alglow starts by finding $\widehat{m}$, an $\left(1 \pm \Oh({\eps})\right)$-approximation to $m=\size{E(G)}$, with a probability of at least $1-o(1)$. By Proposition~\ref{pro:bis-ub2}, this is possible by making $\tOh(1)$  \bis queries. If $\widehat{m}=\tOh(1)$, then we find all the edges in graph $G$ by making $\tOh(m)=\tOh(1)$ \bis queries, and hence compute the exact number of triangles in $G$. Note that, this is possible by Proposition~\ref{pro:bis-ub1}. So, we assume that there are at least $\poly (\log n)$ edges in graph $G$. 
 }
  Algorithm \alglow is inspired by the streaming algorithm for triangle counting by McGregor \etal~\cite{DBLP:conf/pods/McGregorVV16}. Algorithm \alglow extracts a subset of edges by making \bis queries in a specific way as explained below. Later, we discuss that those sets of edges will be enough to estimate the number of triangles in $G$. 
  
\paragraph*{Generating a random sample $S \subseteq V(G)$ and exploring its neighborhood:\\}
\quad Algorithm \alglow  adds each vertex in $V(G)$ to $S$ with probability $\tOh\left(\frac{1}{\sqrt{L}}\right)$. Recall  Corollary~\ref{coro:bis} (i). For each $v \in S$, we find all the neighbors of $v$ (the set $N_G(v)$) by making $\tOh(\size{N_{G}(v)})$ \bis queries. Let $E_S$ be the set of all the edges having at least one end point in $S$, that is, $E_S=\{(v,w):v \in S~\mbox{and}~w \in N_{G}(v)\}$. After finding $E_S$, we do the following. For each $v \in S$, we find all the edges in the subgraph induced by $N_{G}(v)$ by using $\tOh\left(\size{E(G[N_G(v)])}\right)$ \bis queries. This is again possible by Corollary~\ref{coro:bis} (i). Note that $\size{E(G[N_G(v)])}$ is the number of edges in the subgraph of $G$ induced by $N_G(v)$. Let $E_S'$ be the set of all edges present in the subgraph  induced by $N_{G}(v)$ for some $v \in S$, that is, $E_S'=\bigcup\limits_{v \in S}E(G[N_G(v)])$. Later, we argue that the number of  \bis queries that we make to generate $E_S$ and $E_S'$ is bounded in expectation.

Apart from $S, E_S$ and $E_S'$, \alglow extracts some more required edges by making \bis queries, as explained below.

\paragraph*{Generating $F$, a set of $\left(1 \pm \Oh{(\eps)}\right)$-approximate uniform sample from $E(G)$, and exploring the subgraphs induced by sets $N_{G}(v) \cap V(F)$ for each $v \in V(F)$:\\}
 Algorithm \alglow calls the algorithm corresponding to Proposition~\ref{pro:bis-ub3}, for $\tOh\left(\frac{ {m}}{\sqrt{L}}\right)$ times. By this process, we get a set $F$  of $\left(1 \pm \Oh({\eps})\right)$-approximate uniform sample from $E(G)$, with a probability of at least $1-o(1)$. Note that $\size{F}=\tOh\left(\frac{ m}{\sqrt{L}}\right)$, and the number of \bis queries we make to generate $F$ is $\tOh\left(\frac{m}{\sqrt{L}}\right)$. Let $V(F)$ be the set of vertices present in at least one edge in $F$. For each vertex $v \in V(F)$, we find all the edges in the subgraph of $G$ induced by $N_G(v) \cap V(F)$, by using $\tOh\left(\size{E(G[N_G(v) \cap V(F)])}\right)$ \bis queries (see Corollary~\ref{coro:bis} (i)). Note that $\size{E\left(G[N_G(v) \cap V(F)]\right)}$ is the number of edges in the subgraph of $G$ induced by $N_G(v) \cap F$\remove{, or equivalently $T_v$, the number of triangles having $v$ as one of the vertex}. Let $E_{F}$ be the set of all the edges that are present in  subgraphs induced by $N_G(v) \cap V(F)$ for some $v \in V(F)$, that is, 
 $E_F=\bigcup\limits_{v \in V(F)} E(G[N_{G}(v) \cap V(F)])$. Later we show that the expected number of \bis queries needed to find $F$ and $E_F$ is bounded.
 
 In algorithm \alglow, \bis queries are made only to generate $S,E_S,E_S',F$ and $E_F$. After these sets are generated, no more \bis queries are made by the algorithm. We formally prove the query complexity of \alglow. But, first, we show that $S, E_S, E_S', F$ and $E_F$ can be carefully used to estimate $T$, the number of triangles in $G$. 
 
 \paragraph*{Connection with streaming algorithm for triangle counting by McGregor \etal~\cite{DBLP:conf/pods/McGregorVV16}:\\}
 (Estimating the number of triangles from $S, \, E_S, \, E_S', \, F$ and $E_F$)
 
 McGregor \etal~\cite{DBLP:conf/pods/McGregorVV16} gave a two-pass algorithm that estimates the number of triangles in a graph $G$ when the edges of $G$ arrive in an arbitrary order. Moreover, the space complexity of their algorithm is $\tOh\left(\frac{m}{\sqrt{L}}\right)$. Note that their algorithm assumes a lower bound  $L$ on the number of triangles in the graph. The high level sketch of their algorithm is as follows:
 \begin{itemize}
 \item Generate a subset $X$ of $V(G)$ by sampling each vertex in $V(G)$ with probability $\tOh\left(\frac{1}{\sqrt{L}}\right)$;
 \item In the first pass, the edges having at least one vertex in $X$ is found, and let it be $E_X$. Also, in the first pass, a subset of edges $Y$ is generated by sampling each edge with probability $\tOh\left(\frac{1}{{\sqrt{L}}}\right)$;
 \item In the second pass, for each edge $e=\{x,y\}$ in the stream, their algorithm finds the vertices in $X$ with which $e$ forms a triangle. Also, for each edge, $e=\{x,y\}$ in the stream, their algorithm finds the pairs of edges in $Y$ that forms a triangle with $e$. Let $Z$ be the set of \emph{useful} edges in the second pass, that is, the set of edges that either forms a triangle with a vertex in $X$ or forms a triangle with two edges in $Y$. 
 
\end{itemize}  
Note that their algorithm does not talk about the set $Z$. We are introducing it for our analysis. Executing the two passes described above is straightforward. They have proved that performing two passes as described is good enough to estimate the number of triangles in the graph. 
 
 Now, we compare the information maintained by our algorithm with the information
 maintained by McGregor \etal's algorithm. $S$ and $E_S$ in our algorithm \alglow follows the same probability distribution as that of $X$ and $E_X$, respectively, in McGregor \etal 's algorithm. Recall that $F$ is a set of $\tOh\left(\frac{m}{\sqrt{L}}\right)$ $(1\pm \eps)$-approximate sample from $E(G)$ with a probability $1-o(1)$. But $Y$ in McGregor \etal 's algorithm is generated by sampling each edge with probability $\tOh\left(\frac{1}{{\sqrt{L}}}\right)$. But observe that the \emph{total variation} distance between the probability distributions of $F$ and $Y$ is $o(1)$. 

Before discussing about $E_S'$ and $E_F$ in our algorithm \alglow, consider the following observation about the set $Z$. Note that we have defined $Z$ while describing the second pass of McGregor \etal 's algorithm.

\begin{obs}
Consider $X$, $E_X$, and $Y$ generated by the first pass of McGregor \etal 's algorithm. Let $E_X'$ be the edges in the subgraph induced by $N_G(v)$ for some $v \in X$, and let $E_Y$ be the set of edges in the subgraph induced by $N_G(v) \cap V(Y)$. Here $V(Y)$ denotes the set of vertices present in at least one vertex of $Y$. Then for each edge $e \notin E_X' \cup E_Y$, there is neither a vertex in $X$ with {which} $e$ forms a triangle in $G$ nor there are two edges in $Y$ with which $e$ forms a triangle in $G$. Then the set of useful edges $Z$ is $E_X' \cup E_Y$.
\end{obs}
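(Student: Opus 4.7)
The observation is essentially an unpacking of definitions, and I would prove it by establishing both inclusions $Z \subseteq E_X' \cup E_Y$ and $E_X' \cup E_Y \subseteq Z$. Let me sketch both directions; the main content is the forward direction, and the reverse is immediate once one writes it down.

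For the forward inclusion, take any useful edge $e = \{u,w\} \in Z$. By definition of usefulness, either (i) there exists $v \in X$ such that $\{u,w,v\}$ is a triangle in $G$, or (ii) there exist two edges $e_1, e_2 \in Y$ that together with $e$ form a triangle. In case (i), $u,w \in N_G(v)$, so $e \in E(G[N_G(v)])$ for some $v \in X$, giving $e \in E_X'$. In case (ii), writing the triangle as $e = \{u,w\}$, $e_1 = \{u,v\}$, $e_2 = \{w,v\}$, both edges $e_1, e_2$ being in $Y$ forces $u, w, v \in V(Y)$; moreover $u, w \in N_G(v)$, so $e \in E(G[N_G(v) \cap V(Y)])$ and hence $e \in E_Y$.

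For the reverse inclusion, if $e = \{u,w\} \in E_X'$, then there is some $v \in X$ with $u,w \in N_G(v)$, so $\{u,w,v\}$ is a triangle with $v \in X$, making $e$ useful. If $e = \{u,w\} \in E_Y$, then there is some $v \in V(Y)$ with $u, w \in N_G(v) \cap V(Y)$. Since $u, v \in V(Y)$ there is at least one edge of $Y$ incident to each of them; but to certify usefulness via case (ii) I need the two specific edges $\{u,v\}$ and $\{w,v\}$ of the triangle to lie in $Y$. This is not automatic from $u, v, w \in V(Y)$ alone, so strictly speaking the reverse inclusion for the $E_Y$ part holds only for those $e \in E_Y$ whose witness triangle uses a vertex $v$ for which both incident edges $\{u,v\}, \{w,v\}$ were actually sampled into $Y$.

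Because of this subtlety, the cleaner way to state and prove what the observation really needs is the \emph{forward} direction $Z \subseteq E_X' \cup E_Y$ (equivalently, the contrapositive as written in the statement: any $e \notin E_X' \cup E_Y$ is not useful). This is exactly what the second pass of McGregor et al.\ uses: to detect every useful edge it suffices to look inside $E_X' \cup E_Y$. I would therefore present the proof in two short paragraphs: first the contrapositive (if $e \notin E_X'$, no $v \in X$ witnesses a triangle with $e$; if $e \notin E_Y$, no pair of $Y$-edges witnesses a triangle with $e$) obtained by contraposing the two case analyses above, and then note that conversely every edge in $E_X'$ is certified useful via case (i), so $E_X' \subseteq Z$, which gives the equality $Z = E_X' \cup E_Y$ in the form needed for the subsequent triangle-count estimator.
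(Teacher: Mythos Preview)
The paper gives no proof of this observation; it is stated without justification. Your forward inclusion $Z \subseteq E_X' \cup E_Y$ is correct and is exactly what the surrounding argument requires: to simulate the second pass of McGregor \etal, it suffices that every useful edge already lies in $E_X' \cup E_Y$, so enumerating those two sets captures all of $Z$.

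You are also right to flag the reverse inclusion. The paper's final sentence asserts $Z = E_X' \cup E_Y$, but as you observe, $e = \{u,w\} \in E_Y$ only guarantees a witness $v \in V(Y)$ with $u,w \in N_G(v) \cap V(Y)$; it does not force the particular edges $\{u,v\}$ and $\{w,v\}$ to lie in $Y$, so $e$ need not be useful in the sense of case~(ii). Hence in general only $Z \subseteq E_X' \cup E_Y$ holds, and the equality claimed in the observation is too strong. Your instinct to present the contrapositive form and note that this containment is all that is needed downstream is the right fix. One small point: your closing sentence says this ``gives the equality $Z = E_X' \cup E_Y$ in the form needed,'' which is a slip --- what you have actually established (and what suffices) is the inclusion $Z \subseteq E_X' \cup E_Y$, not equality.
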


By the above observation, $E_S'$ and $E_F$ in our algorithm \alglow are essentially enough for maintaining the information and executing the same steps as that of the second pass of McGregor \etal 's algorithm.

Putting everything together, algorithm \alglow outputs a $\left(1 \pm \eps\right)$-approximation to the number of triangles in the graph.
 
\remove{
 
 \paragraph*{Estimating the number of triangles from $S, E_S, E_S',F, E_F$:\\} The informal high level idea is to partition the edges of $G$ into two parts into \emph{light} and \emph{heavy} edges as follows, and then classifying a triangle in terms of the number of light edges present in it.
 
  An edge $e=\{x,y\}$ is said to be \emph{heavy} if  the number of triangles that share edge $e$  , that is $\Gamma(e)$ is \emph{large} (roughly at least $\sqrt{L}$). But the problem is, given $e=\{x,y\}$, determining the exact value of $\Gamma(e)$ might require $\Omega(n)$ \bis queries in the worst case. We would like to note that finding $\size{\Gamma(e) \cap S}$ will be good enough for our purpose. Once set $S$ is fixed, $\size{\Gamma(e) \cap S}$ is well defined for all $e$. Observe that, given $e \in F$, then we can find the set $\Gamma(e) \cap E_S$ exactly from the set of edges in $E_S$.\remove{ Otherwise, if $e \notin F$, we might need $\Omega(\size{S})$ \bis queries to determine  $\size{\Gamma(e) \cap S}$.} However, we would like to note that determining $\size{\Gamma(e) \cap S}$, for all $e \in E_S \cup E_S' \cup F \cup E_F'$, will be good enough for our purpose.
  
Let $p=\frac{10 \log n}{\eps^2 \sqrt{L}}$. An edge $e=\{x,y\}$ is said to be light (with respect to $S$), if the number of triangles in which $\{x,y\}$ is an edge and there are at most $p \sqrt{L}$ vertices in $S$ with which $e$ forms a triangles, that is 
$\size{\Gamma(e)\cap S}<p\sqrt{L}$. Otherwise, edge $e$ is said to be heavy (with respect to $S$). We say a triangle is of $\mbox{{\sc type-}}i$ if the number of heavy edges in the triangle is $i$, where $i \in \{0,1,2,3\}$. Let $T_i$ denotes the number of triangles of $\mbox{{\sc type}-}i$ in $G$. Note that the total number of triangles in $G$ is $T=T_0+T_1+T_2+T_3.$ Now, we discuss how \alglow approximates $T$ by approximating each $T_i$ from $E_S,E_S',F$ and $E_F$. 
\begin{description}
\item[Approximating $T_0$.] Consider a $\type 0$ triangle $\{x,y,z\}$. Here all three edges in the triangle are light edges. This triangle will be detected by \alglow if two edges of the triangle are sampled in $F$. It is because of the construction of the edge set $E_F$, the third edge will automatically be present in $E_F$. Recall that $F$ is a set of $\left(1 \pm \frac{\eps}{10}\right)$-approximate sample from $E(G)$, and $\size{F}=\frac{10\widehat{m} \log n}{\sqrt{L}}$. So, the probability that triangle $\{x,y,z\}$ is detected is approximately $\frac{\size{F}}{m^2}$
\end{description}}
 
 \paragraph*{Query complexity analysis:\\} 
The set $S$ can be generated without making any \bis queries.
The number of \bis queries we make to find the set $E_S$ is at most $\sum\limits_{v \in S}\tOh(\size{N_G(v)})$, which in expectation is 
$$ \E\left[\sum\limits_{v \in S}\tOh(\size{N_G(v)})\right]=\sum\limits_{v \in V(G)}\Pr(v \in S)\cdot \tOh(\size{N_G(v)})=\tOh \left(\frac{m}{\sqrt{L}}\right).$$ 

The number of \bis queries we make to find the set $E_S'$ is at most $\sum\limits_{v \in S}\tOh\left(\size{E(G[N_{G}(v)])}\right)$. Note that $\size{E(G[N_{G}(v)]}$ is $T_v$, that is, the number of triangles having $v$ as one of the vertex. So, the expected number of \bis queries we make to find $E_S'$ is at most

$$ \E\left[\sum\limits_{v \in S}\tOh(T_v)\right]=\sum\limits_{v \in V(G)} \Pr(v \in S)\cdot \tOh(T_v) =\tOh\left(\frac{T}{\sqrt{L}}\right).$$
The number of \bis queries we make to find the set $F$ is $\tOh(\size{F})=\tOh\left(\frac{m}{\sqrt{L}}\right)$

The number of \bis queries to generate $E_F$ is at most $\sum 
\limits_{v \in V(F)} \tOh\left(\size{E(G[N_{G}(v) \cap V(F)])}\right)$. 
Observe that an edge $\{x,y\}$ is present in $E_F$ if there exists a $z 
\in V(G)$ such that $\{x,y,z\}$ forms a triangle in $G$ and $\{x,z\}$ 
and $\{y,z\}$ are in $F$. So, the probability that an edge $\{x,y\}$ is 
in $E(G[N_{G}(v) \cap V(F)]$ is at most $\size{\Gamma(\{x,y\})}\cdot 
\tOh\left(\frac{1}{{L}}\right)$, where $\Gamma(\{x,y\})$ denotes the set of common neighbors of  $x$ and $y$ in $G$. So, the expected number of \bis 
queries to enumerate all the edges in $E_F$ is at most 
 $$\sum\limits_{\{x,y\}\in E(G)}\tOh \left(\frac{\size{\Gamma(\{x,y\}})}{\sqrt{L}}\right) = \tOh \left( \frac{T}{\sqrt{L}}\right).$$
 
 Hence, the expected number of \bis queries made by the algorithm is $\tOh\left(\frac{m+T}{\sqrt{L}}\right)$.



\section{Conclusion}
\noindent
We touched upon two open questions of Beame \etal~\cite{DBLP:journals/talg/BeameHRRS20} in this paper. We  resolved the query complexity of triangle estimation when we have a \bislong oracle access to the unknown graph when $T =\Omega(m)$. But the query complexity of triangle counting remain illusive when $T=o(m)$ though we believe that our upper bound of $\tOh(m/\sqrt{T})$ \bis queries is tight in this regard. It is also interesting if our upper bound can be improved when $T=o(m)$. 

\newpage
\bibliographystyle{alpha}
\bibliography{reference}

\newcommand{\etalchar}[1]{$^{#1}$}
\begin{thebibliography}{BBGM19b}

\bibitem[ACK21]{DBLP:conf/esa/AssadiCK21}
Sepehr Assadi, Deeparnab Chakrabarty, and Sanjeev Khanna.
\newblock Graph connectivity and single element recovery via linear and {OR}
  queries.
\newblock In Petra Mutzel, Rasmus Pagh, and Grzegorz Herman, editors, {\em 29th
  Annual European Symposium on Algorithms, {ESA} 2021, September 6-8, 2021,
  Lisbon, Portugal (Virtual Conference)}, volume bisbisbisbis204 of {\em
  LIPIcs}, pages 7:1--7:19. Schloss Dagstuhl - Leibniz-Zentrum f{\"{u}}r
  Informatik, 2021.

\bibitem[AKK19]{DBLP:conf/innovations/AssadiKK19}
Sepehr Assadi, Michael Kapralov, and Sanjeev Khanna.
\newblock {A Simple Sublinear-Time Algorithm for Counting Arbitrary Subgraphs
  via Edge Sampling}.
\newblock In {\em Proceedings of the 10th Innovations in Theoretical Computer
  Science Conference, {ITCS}}, volume 124, pages 6:1--6:20, 2019.

\bibitem[BBGM19a]{abs-1908-04196}
Anup Bhattacharya, Arijit Bishnu, Arijit Ghosh, and Gopinath Mishra.
\newblock {Hyperedge Estimation using Polylogarithmic Subset Queries}.
\newblock {\em CoRR}, abs/1908.04196, 2019.

\bibitem[BBGM19b]{BhattaISAAC}
Anup Bhattacharya, Arijit Bishnu, Arijit Ghosh, and Gopinath Mishra.
\newblock {Triangle Estimation Using Tripartite Independent Set Queries}.
\newblock In {\em Proceedings of the 30th International Symposium on Algorithms
  and Computation, {ISAAC}}, volume 149, pages 19:1--19:17, 2019.

\bibitem[BBGM21]{Bhatta-abs-1808-00691}
Anup Bhattacharya, Arijit Bishnu, Arijit Ghosh, and Gopinath Mishra.
\newblock {On Triangle Estimation Using Tripartite Independent Set Queries}.
\newblock {\em Theory Comput. Syst.}, 65(8):1165--1192, 2021.

\bibitem[BGK{\etalchar{+}}18]{BishnuGKM018}
Arijit Bishnu, Arijit Ghosh, Sudeshna Kolay, Gopinath Mishra, and Saket
  Saurabh.
\newblock {Parameterized Query Complexity of Hitting Set Using Stability of
  Sunflowers}.
\newblock In {\em Proceedings of the 29th International Symposium on Algorithms
  and Computation, {ISAAC}}, volume 123, pages 25:1--25:12, 2018.

\bibitem[BHR{\etalchar{+}}18]{BeameHRRS18}
Paul Beame, Sariel Har{-}Peled, Sivaramakrishnan~Natarajan Ramamoorthy, Cyrus
  Rashtchian, and Makrand Sinha.
\newblock {Edge Estimation with Independent Set Oracles}.
\newblock In {\em Proceedings of the 9th Innovations in Theoretical Computer
  Science Conference, {ITCS}}, volume~94, pages 38:1--38:21, 2018.

\bibitem[BHR{\etalchar{+}}20]{DBLP:journals/talg/BeameHRRS20}
Paul Beame, Sariel Har{-}Peled, Sivaramakrishnan~Natarajan Ramamoorthy, Cyrus
  Rashtchian, and Makrand Sinha.
\newblock {Edge Estimation with Independent Set Oracles}.
\newblock {\em {ACM} Trans. Algorithms}, 16(4):52:1--52:27, 2020.

\bibitem[BKS02]{DBLP:conf/soda/Bar-YossefKS02}
Ziv Bar{-}Yossef, Ravi Kumar, and D.~Sivakumar.
\newblock {Reductions in Streaming Algorithms, with an Application to Counting
  Triangles in Graphs}.
\newblock In {\em Proceedings of the Thirteenth Annual {ACM-SIAM} Symposium on
  Discrete Algorithms, SODA}, pages 623--632, 2002.

\bibitem[CLW20]{CLW-soda-2020}
Xi~Chen, Amit Levi, and Erik Waingarten.
\newblock {Nearly Optimal Edge Estimation with Independent Set Queries}.
\newblock In {\em Proceedings of the 2020 {ACM-SIAM} Symposium on Discrete
  Algorithms, {SODA}}, pages 2916--2935, 2020.

\bibitem[DL18]{DellL18-stoc18}
Holger Dell and John Lapinskas.
\newblock {Fine-Grained Reductions from Approximate Counting to Decision}.
\newblock In {\em Proceedings of the 50th Annual {ACM} {SIGACT} Symposium on
  Theory of Computing, {STOC}}, pages 281--288, 2018.

\bibitem[DLM20]{DellLM20-soda}
Holger Dell, John Lapinskas, and Kitty Meeks.
\newblock {Approximately Counting and Sampling Small Witnesses Using a
  Colourful Decision Oracle}.
\newblock In {\em Proceedings of the 2020 {ACM-SIAM} Symposium on Discrete
  Algorithms, {SODA}}, pages 2201--2211, 2020.

\bibitem[DP09]{DubhashiP09}
Devdatt~P. Dubhashi and Alessandro Panconesi.
\newblock {\em Concentration of Measure for the Analysis of Randomized
  Algorithms}.
\newblock Cambridge University Press, 2009.

\bibitem[ELRS17]{EdenLRS15}
Talya Eden, Amit Levi, Dana Ron, and C.~Seshadhri.
\newblock {Approximately Counting Triangles in Sublinear Time}.
\newblock {\em {SIAM} J. Comput.}, 46(5):1603--1646, 2017.

\bibitem[ERS20]{EdenRS18}
Talya Eden, Dana Ron, and C.~Seshadhri.
\newblock {On Approximating the Number of k-Cliques in Sublinear Time}.
\newblock {\em {SIAM} J. Comput.}, 49(4):747--771, 2020.

\bibitem[Fei06]{Feige06}
Uriel Feige.
\newblock {On Sums of Independent Random Variables with Unbounded Variance and
  Estimating the Average Degree in a Graph}.
\newblock {\em {SIAM} J. Comput.}, 35(4):964--984, 2006.

\bibitem[Gol17]{G2017}
Oded Goldreich.
\newblock {\em Introduction to Property Testing}.
\newblock Cambridge University Press, 2017.

\bibitem[GR02]{GoldreichR02}
Oded Goldreich and Dana Ron.
\newblock {Property Testing in Bounded Degree Graphs}.
\newblock {\em Algorithmica}, 32(2):302--343, 2002.

\bibitem[GR08]{GoldreichR08}
Oded Goldreich and Dana Ron.
\newblock {Approximating Average Parameters of Graphs}.
\newblock {\em Random Struct. Algorithms}, 32(4):473--493, 2008.

\bibitem[Jan04]{DBLP:journals/rsa/Janson04}
Svante Janson.
\newblock {Large Deviations for Sums of Partly Dependent Random Variables}.
\newblock {\em Random Struct. Algorithms}, 24(3):234--248, 2004.

\bibitem[Juk11]{DBLP:series/txtcs/Jukna11}
Stasys Jukna.
\newblock {\em Extremal Combinatorics - With Applications in Computer Science}.
\newblock Texts in Theoretical Computer Science. An {EATCS} Series. Springer,
  2011.

\bibitem[MR99]{DBLP:books/crc/99/0001R99}
Rajeev Motwani and Prabhakar Raghavan.
\newblock {Randomized Algorithms}.
\newblock In {\em Algorithms and Theory of Computation Handbook}, Chapman {\&}
  Hall/CRC Applied Algorithms and Data Structures series. {CRC} Press, 1999.

\bibitem[MVV16]{DBLP:conf/pods/McGregorVV16}
Andrew McGregor, Sofya Vorotnikova, and Hoa~T. Vu.
\newblock {Better Algorithms for Counting Triangles in Data Streams}.
\newblock In {\em Proceedings of the 35th {ACM} {SIGMOD-SIGACT-SIGAI} Symposium
  on Principles of Database Systems, {PODS}}, pages 401--411, 2016.

\bibitem[RT16]{RonT16}
Dana Ron and Gilad Tsur.
\newblock {The Power of an Example: Hidden Set Size Approximation Using Group
  Queries and Conditional Sampling}.
\newblock {\em {ACM} Trans. Comput. Theory}, 8(4):15:1--15:19, 2016.

\bibitem[RWZ20]{abs-2006-14015}
Cyrus Rashtchian, David~P. Woodruff, and Hanlin Zhu.
\newblock {Vector-Matrix-Vector Queries for Solving Linear Algebra, Statistics,
  and Graph Problems}.
\newblock In {\em Proceedings of the 24th International Conference on
  Randomization and Computation, RANDOM}, volume 176, pages 26:1--26:20, 2020.

\bibitem[Sto83]{Stockmeyer83}
Larry~J. Stockmeyer.
\newblock {The Complexity of Approximate Counting (Preliminary Version)}.
\newblock In {\em Proceedings of the 15th Annual {ACM} Symposium on Theory of
  Computing, STOC}, pages 118--126, 1983.

\bibitem[Sto85]{Stockmeyer85}
Larry~J. Stockmeyer.
\newblock {On Approximation Algorithms for {\#}P}.
\newblock {\em {SIAM} J. Comput.}, 14(4):849--861, 1985.

\end{thebibliography}
\newpage
\appendix

\section{Missing proofs from Section~\ref{sec:lb-clique}}\label{sec:miss-proofs}


\begin{obs}[Observation~\ref{obs:edge} restated]
\begin{enumerate}
\item[(i)] For $G\sim \yesdist \cup \nodist$, then the number of vertices in $G$ is $4\sqrt{m}$. Also, $\frac{\sqrt{m}}{2} \leq \size{A}, \size{B}, \size{C},\size{D} \leq 2\sqrt{m}$ with a probability of at least $1-o(1)$, and the number of edges in $G$ is $\Theta(m)$ with a probability of at least $1-o(1)$;
\item[(ii)] If $G\sim \yesdist$, then there are at most  $t$ triangles in $G$ with a probability of at least $1-o(1)$,
\item[(iii)] If $G\sim \nodist$, $ \frac{8t}{m}\leq \size{C'}\leq \frac{32t}{m}$ with a probability of at least $1-o(1)$, and there are at least $2t$ triangles in $G$ with a probability of at least $1-o(1)$.
\end{enumerate}
\end{obs}
\begin{proof}
\begin{itemize}
\item[(i)] By the construction of $G\sim \yesdist \cup \nodist$, the number of vertices in any such $G$ is $4\sqrt{m}$. 

For the cardinalities of $A,B,C,D$, we only argue for $|A|$. For $|B|,|C|,|D|$, the proofs are similar. Note that each vertex in $V(G)$ is put into one of $A,B,C,D$ uniformly at random, and independent of other vertices. So, $\E[\size{A}]=\sqrt{m}$. By using Chernoff-Hoeffding bound (See Lemma~\ref{lem:cher_bound1} in Appendix~\ref{sec:prelim}), we have 
$$\pr\left(\size{\size{A}-\sqrt{m}}\geq \frac{\sqrt{m}}{2} \right) \leq \exp{\left(- \frac{(\sqrt{m}/2)(1/2)^2}{3}\right)} \leq o(1).$$
The last inequality holds as $m=\Omega(\log ^2 n).$ Note that the above equation implies that 
$$\pr\left(\frac{\sqrt{m}}{2} \leq \size{A} \leq 2\sqrt{m}\right)\geq 1-o(1).$$

\item[(ii)] Here we work on the conditional space that $\frac{\sqrt{m}}{2} \leq \size{A}, \size{B}, \size{C},\size{D} \leq 2\sqrt{m}$.

 If we set a indicator random variable for each triple of vertices (one in each of $A,B,C$) such that it is set to $1$ if the tree vertices forms a triangle in $G$.

Let $N$ be the number of indicator random variables. Observe that 
 $\frac{m^{3/2}}{8}\leq N \leq 8m^{3/2}$. Due to our construction of $G \sim \yesdist$, the probability that each such indicator variable takes vale $1$ is $p=\left(\sqrt{\frac{t}{16m^{3/2}}}\right)^2=\frac{t}{16m^{3/2}}$, and each such indicator random variable may depend on at most $d \leq 6\sqrt{m}$ other random variables. Taking $X$ as the sum of $N$ indicator random variables, 
 $$\E[X]=Np \leq \frac{t}{2}.$$
 
  Setting $\delta=\frac{t}{2}$ and applying Lemma~\ref{lem:depend:high_prob}, we get

\begin{eqnarray*}
\pr(X \geq t) &=& \pr(X \geq \E[X]+\delta)\\  
&\leq& \exp{\left(-\frac{\delta^2\left(1-\frac{d+1}{4N}\right) }{ 2(d+1)\left(Np+\frac{\delta}{3}\right)}\right)}\\
&\leq & \exp{\left(-\frac{(t/2)^2\left(1-\frac{6\sqrt{m}+1}{m^{3/2}/2}\right) }{ 2(6 \sqrt{m}+1)\left(t/2+{t}/{6}\right)}\right)}\\
&=& o(1).
\end{eqnarray*}
\item[(iii)] Here we work on the conditional space that $\frac{\sqrt{m}}{2}\leq \size{C}\leq 2 \sqrt{m}$. 

As we put each vertex in $C$ into $C'$ with probability $\frac{32t}{m^{3/2}}$, 
$$\frac{16t}{m}\leq \E[C'] \leq \frac{64t}{m}.$$

By using Chernoff-Hoeffding bound (See Lemma~\ref{lem:cher_bound1} in Appendix~\ref{sec:prelim}), we have 

$$\pr\left(\frac{8t}{m}\leq \size{C'}\leq \frac{32t}{m}\right)\geq 1-o(1).$$

By the construction of $G \sim \nodist$, the number of triangles in $G$ is at least $\size{C'}|A||B|$, which is at least $2t$ with a probability of at least $1-o(1)$.

\end{itemize}
\end{proof}


\subsection*{Proofs of Claims~\ref{clm:inter1}~and~\ref{clm:inter2}}
To prove Claim~\ref{clm:inter1}, we need Observations~\ref{obs:cond} and~\ref{obs:cond1}. Informally speaking, these two observations help us to argue that enough randomness is left in the unknown part of the graph when we make $q=o\left(\frac{m^{3/2}}{t}\frac{1}{\log ^2 n}\right)$ \staree queries. Particularly, Observation~\ref{obs:cond} says that, when \staralg is at a good node $u$ of the decision tree, then  a graph $G \sim \yesdist$ can be generated respecting the current data structure that we have at node $u$.  Observation~\ref{obs:cond1} says that, when \staralg is at some node $u$ of the decision tree, then  a graph $G \sim \nodist$ can be generated respecting the current data structure that we have at node $u$. The proof of Claim~\ref{clm:inter2} uses the connection between graphs in $\yesdist$ and  $\nodist$, as described in Remark~\ref{rem:yes-no}.
\begin{obs}[{\bf A graph $G \sim \yesdist$ can be generated respecting any good node in the decision tree}]\label{obs:cond}
Let $u$ be the current node of the decision tree $\cT$ that is good and $(E_Q,V\left(E_Q \right), e,\ell_v)$ be the current data structure. Then a graph $G \sim \yesdist$ can be generated as follows conditioned on the fact that \staralg{}$(G)$ reaches $u$. 
\begin{itemize}
    \item For each $x\in V\left(E_Q \right)$, put $x$ in {the vertex partition indicated by $\ell_v(x)$}. Put each vertex in $V(G) \setminus V\left(E_Q \right)$ to one of the  parts out of $A, \, B, \, C, \, D$ uniformly at random and independent of other vertices;
    \item  For each $\{x,y\} \in E_Q $, add an edge between $x$ and $y$ if and only if $e(\{x,y\})=1$. Then for each $\{x,y\} \in {V(G) \choose 2} \setminus E_Q$, do the following:
    
\begin{itemize}
\item Add an edge between $x$ and $y$ if one vertex is in $A$ and the other is in $B$;
\item Add an edge between $x$ and $y$ if one vertex is in $C$ and the other is in $D$;
\item If one vertex out of $x$ and $y$ is in $A \cup B$ and the other in $D$, then $\{x,y\}$ does not form an edge.
\item  Add an edge between $x$ and $y$ with probability $\sqrt{\frac{t}{16 m^{3/2}}}$ if one vertex in $A \cup B$ and the other in $C$.

\end{itemize}    
\end{itemize}

\end{obs}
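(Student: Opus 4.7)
My plan is to exploit the product structure of $\yesdist$. A sample $G \sim \yesdist$ is determined by two families of mutually independent random variables: for each vertex $v \in V(G)$, a uniform label $\ell(v) \in \{A,B,C,D\}$, and for each pair $\{x,y\}$ whose labels lie one in $A \cup B$ and the other in $C$, an independent $\mathrm{Bernoulli}\!\left(\sqrt{t/(16m^{3/2})}\right)$ indicator for $\{x,y\} \in E(G)$. Every other edge is a deterministic function of the labels alone: present for $\{A,B\}$ and $\{C,D\}$ pairs, absent for $\{A \cup B, D\}$ pairs or pairs lying inside a single class. Under this viewpoint, the generation procedure described in the observation is precisely this product distribution with the coordinates associated to $V(E_Q)$ (labels) and $E_Q$ (edge indicators) pinned to the values recorded in $\ell_v$ and $e$, while all remaining coordinates are sampled freshly from their marginals.

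The core step is to show that the event $\cE_u$ that \staralg{}$(G)$ reaches $u$ is, modulo oracle-internal randomness that is independent of $G$, equivalent to the event that the coordinates \emph{revealed} along the root-to-$u$ path agree with $(\ell_v, e)$. Since \staralg is deterministic, the path is determined by the sequence of data structures returned by the oracle, so $\cE_u$ is the event of producing this specific transcript. For a small query ($|P|\leq \tau$) the transcript step reveals exactly the labels of vertices in $P$ and the edge statuses of pairs in $P$; for a large query ($|P|>\tau$) the oracle first draws a random $P'\subseteq P$ (independently of $G$) and then reveals $P'$, with no failure since $u$ is good and on the path. Conditioning on which $P'$ was drawn is independent of $G$ and thus does not alter its conditional law, so
$$\pr_{G \sim \yesdist}(G=g \mid \cE_u) = \pr_{G \sim \yesdist}\!\left(G=g \,\middle|\, \text{labels on } V(E_Q) \text{ match } \ell_v \text{ and edges in } E_Q \text{ match } e\right).$$
By independence of the coordinates in $\yesdist$, the right-hand side factors into point masses on the revealed coordinates and the original marginals on the unrevealed ones, which is precisely the described procedure term by term.

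The main obstacle I anticipate is a careful bookkeeping argument confirming that the data structure really captures only projections of the product distribution, with no implicit information about unrevealed vertex labels or unrevealed cross-edge indicators leaking through \staralg's adaptive query choices or through Remark~\ref{rem:assumption}'s reductions on the query set. This again rests on determinism: conditional on the data structure values seen so far, both the next query issued by \staralg and the form of the oracle's response are fully determined, so conditioning on $\cE_u$ imposes no constraint on $G$ beyond $(\ell_v, e)$. A minor additional bookkeeping point is that the labels of some $v \in V(E_Q)$ may have been fixed indirectly (e.g.\ a $\{v,w\}$ edge with $w \in V(E_Q)$ already labeled), but these are equally projections of the label coordinates and pose no new difficulty.
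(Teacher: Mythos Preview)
Your approach is correct and is precisely the natural way to unpack the paper's own one-line justification (``follows from the descriptions of our hard distributions \ldots\ along with the description of \staree oracle and its interplay with algorithm \staralg''). The paper gives no further argument, so your product-structure reasoning is exactly what is implicitly being invoked.

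One small slip worth noting: the phrase ``with no failure since $u$ is good and on the path'' conflates two unrelated notions. Goodness of $u$ means no vertex in $V(E_Q)$ carries the label $C'$; it says nothing about whether a {\sc Failure} signal was issued along the root-to-$u$ path. Your argument is nonetheless unharmed: if the path does include a {\sc Failure} step, then $E_Q=\binom{V(G)}{2}$, the procedure in the observation degenerates to a point mass on the fully revealed graph, and the claim holds trivially. If the path does not include {\sc Failure}, then each large-query $P'$ recorded in the transcript contains at least one edge (this is recorded in $e$), so any $g$ whose projections match $(\ell_v,e)$ automatically avoids {\sc Failure} for those same oracle choices, and your factorisation
\[
\Pr_{G\sim\yesdist}(G=g\mid\cE_u)=\Pr_{G\sim\yesdist}\bigl(G=g\ \bigm|\ \text{projections on }V(E_Q)\text{ and }E_Q\text{ match }(\ell_v,e)\bigr)
\]
goes through as written. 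So the conclusion stands; only the justification for that one clause should be replaced.
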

\begin{obs}[{\bf {\bf A graph $G \sim \nodist$ can be generated respecting any node in the decision tree}}]\label{obs:cond1}
Let $u$ be the current node of the decision tree $\cT$ and $(E_Q,V\left(E_Q \right), e,\ell_v)$ be the current data structure. Then a graph $G \sim \nodist$ can be generated as follows conditioned on the fact that \staralg $(G)$ reaches $u$. 
\begin{itemize}
    \item For each $x\in V\left(E_Q \right)$, put $x$ in {the vertex partition indicated by $\ell_v(x)$}.  Put each vertex in $V(G) \setminus V\left(E_Q \right)$ to one of the  parts out of $A,B,C,D$ uniformly at random and independent of other vertices;
    \item  For each $\{x,y\} \in E_Q $, add an edge between $x$ and $y$ if and only if $e(\{x,y\})=1$. Then for each $\{x,y\} \in {V(G) \choose 2} \setminus E_Q$, do the following:
    
\begin{itemize}
\item Add an edge between $x$ and $y$ if one vertex is in $A$ and the other is in $B$;
\item Add an edge between $x$ and $y$ if one vertex is in $C$ and the other is in $D$;
\item If one vertex out of $x$ and $y$ is in $A \cup B$ and the other is in $D$, then $\{x,y\}$ does not form an edge.
\item  Add an edge between $x$ and $y$ with a probability of at least $\sqrt{\frac{t}{16 m^{3/2}}}$ if one vertex is in $A \cup B$ and the other is in $C$;
\item For each vertex $x \in C \setminus C'$, add $x$ to $C'$ with probability $32/m^{3/2}$.  Add each edge {of the form} $\{{x,y}:x \in A\cup B, y \in C'\}$ to the graph;
\end{itemize}    
\end{itemize}

\end{obs}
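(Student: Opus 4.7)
The plan is to apply the principle of deferred decisions to the independent-component structure of $\nodist$. Recall that a sample from $\nodist$ is built from mutually independent primitive random variables: (i) for each vertex $v \in V(G)$, a uniform label $L_v \in \{A,B,C,D\}$; (ii) for each pair $\{x,y\}$ of vertices, an independent Bernoulli coin $\beta_{xy}$ with parameter $\sqrt{t/(16m^{3/2})}$ that is ``used'' only if one endpoint lands in $A\cup B$ and the other in $C$; and (iii) for each vertex $y$, an independent Bernoulli coin $\gamma_y$ with parameter $\tfrac{32t}{m^{3/2}}$ that is ``used'' only if $L_y = C$, determining whether $y \in C'$. All remaining edges (the $A$-$B$ biclique, the $C$-$D$ biclique, and every edge from $A\cup B$ to $C'$) are then deterministic functions of $(L_v, \beta_{xy}, \gamma_y)$.

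First I would catalog exactly what the data structure $(E_Q, V(E_Q), e, \ell_v)$ on the root-to-$u$ path reveals about these primitive variables: the label $\ell_v(x)$ (which includes the $C$ versus $C'$ distinction) pins down $L_x$, and additionally pins down $\gamma_x$ when $\ell_v(x) \in \{C,C'\}$; each $e(\{x,y\})$ with $\{x,y\} \in E_Q$ pins down whichever primitive variable governs the edge $\{x,y\}$ (either $\beta_{xy}$, or a deterministic function of the labels, or a deterministic function of the $\gamma$'s and labels). Since \staralg is deterministic, the event ``\staralg reaches $u$'' is precisely the event that these revealed quantities take the values recorded along the root-to-$u$ path. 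Remark~\ref{rem:assumption} ensures that the queries are never redundant, so this event is a conjunction of constraints on disjoint primitive variables.

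By the independence of the primitives, conditioning on this event leaves the unrevealed primitives with their original marginal distributions: labels $L_v$ for $v \notin V(E_Q)$ stay uniform on $\{A,B,C,D\}$; coins $\beta_{xy}$ for pairs $\{x,y\} \notin E_Q$ stay Bernoulli at $\sqrt{t/(16m^{3/2})}$; and coins $\gamma_y$ for $y \notin V(E_Q)$ (equivalently, vertices not already revealed as $C$ or $C'$) stay Bernoulli at $\tfrac{32t}{m^{3/2}}$. Drawing these unrevealed primitives independently and then applying the deterministic completion rules (bicliques on $A$-$B$ and $C$-$D$, attaching every vertex of $C'$ to all of $A\cup B$) reproduces $\nodist$ conditioned on reaching $u$; this is exactly the procedure stated in the observation (with the new coins $\gamma_y$ for $y \notin V(E_Q)$ folded into the ``add $y$ to $C'$ with probability $\tfrac{32t}{m^{3/2}}$'' step, which matters only if such a $y$ ends up in $C$).

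The main obstacle is ensuring that the oracle's behavior on large queries ($|P| > \tau$) does not leak extra information and thereby break independence. Here I would argue that if the oracle does not send {\sc Failure}, then the information it reveals along the edge of $\cT$ is exactly the information it would reveal on a small query with input $P' \subseteq P$ of size $\tau$, so the same disjointness-of-revealed-primitives argument applies; if it does send {\sc Failure}, the entire graph is revealed and the claim is vacuous. A clean formalization proceeds by induction on the depth of $u$ in $\cT$, using Remark~\ref{rem:assumption}(i)--(ii) at the inductive step to guarantee that each new query only touches primitive variables not yet constrained by the current data structure. The argument is structurally identical to that of Observation~\ref{obs:cond}, with the sole addition being the independent $\gamma_y$ coins that witness $C'$-membership.
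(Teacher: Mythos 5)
Your proof is correct and fills in the argument that the paper leaves entirely implicit: the paper's only justification of Observation~\ref{obs:cond1} is the single sentence that it ``follows from the descriptions of our hard distributions \ldots along with the description of \staree oracle and its interplay with algorithm \staralg.'' Your deferred-decisions decomposition of $\nodist$ into independent primitives $L_v$, $\beta_{xy}$, $\gamma_y$ is the natural way to make that remark precise, and the inductive argument over the decision-tree path is sound. Two small points to tighten. First, the domain of $e$ is ${V(E_Q)\choose 2}$, not just $E_Q$: upon touching a new vertex, the \staree oracle reveals its adjacency to \emph{every} vertex already in $V(E_Q)$, so the set of pinned $\beta$-coins is larger than your catalog states; this does not change the conclusion, since the revealed set is still a deterministic function of the data structure and independent of the unrevealed primitives, but your accounting should reflect it. Second, ``disjoint primitive variables'' is not quite what you need from Remark~\ref{rem:assumption} (which is about query-time non-redundancy, not about the event structure); the operative property is that, conditional on the oracle's internal sampling of the sets $P'$, the event ``\staralg reaches $u$'' is a deterministic function of exactly those primitives recorded in $(E_Q, V(E_Q), e, \ell_v)$, and those are independent of the rest. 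I also note that your restriction of the fresh $\gamma_y$ coins to $y \notin V(E_Q)$ is in fact \emph{more} careful than the Observation's own wording, whose last bullet nominally re-flips the $C'$-membership coin for every $x\in C\setminus C'$ (which would wrongly include vertices already labeled $C$) and also drops the factor of $t$ in the probability; your reading is the intended one.
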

Observations~\ref{obs:cond} and~\ref{obs:cond1} follow from the descriptions of our hard distributions ($\yesdist$ and $\nodist$ in Section~\ref{sec:hard}), along with the description of \staree oracle and its interplay with algorithm \staralg.
 Now we will prove Claims~\ref{clm:inter1} and~\ref{clm:inter2} by using Observations~\ref{obs:cond} and~\ref{obs:cond1}.
   \begin{cl}[Claim~\ref{clm:inter1} restated]
   Let $G \sim \nodist$. Then the probability that \staralg reaches a bad node of the decision tree is $o(1)$. That is, 
   $$\pr_{G \sim \nodist}(\mbox{\staralg  reaches a bad node})=o(1).$$
   \end{cl}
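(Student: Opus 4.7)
\medskip
\noindent\textbf{Proof plan for Claim~\ref{clm:inter1}.}
My plan is to identify the two \emph{only} mechanisms by which \staralg can end up at a bad node when $G \sim \nodist$, and bound the probability of each by $o(1)$ separately. The mechanisms are: (M1) on some query with $|P|>\tau$ the random subset $P'$ contains no edge, so the oracle sends the entire graph and in particular reveals all of $C'$; (M2) on some query (either with $|P|\le\tau$, or with $|P|>\tau$ and $P'$ containing an edge), a fresh vertex enters $V(E_Q)$ and receives label $C'$.

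For (M1) I will use Observation~\ref{obs:prob-edge} together with Remark~\ref{rem:assumption}. By Remark~\ref{rem:assumption}(i), every pair in $P$ is disjoint from $\binom{V(E_Q)}{2}$, so each pair in the randomly chosen $P'\subseteq P$ contains at least one vertex that has not yet been assigned to any of $A,B,C,D$. Hence Observation~\ref{obs:prob-edge} gives that each of the $\tau$ pairs in $P'$ is an edge of $G$ with probability at least $1/4$. Since these events are positively correlated enough for the trivial bound to work (or one can thin $P'$ further so that the fresh endpoints are distinct), the probability that $P'$ contains no edge is at most $(3/4)^{\tau} = n^{-\Omega(\log n)}$. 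A union bound over the at most $q$ queries bounds the total \textsc{Failure} probability by $q\cdot n^{-\Omega(\log n)}=o(1)$.

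For (M2) I condition on no \textsc{Failure} event occurring. Each query then brings at most $2\tau$ new vertices into $V(E_Q)$, so over the entire execution at most $2q\tau$ fresh vertices are labelled. The crucial step is that when \staralg arrives at a good node $u$, Observation~\ref{obs:cond1} lets us generate $G\sim\nodist$ conditioned on reaching $u$ by independently assigning each unlabelled vertex to $A,B,C,D$ uniformly and then, for the fresh vertices landing in $C$, putting each into $C'$ independently with probability $\Theta(t/m^{3/2})$. Consequently, for any particular fresh vertex the (conditional) probability that its label is $C'$ is at most $\tfrac14\cdot\tfrac{32 t}{m^{3/2}} = O(t/m^{3/2})$. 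Taking a union bound over the $\le 2q\tau$ fresh vertices yields
\[
\pr_{G\sim\nodist}(\text{some fresh vertex is labelled } C') \;\le\; 2q\tau\cdot O\!\left(\tfrac{t}{m^{3/2}}\right) \;=\; O\!\left(\tfrac{q\log^2 n\cdot t}{m^{3/2}}\right) \;=\; o(1),
\]
using the hypothesis $q = o\!\left(\tfrac{m^{3/2}}{t\log^2 n}\right)$. Combining the bounds for (M1) and (M2) gives the claim.

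The main obstacle is making the conditioning in (M2) rigorous: at the moment the $j$-th query is issued, I need Observation~\ref{obs:cond1} to be applicable, which requires the current node to still be good. I will handle this by a straightforward induction over queries, i.e.\ I will bound $\pr_{G\sim\nodist}(u_j \text{ is the first bad node on the path})$ using the conditional distribution at the good node $u_{j-1}$, and then sum these bounds. All other steps (invoking Observations~\ref{obs:prob-edge}, \ref{obs:cond1} and Remark~\ref{rem:assumption}) are bookkeeping rather than substantive new calculation.
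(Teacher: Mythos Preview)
Your plan matches the paper's proof: both proceed query by query, bounding at each good node (a) the probability that one of the at most $2\tau$ freshly labelled vertices lands in $C'$, via Observation~\ref{obs:cond1}, and (b) the probability of the \textsc{Failure} event when $|P|>\tau$, via Observation~\ref{obs:prob-edge}, then take a union bound over the $q$ queries.

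The one step that is genuinely loose is your independence claim in (M1). Neither ``positively correlated enough'' nor ``thin $P'$ so the fresh endpoints are distinct'' works as stated: the pairs in $P'$ may overlap, and forcing one fresh endpoint per pair to be distinct does not decouple the events, since two pairs can still share the other endpoint. The paper closes this with a matching/star dichotomy on the $\tau=25\log^{2}n$ pairs in $P'$: either $5\log n$ of them are pairwise vertex-disjoint (then Remark~\ref{rem:assumption}(i) gives each a fresh vertex and the placements are independent), or $5\log n$ of them form a star with centre $o$; in the star case Remark~\ref{rem:assumption}(ii) forces all but one leaf to be fresh, and after conditioning on the label of $o$ the leaf placements are independent, each yielding an edge with probability at least $1/4$. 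This gives only $(3/4)^{5\log n}=n^{-\Omega(1)}$ rather than your $(3/4)^{\tau}$, but that still suffices for the union bound over $q=o(\sqrt{m}/\log^{3}n)$ queries.
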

 \begin{proof}
 Recall that $q=o\left(\frac{m^{{3/2}}}{t}\frac{1}{\log ^2 n}\right)$ and \staralg makes (at most) $q$ \staree queries. The execution of \staralg starts from the root node of the decision tree $\cT$, which is trivially a good node. For $i \in [q]$, let $u_i$ be the node of $\cT$ that \staralg reaches after making $i$ \staree queries. Assume that $u_0$ is the root of the tree $\cT$. Note that $u_i$ is a child of $u_{i-1}$ for each $i \in [q]$.
 Claim~\ref{clm:inter1} says that, there exists an $i \in [q]$ such that $u_i$ is a bad node, with a probability of at most $o(1)$. Observe that we will be done with the proof by showing the following: 
 if $u_0,\ldots,u_k$ are good nodes with $k \leq q-1$, then $u_{k+1}$ is a bad node with a probability of at most $\Oh\left(\frac{t \log ^2 n}{m^{3/2}}\right)$.
 
 As $k \leq q-1$, $u_k$ is not a leaf node. Let {the label $u_k$ be $P$}. Note that $P$ is a subset of ${V(G) \choose 2}$ with which \staralg makes \staree query. Let $(E_Q,V\left(E_Q \right), e,\ell_v)$ be the current data structure of \staralg after making \staree query with input $P$. 
   Recall the definition of a bad node (Definition~\ref{defi:badnode}). Observe that $u_{k+1}$ is a bad node if and only if $V(P) \cap C' \neq \emptyset$.
 \remove{
 Let $I_{AB}$ be the set of edges between $A$ and $B$; $I_{A'B'}$ be the set of edges between $A'$ and $B'$; $\Bar{I}_{AA'}$ be the set of non edges between $A$ and $A'$; $\Bar{I}_{BB'}$ be the set of non edges between $B$ and $B'$. Note that if $G \sim \yesdist$, then the sets $I_{AB}$, $I_{A'B'}$, $\Bar{I}_{AA'}$ and $\Bar{I}_{BB'}$ are empty sets. Also, if $G \sim \nodist$, then $I_{AB}$, $I_{A'B'}$, $\Bar{I}_{AA'}$ and $\Bar{I}_{BB'}$ are nonempty and have the same cardinality; Moreover, each of the four sets has at least $t/\sqrt{m}$ elements and at most $4t/\sqrt{m}$ elements with a probability of at least $1-o(1)$.  
 
  The analysis is divided into two following cases depending on how \staree oracle responds to the \staree query with $P$ input by \staralg.
 
First consider the case when $\size{P}\leq \tau$.
 if there exists two distinct vertices $x$ and $y$ in 
 $K$ such that $\ell_e(\{x,y\})=\mbox{bad}$. Let $Q=\{\{x,y\}:x \in X~\mbox{and}~y \in Y\}$. From the definition of $\ell_e$, we can say $u_{k+1}$ is a bad node if 
 $Q\cap (I_{AB}\cup I_{A'B'}\cup I_{AA'}\cup I_{BB'})$ is nonempty.
 } 
 Hence, when $|P|\leq \tau$, we can deduce the following by Observations~\ref{obs:cond} and~\ref{obs:cond1}:
 
\begin{align*}
& \pr_{G \sim \nodist}(u_{k+1}~\mbox{is bad}~|~u_0,\ldots,u_k~\mbox{are good}) \\
&\leq \pr_{G \sim \nodist}(V(P) \cap C' \neq \emptyset~|~u_0,\ldots,u_k~\mbox{are good}) \\
&\leq  \pr(V(P)\cap C' \neq \emptyset~|~u_0,\ldots,u_k~\mbox{are good})  \\
&\leq \size{V(P)} \times \frac{1}{4} \frac{32t}{m^{3/2}} \\  
&\leq {2\tau} \cdot \frac{8t}{m^{3/2}}=\Oh \left(\frac{t\log ^2 n}{m^{3/2}}\right)\quad\quad\quad\quad(\because |P|\leq \tau ~\mbox{and}~\tau = 25 \log^2 n.)
\end{align*}

Now consider the case when $\size{P}>\tau$. In this case, recall the behavior of \staree oracle. Let $P'\subseteq P$ be generated uniformly at random such that $\size{P'}=\tau$.  In this case, if there exists vertex pair in $P'$ that forms an edge in $G$, then $u_{k+1}$ is a bad node if and only if $V(P') \cap C' \neq \emptyset$. Hence, we can deduce the following by Observation~\ref{obs:cond} and~\ref{obs:cond1}:

\begin{align*}
&\pr_{G \sim \nodist}(u_{k+1}~\mbox{is bad}~|~u_0,\ldots,u_k~\mbox{are good}) \\
&\leq \pr_{G \sim \nodist}(V(P) \cap C' \neq \emptyset~|~u_0,\ldots,u_k~\mbox{are good}) + \pr \left(\{x,y\} \notin E(G)~ \forall\{x,y\} \in P'\right)
\end{align*}

  The first term can be bounded by $o \left( \frac{t \log ^2 n}{m^{3/2}}\right)$ in the similar fashion as we did when $\size{P}\leq \tau$. To finish the proof we need to show that 
  \begin{equation}\label{eqn:smallo}
    \pr_{G \sim \nodist}(\{x,y\} \notin E(G)~ \forall\{x,y\} \in V(P')~|~u_0,\ldots,u_k~\mbox{are good}) =o(1).
  \end{equation}

  Consider a particular $\{x,y\} \in P'$. By Remark~\ref{rem:assumption} (i), the labels of at most one vertex out of $x$ and $y$ is known till now. By Observation~\ref{obs:prob-edge}, the probability that $\{x,y\}$ is an edge is at least $1/4$. Note that $\size{P'}=\tau=25 \log ^2 n$. So, there are either $5 \log n$ pairwise disjoint vertex pairs in $P'$ or $5 \log n$ vertex pairs having a common vertex in $P'$~\cite{DBLP:series/txtcs/Jukna11}~\footnote{\cite{DBLP:series/txtcs/Jukna11} says that either there is a matching of $\sqrt{m}$ edges or a star of size $\sqrt{m}$ in any graph with $m$ edges.  Note that, we are using analogous result in terms of vertex pairs.}. If there are $5 \log n$ pairwise disjoint vertex pairs in $P'$, then each such vertex pair $\{x,y\}$  forms an edge in $G$ with probability at least $1/4$, independent of other such vertex pairs. Now consider the case when there are $5 \log n$ vertex pairs having a common vertex in $P'$. Let $o$ be the center vertex and $\{o,z_1\},\ldots\{o,z_{5 \log n}\}$ be the vertex pairs. By Remark~\ref{rem:assumption} (i) and (ii),  we can assume that either we do not know the label of any vertex in $\{o, z_1,\ldots,z_{5 \log n}\}$, or we know the label of only $o$, or we know the label of exactly one $z_i$.  In any case, by Observation~\ref{obs:prob-edge}, the probability that $\{o,z_i\}$ is an edge is at least $1/4$, independent of all other $\{o,z_j\}$'s.  
  
  Putting everything together, the probability that none of the pairs in $P'$ form an edge holds with a probability of at most $(3/4)^{5 \log n}=o(1)$. So, we are done with the proof of Equation~\ref{eqn:smallo}, and hence Claim~\ref{clm:inter1}. 
 \end{proof}

  \begin{cl}[Claim~\ref{clm:inter2} restated]
    For any good node in the decision tree $\cT$, the following holds:
   $$ \pr_{G \sim \nodist}(\mbox{\staralg reaches } v) \leq \pr_{G \sim \yesdist}(\mbox{\staralg reaches } v).  $$
   \end{cl}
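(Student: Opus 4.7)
}
My plan is to prove the stronger statement via a coupling: there is a joint distribution on pairs $(G, G')$ with $G \sim \yesdist$ and $G' \sim \nodist$ such that whenever \staralg on $G'$ reaches a good node $v$, \staralg on $G$ also reaches $v$. Taking marginals of this event inclusion immediately yields the inequality $\pr_{\nodist}(\text{\staralg reaches } v) \le \pr_{\yesdist}(\text{\staralg reaches } v)$.

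The coupling has three layers. First, following Remark~\ref{rem:yes-no}, I would sample $G \sim \yesdist$ (with its vertex labeling into $A,B,C,D$ and its edges), then sample $C' \subseteq C$ by placing each vertex of $C$ into $C'$ independently with probability $32t/m^{3/2}$, and finally add every edge $\{x,y\}$ with $x\in A\cup B$, $y\in C'$ to form $G'$. By Remark~\ref{rem:yes-no} this yields $G'\sim \nodist$ with the intended marginal. Second, I would couple the oracle's internal randomness: whenever a query $P$ with $\size{P}>\tau$ is made, the random subset $P'\subseteq P$ used by \staree is generated from the same coin flips for $G$ and $G'$.

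Next I would show by induction on the depth along the path to $v$ that, at every query $P_i$ on the root-to-$v$ path, the oracle's response is identical on $G$ and on $G'$. Because $v$ is good, the data structure at $v$ contains no vertex with label $C'$, so for every query $P_i$ on the path we have $V(P_i) \cap C' = \emptyset$ (in the $\size{P_i}>\tau$ branch, I use $V(P_i')\cap C'=\emptyset$ for the sampled $P_i'$). Every ``added'' edge of $G'\setminus G$ has an endpoint in $C'$, so no added edge lies in $P_i$ (respectively $P_i'$); hence $P_i\cap E(G)=P_i\cap E(G')$ and the labels of the vertices revealed by the query agree in $G$ and $G'$. In the $\size{P_i}>\tau$ branch I additionally observe that, since the oracle on $G'$ did not emit {\sc Failure} (otherwise the current node would either coincide with $v$ already or diverge), the sampled $P_i'$ contains an edge of $G'$, and by the previous sentence that same edge lies in $G$, so the oracle on $G$ does not fail either and responds with the same data-structure update.

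The main obstacle I anticipate is the $\size{P_i}>\tau$ case, where absent the coupling of oracle randomness one could get different sampled subsets $P_i'$ on $G$ and $G'$, and even with matched randomness one has to rule out the scenario where $G'$ sees an edge in $P_i'$ only because of an added $C'$-incident edge while $G$ sees none and fails. The ``goodness'' of $v$ precisely forbids this, which is what makes the coupling argument go through; the remaining bookkeeping (labels match, $e$-values match) is then routine.
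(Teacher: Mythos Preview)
Your coupling approach is correct and is essentially the paper's argument made explicit: the paper also invokes Remark~\ref{rem:yes-no} to embed $\yesdist$ into $\nodist$ and then argues, via the set containment $\mathcal{H}_{v,\text{no}}\subseteq \mathcal{H}_v'$, that every $\nodist$-instance reaching the good node $v$ arises from a $\yesdist$-instance that also reaches $v$. Your write-up has the added virtue of explicitly coupling the oracle's internal randomness (the random $P'$ when $|P|>\tau$), a point the paper's proof leaves implicit.

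One small place to tighten: your parenthetical justification for ruling out {\sc Failure} on $G'$ (``otherwise the current node would either coincide with $v$ already or diverge'') is not the right reason. A {\sc Failure} on $G'$ along the good path to $v$ is not a priori impossible; the correct argument is that if {\sc Failure} occurs, the oracle reveals all of $G'$, so goodness of the next node forces $C'=\emptyset$, whence $G=G'$ under your coupling and both runs trivially coincide from that point on. With that patch, your induction goes through cleanly. (You have already correctly handled the more interesting direction, namely that {\sc Failure} cannot occur on $G$ while $G'$ sees an edge, since that edge would have to be $C'$-incident and hence produce a bad successor.)
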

\begin{proof}
By Remark~\ref{rem:yes-no}, we can generate a graph $G \sim \yesdist$ first, and from that, we can generate $G'\sim \nodist$. Let ${\bf G'}$ be the set of all graphs in $\nodist$ that can be generated from $G \sim \yesdist$. We refer the set ${\bf G'}$ as the \emph{corresponding} set of graphs of $G \sim \yesdist$ in $\nodist$. Recall that the unknown graph $G \sim \yesdist \cup \nodist$ such that both $G \sim \yesdist$ and $G \sim \nodist$ hold with probability $1/2$ each. So, we can consider generating the unknown graph $G$ as follows: 

First generate $G \sim \yesdist$. The unknown graph is $G$ itself with probability $1/2$, and a graph in the correspondence set ${\bf G'}$  with probability $1/2$, generated as described in Remark~\ref{rem:yes-no}. This implies the following observation.

\begin{obs}\label{obs:yes-no}
{For any $H \sim \yesdist$ and its corresponding set of graphs ${\bf H'}$ in $\nodist$}, $$\pr(G=H~|~G \sim \yesdist) =\pr (G \in {\bf H'}~|~G \sim \nodist).$$
\end{obs}

In this claim, we are considering a good node $v$. Now, first consider the term 
$$\pr_{G \sim \yesdist}(\mbox{\staralg ($G$) reaches $v$}).$$
Let ${\cal H}_{v,\mbox{yes}}$ be the set of graphs in $\yesdist$ such that the algorithm reaches node $v$ if the unknown graph $G \in {\cal H}_{v,\mbox{yes}}$. So,
\begin{align*}
&\pr_{G \sim \yesdist}(\mbox{\staralg ($G$) reaches $v$}) \\
&=\sum\limits_{H \in {\cal H}_{v,\mbox{yes}}} \pr(G=H~|~G \sim \yesdist)
= \sum\limits_{H \in {\cal H}_{v,\mbox{yes}}}  \pr(G \in {\bf H'}~|~G \sim \nodist).
\end{align*}
The last equality follows from Observation~\ref{obs:yes-no}. Taking ${\cal H}_{v}'=\bigcup\limits_{H \in {\cal H}_{v,\mbox{yes}}}{\bf H'}$, 
\begin{equation}\label{eqn:rhs}
\pr_{G \sim \yesdist}(\mbox{\staralg ($G$) reaches $v$}) 
= \sum\limits_{H' \in {\cal H}_v'} \pr(G = H'~|~G \sim \nodist).
\end{equation}

Now, we consider the term 
$$\pr_{G \sim \nodist}(\mbox{\staralg ($G$) reaches $v$}).$$
Let ${\cal H}_{v,\mbox{no}}$ be the set of graphs in $\nodist$ such that the algorithm reaches node $v$ if the unknown graph $G \in {\cal H}_{v,\mbox{no}}$. Let $(E_Q,V(E_Q),e, \ell_v)$ be the data structure when the algorithm reaches node $v$. By the definition of a good node (see Definition~\ref{defi:badnode}), the algorithm reaches  node $v$ if $G \in {\cal H}_{v,\mbox{no}}$ and the corresponding $C'\subseteq C$ does not intersect with $V(E_Q)$. So, 
\begin{equation*}
\pr_{G \sim \nodist}(\mbox{\staralg ($G$) reaches $v$}) 
\leq \sum\limits_{H' \in {\cal H}_{v,\mbox{no}}} \pr(G=H'~|~G \sim \nodist).
\end{equation*} 
Now, observe that the algorithm does not reach node $v$ (in the decision tree) if the unknown graph $G \notin {\cal H}_v \cup {\cal H}_v'$. That is, ${\cal H}_{v,\mbox{no}} \subseteq {\cal H}_{v}'$. So, 
\begin{equation}\label{eqn:lhs}
\pr_{G \sim \nodist}(\mbox{\staralg ($G$) reaches $v$}) 
\leq \sum\limits_{H' \in {\cal H}_v'} \pr(G = H'~|~G \sim \nodist).
\end{equation} 
 By Equations~\ref{eqn:rhs} and~\ref{eqn:lhs}, we are done with the proof of Claim~\ref{clm:inter2}.
\end{proof}
\newpage
\section{Some probability results} \label{sec:prelim}

\begin{lem}[Chernoff-Hoeffding bound~\cite{DubhashiP09}]
\label{lem:cher_bound1}
Let $X_1, \ldots, X_n$ be independent random variables such that $X_i \in [0,1]$. For $X=\sum\limits_{i=1}^n X_i$ and $\mu=\E[X]$, the followings hold for any $0\leq \delta \leq 1$.
$$ \pr(\size{X-\mu} \geq \delta\mu) \leq 2\exp{\left(-\frac{\mu \delta^2}{3}\right)}$$

\end{lem}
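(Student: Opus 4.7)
The strategy is the standard moment-generating-function (MGF) argument behind every multiplicative Chernoff bound: establish the upper tail $\pr(X\geq(1+\delta)\mu)\leq e^{-\mu\delta^2/3}$ and the lower tail $\pr(X\leq(1-\delta)\mu)\leq e^{-\mu\delta^2/3}$ separately, and then combine them by a union bound to recover the two-sided statement with its factor of $2$.

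For the upper tail, I fix a parameter $s>0$ and apply Markov's inequality to the nonnegative random variable $e^{sX}$:
\[
\pr\bigl(X \geq (1+\delta)\mu\bigr) \;=\; \pr\bigl(e^{sX}\geq e^{s(1+\delta)\mu}\bigr) \;\leq\; \frac{\E[e^{sX}]}{e^{s(1+\delta)\mu}}.
\]
By independence, $\E[e^{sX}]=\prod_{i=1}^n \E[e^{sX_i}]$. The key per-coordinate bound uses the hypothesis $X_i\in[0,1]$: by convexity of $x\mapsto e^{sx}$ on $[0,1]$, $e^{sX_i}\leq (1-X_i)+X_ie^s=1+X_i(e^s-1)$, so taking expectations and then $1+z\leq e^z$ gives $\E[e^{sX_i}]\leq e^{\E[X_i](e^s-1)}$. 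Multiplying across $i$ yields $\E[e^{sX}]\leq e^{\mu(e^s-1)}$. Optimizing the exponent at $s=\ln(1+\delta)$ produces the classical one-sided inequality
\[
\pr\bigl(X\geq(1+\delta)\mu\bigr) \;\leq\; \left(\frac{e^\delta}{(1+\delta)^{1+\delta}}\right)^{\mu}.
\]
The lower tail is entirely symmetric: apply Markov to $e^{-sX}$ for $s>0$ and optimize at $s=-\ln(1-\delta)$, giving the analogous bound $\bigl(\tfrac{e^{-\delta}}{(1-\delta)^{1-\delta}}\bigr)^\mu$.

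The remaining analytic step, and the only place the specific constant $3$ enters, is to verify that on $0\leq\delta\leq 1$ one has $\delta-(1+\delta)\ln(1+\delta)\leq-\delta^2/3$ together with its lower-tail counterpart $-\delta-(1-\delta)\ln(1-\delta)\leq-\delta^2/3$. Both are elementary calculus exercises (set $f(\delta)$ equal to the left minus right side, check $f(0)=0$, and show $f$ is nonincreasing on $[0,1]$ by differentiating once or twice). I expect this Taylor-style numerical verification to be the main obstacle; the MGF/independence manipulation is entirely routine. A union bound over the two one-sided events then yields the claimed two-sided bound with the factor of $2$.
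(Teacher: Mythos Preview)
Your proof is correct and follows the standard moment-generating-function argument for the multiplicative Chernoff bound. However, the paper does not actually prove this lemma: it is stated in the appendix as a known result with a citation to Dubhashi and Panconesi, so there is no ``paper's own proof'' to compare against. Your write-up would serve perfectly well as a self-contained proof if one were desired.
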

\begin{lem}[Chernoff-Hoeffding bound~\cite{DubhashiP09}]
\label{lem:cher_bound2}
Let $X_1, \ldots, X_n$ be independent random variables such that $X_i \in [0,1]$. For $X=\sum\limits_{i=1}^n X_i$ and $\mu_l \leq \E[X] \leq \mu_h$, the followings hold for any $\delta >0$.
\begin{description}
\item[(i)] $\pr \left( X \geq \mu_h + \delta \right) \leq \exp{\left(-\frac{2\delta^2}{n}\right)}$.
\item[(ii)] $\pr \left( X \leq \mu_l - \delta \right) \leq \exp{\left(-\frac{2\delta^2} { n}\right)}$.
\end{description}

\end{lem}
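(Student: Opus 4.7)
The plan is to prove both tail bounds via the standard Chernoff--Cram\'er exponential moment method, reducing the tail estimate to a moment generating function bound on each $X_i$ and then optimizing over the exponential parameter. Since the two statements differ only in sign, I will focus on (i) and note that (ii) follows by applying (i) to the sequence $1-X_1,\ldots,1-X_n$ (which also lies in $[0,1]$), since $\sum_i (1-X_i) = n - X$ has mean in $[n-\mu_h, n-\mu_l]$, and the event $X \leq \mu_l - \delta$ becomes an upper tail event on this new sum.

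For (i), fix a parameter $\lambda > 0$ to be chosen later, and apply Markov's inequality to the nonnegative random variable $e^{\lambda X}$:
\[
\pr\!\left(X \geq \mu_h + \delta\right) \;=\; \pr\!\left(e^{\lambda X} \geq e^{\lambda(\mu_h+\delta)}\right) \;\leq\; e^{-\lambda(\mu_h+\delta)}\,\E\!\left[e^{\lambda X}\right].
\]
Independence gives $\E[e^{\lambda X}] = \prod_{i=1}^n \E[e^{\lambda X_i}]$, so the problem reduces to controlling each factor. The central ingredient is Hoeffding's lemma: any random variable $Y$ supported in an interval of length $1$ with mean $\nu$ satisfies $\E[e^{\lambda Y}] \leq \exp(\lambda \nu + \lambda^2/8)$. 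Applied to each $X_i \in [0,1]$ and multiplied, this yields $\E[e^{\lambda X}] \leq \exp(\lambda\,\E[X] + n\lambda^2/8) \leq \exp(\lambda \mu_h + n\lambda^2/8)$, where the last step uses $\E[X] \leq \mu_h$ together with $\lambda > 0$. Substituting into the Markov estimate gives $\pr(X \geq \mu_h+\delta) \leq \exp(-\lambda \delta + n\lambda^2/8)$; minimizing the right-hand side by setting $\lambda = 4\delta/n$ produces the claimed exponent $-2\delta^2/n$.

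The main obstacle, if Hoeffding's lemma is not taken as a black box, is establishing that lemma itself. The classical route is to use convexity of $y \mapsto e^{\lambda y}$ on $[a,b]$ to write $e^{\lambda y} \leq \frac{b-y}{b-a}e^{\lambda a} + \frac{y-a}{b-a}e^{\lambda b}$ for every $y \in [a,b]$, take expectations of a centered version $Y \in [a,b]$ with $\E[Y]=0$, and rewrite the resulting expression as $e^{\phi(h)}$ with $h = \lambda(b-a)$ and $\phi(h) = -ph + \log(1-p+pe^h)$ for $p = -a/(b-a) \in [0,1]$. One checks $\phi(0) = \phi'(0) = 0$ and $\phi''(h) = p(1-p)e^h/(1-p+pe^h)^2 \leq 1/4$ uniformly in $h$, so Taylor's theorem gives $\phi(h) \leq h^2/8$, which specializes to $\lambda^2/8$ when $b-a = 1$. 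The uniform bound $\phi'' \leq 1/4$ is the one place where care is required; beyond it, the rest of the argument is a routine optimization, and the two-sided statement follows by the symmetry observation above.
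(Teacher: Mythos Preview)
Your argument is correct and is the standard textbook derivation of the Hoeffding tail bound via the Chernoff--Cram\'er method together with Hoeffding's lemma; the symmetry reduction of (ii) to (i) is also fine. Note, however, that the paper does not supply its own proof of this lemma at all: it is stated in the appendix as a known result with a citation to~\cite{DubhashiP09}, so there is no in-paper argument to compare against.
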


\begin{lem}[Chernoff bound for bounded dependency, see Theorem~2.1 from ~\cite{DBLP:journals/rsa/Janson04}]
\label{lem:depend:high_exact_statement}
Let $X_1,\ldots,X_n$ be random variables such that $a_i \leq X_i \leq b_i$ and $X=\sum\limits_{i=1}^n X_i$. Let $\cD$ be the \emph{dependent} graph, where $V(\cD) = \left\{X_{1},\ldots,X_{n} \right\}$ and $ E(\cD) = \left\{(X_i,X_j): \mbox{$X_i$ and $X_j$ are dependent}\right\}$. Then for any $\delta >0$, 
$$ \pr(\size{X-\E[X]} \geq \delta) \leq  2\exp{\left(-\frac{2\delta^2} { \chi^*(\cD)\sum\limits_{i=1}^{n}(b_i-a_i)^2}\right)},$$
where $\chi^*(\cD)$ denotes the \emph{fractional chromatic number} of $\cD$.
\end{lem}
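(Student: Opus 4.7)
The plan is to combine the standard moment-generating function (MGF) technique for sums of independent bounded variables with a fractional chromatic decomposition of the dependency graph $\cD$.

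First I would invoke the LP characterization of $\chi^*(\cD)$ to extract independent sets $I_1,\ldots,I_K$ of $\cD$ and non-negative weights $w_1,\ldots,w_K$ with $\sum_k w_k = \chi^*(\cD)$ and $\sum_{k\,:\,i\in I_k} w_k = 1$ for every $i$ (by LP duality / complementary slackness, an optimal fractional cover can be chosen so that the covering constraint at each vertex is tight, possibly after splitting an independent set into sub-copies). For each $k$ set $S_k = \sum_{i \in I_k} X_i$; since $I_k$ is independent in $\cD$, the summands of $S_k$ are mutually independent. Using $\sum_{k\,:\,i\in I_k} w_k = 1$, factor $e^{sX_i} = \prod_{k\,:\,i\in I_k} e^{s w_k X_i}$ and regroup the resulting product over $i$ to obtain the key identity $e^{sX} = \prod_k e^{s w_k S_k}$.

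Next I would apply H\"older's inequality with exponents $p_k = \chi^*(\cD)/w_k$ (which satisfy $\sum_k 1/p_k = 1$) to get
\[
\E\!\left[e^{sX}\right] \;\leq\; \prod_k \E\!\left[e^{s\,\chi^*(\cD)\, S_k}\right]^{w_k / \chi^*(\cD)}.
\]
Hoeffding's lemma applied to the independent sum $S_k$ gives
\[
\E\!\left[e^{s\,\chi^*(\cD)(S_k - \E[S_k])}\right] \;\leq\; \exp\!\left(\frac{s^2 \chi^*(\cD)^2 D_k}{8}\right),
\]
where $D_k = \sum_{i \in I_k} (b_i - a_i)^2$. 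Taking the weighted product and using both $\sum_k w_k D_k = \sum_i (b_i-a_i)^2$ and $\sum_k w_k \E[S_k] = \E[X]$ (both consequences of $\sum_{k\,:\,i\in I_k} w_k = 1$) collapses the estimate to
\[
\E\!\left[e^{s(X - \E[X])}\right] \;\leq\; \exp\!\left(\frac{s^2 \chi^*(\cD)\sum_{i=1}^n (b_i-a_i)^2}{8}\right).
\]

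Finally I would apply Markov's inequality to $e^{s(X - \E[X])}$, optimize over $s>0$ by picking $s = 4\delta / \bigl(\chi^*(\cD)\sum_i(b_i-a_i)^2\bigr)$, which yields the upper-tail bound $\exp\bigl(-2\delta^2 / (\chi^*(\cD)\sum_i (b_i-a_i)^2)\bigr)$, and then repeat the identical argument with $-X_1,\ldots,-X_n$ for the lower tail; a union bound produces the factor of $2$ in the stated inequality. The most delicate point of the plan is the fractional-coloring step: recovering $\chi^*(\cD)$ (rather than the cruder $\chi^*(\cD)^2$) in the final exponent hinges on the tightness $\sum_{k\,:\,i\in I_k} w_k = 1$, which is what makes the H\"older exponents $w_k/\chi^*(\cD)$ sum to $1$ and simultaneously lets the weighted variances $\sum_k w_k D_k$ telescope to the full $\sum_i (b_i-a_i)^2$.
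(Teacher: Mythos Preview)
The paper does not supply a proof of this lemma: it is simply quoted from Janson's paper (Theorem~2.1 of \cite{DBLP:journals/rsa/Janson04}) in the appendix of auxiliary probability facts, so there is no ``paper's own proof'' to compare against.

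Your argument is correct and is essentially Janson's original proof: decompose via an optimal fractional cover by independent sets, apply H\"older with exponents $p_k=\chi^*(\cD)/w_k$, use Hoeffding's lemma on each independent block, and optimize the Chernoff parameter. The step you flag as delicate---arranging $\sum_{k:i\in I_k}w_k=1$ exactly---is justified by the trimming you describe (splitting $(I_k,w_k)$ into $(I_k,w_k-\epsilon)$ and $(I_k\setminus\{i\},\epsilon)$ preserves total weight, preserves feasibility since subsets of independent sets are independent, and strictly reduces the excess at $i$ without touching other vertices), so an exact fractional cover with the same total weight $\chi^*(\cD)$ always exists.

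One caveat worth noting, though it concerns the \emph{statement} rather than your proof: as written in the paper, $\cD$ is defined via pairwise dependence, but your application of Hoeffding's lemma to $S_k$ needs that an independent set in $\cD$ corresponds to a \emph{mutually} independent family $\{X_i:i\in I_k\}$. This is the standard definition of a dependency graph (and is what Janson uses); pairwise independence alone would not suffice. Your proof is sound under the intended definition.
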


The following lemma is a special case of the above when $X_i$'s are indicator random variable.
\begin{lem}[See Corollary 2.6 from ~\cite{DBLP:journals/rsa/Janson04}]
\label{lem:depend:high_prob}
Let $X_1,\ldots,X_N$ be indicator random variables such that $\pr(X_i=1)=p$ for each $i$ there are at most $d$ $X_j$'s on which an $X_i$ depends, where $0 < p < 1$. For $X=\sum\limits_{i=1}^N X_i$ and $\mu= \E[X]$, the followings hold for any $\delta >0$.
\begin{description}
\item[(i)] $\pr(X \geq \mu + \delta) \leq \exp{\left(-\frac{\delta^2\left(1-\frac{d+1}{4N}\right) }{ 2(d+1)\left(Np+\frac{\delta}{3}\right)}\right)}$,
\item[(ii)]  $\pr(X \leq \mu - \delta) \leq \exp{\left(-\frac{\delta^2\left(1-\frac{d+1}{4N}\right) }{ 2(d+1)Np}\right)}$,
\end{description} 
\end{lem}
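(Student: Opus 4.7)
\textbf{Proof proposal for Lemma~\ref{lem:depend:high_prob}.}
The plan is to obtain Lemma~\ref{lem:depend:high_prob} from Lemma~\ref{lem:depend:high_exact_statement} combined with (a) a bound on the fractional chromatic number of the dependency graph in terms of its maximum degree and (b) a Bernstein-style sharpening that exploits the indicator structure of the $X_i$. Since every $X_i$ depends on at most $d$ other $X_j$'s, the dependency graph $\cD$ has maximum degree at most $d$, so greedy coloring gives $\chi(\cD) \leq d+1$ and hence $\chi^*(\cD) \leq d+1$. Substituting this together with $\sum_i (b_i - a_i)^2 = N$ into Lemma~\ref{lem:depend:high_exact_statement} already yields a Hoeffding-type tail of shape $\exp(-2\delta^2/((d+1)N))$, which is of the right shape for part~(ii) but too weak for part~(i) in the small-$p$ regime.

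For the sharper Bernstein-type upper tail in part~(i), I would re-run Janson's proof with a Bernoulli-aware MGF estimate in place of the Hoeffding lemma. Using the fractional coloring of $\cD$, write $X = \sum_k w_k S_k$ where $S_k = \sum_{i \in I_k} X_i$ is a sum over an independent set $I_k$ of $\cD$ (so within each $S_k$ the summands are mutually independent Bernoulli$(p)$), with $w_k \geq 0$, $\sum_k w_k = \chi^*(\cD) \leq d+1$, and $\sum_{k : i \in I_k} w_k = 1$ for every $i$. H\"older's inequality at the exponents $\chi^*(\cD)/w_k$ combines the MGFs of the $S_k$'s:
\begin{equation*}
\E\bigl[e^{\lambda X}\bigr] \leq \prod_k \bigl(\E[e^{\lambda \chi^*(\cD) S_k}]\bigr)^{w_k/\chi^*(\cD)} \leq \exp\!\left(\frac{N p\,(e^{\lambda \chi^*(\cD)} - 1)}{\chi^*(\cD)}\right),
\end{equation*}
where the second inequality uses the standard Bernstein MGF bound $\E[e^{\mu S_k}] \leq \exp(|I_k|\, p\, (e^\mu - 1))$ for independent Bernoulli summands together with the identity $\sum_k w_k |I_k| = N$. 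A Markov/Chernoff step with the Bernstein choice $\lambda \chi^*(\cD) = \log(1 + \delta/(Np))$, combined with $\chi^*(\cD) \leq d+1$, then yields the exponent $\delta^2 / (2(d+1)(Np + \delta/3))$. The extra factor $\bigl(1 - (d+1)/(4N)\bigr)$ in the numerator arises from a refinement of $e^\lambda - 1 - \lambda$ for $\lambda = O(1/N)$ that uses $p \leq 1$ to bound the higher-order Taylor terms more tightly than the generic Hoeffding estimate.

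The main obstacle will be tracking constants faithfully through the H\"older step and the optimization over $\lambda$, so that the final exponent matches the exact shape stated: reproducing the factor $(1 - (d+1)/(4N))$ requires the Bernoulli-aware (rather than Hoeffding) MGF bound and a careful expansion of the Bernstein optimizer. Part~(ii) follows by the same scheme applied to the lower tail, without the $\delta/3$ correction in the denominator, and is strictly easier to verify than part~(i).
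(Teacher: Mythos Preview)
The paper does not prove this lemma; it is stated in Appendix~\ref{sec:prelim} as a known concentration inequality, cited directly as Corollary~2.6 of Janson~\cite{DBLP:journals/rsa/Janson04}, with only the remark that it specializes Lemma~\ref{lem:depend:high_exact_statement} to indicator variables. So there is no in-paper proof to compare against beyond the citation.

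Your sketch is essentially Janson's own argument: bound $\chi^*(\cD) \leq d+1$ via the maximum degree, decompose $X$ over a fractional coloring, apply H\"older's inequality to reduce to independent blocks, and replace the Hoeffding MGF bound by the Bernoulli one $\E[e^{\mu Y}] \leq \exp\bigl(p(e^\mu - 1)\bigr)$ to obtain a Bernstein-type exponent. This is the right route and, carried out carefully, reproduces the stated bounds. The one place where your write-up is speculative is the provenance of the factor $\bigl(1-(d+1)/(4N)\bigr)$: your attribution to a Taylor refinement of $e^\lambda-1-\lambda$ at $\lambda=O(1/N)$ does not match the actual mechanism (the Bernstein optimizer $\lambda\chi^*=\log(1+\delta/(Np))$ is not small in general), and in Janson's derivation the improvement enters instead through sharper bookkeeping of the independent-set sizes in the H\"older/MGF combination. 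This is, as you already flagged, a constant-tracking detail rather than a conceptual gap.
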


\end{document}